\def\Bscr{{\mathcal B}}
\def\Gscr{{\mathcal G}}
\def\Iscr{{\mathcal I}}
\def\Lscr{{\mathcal L}}
\def\Rscr{{\mathcal R}}
\def\Sscr{{\mathcal S}}
\newcommand{\Momega}{\mathcal{M}_{\omega}}
\newcommand{\FOL   }{FO\lambda}
\newcommand{\FOLDN }{\FOL^{\Delta\N}}
\newcommand{\FOLNb }{\FOL^{\nabla}}
\newcommand{\Linc}{{\rm Linc}^-}
\newcommand{\N}{{\rm I} \! {\rm N}}
\newcommand{\Seq}[2]{#1\longrightarrow #2}
\newcommand{\botL}{\bot{\cal L}}
\newcommand{\bulletL}{\bullet{\cal L}}
\newcommand{\cL}{\hbox{\sl c}{\cal L}}
\newcommand{\circL}{\circ{\cal L}}
\newcommand{\circR}{\circ{\cal R}}
\newcommand{\defL}{\hbox{\sl def}{\cal L}}
\newcommand{\defeq}{\stackrel{\scriptscriptstyle\triangle}{=}}
\newcommand{\defmu}{\stackrel{\mu}{=}}
\newcommand{\defnu}{\stackrel{\nu}{=}}
\newcommand{\eqL}{{\rm eq}{\cal L}}
\newcommand{\eqR}{{\rm eq}{\cal R}}
\newcommand{\eval}[2]{#1\mathop{\Downarrow}#2}
\newcommand{\existsL}{\exists{\cal L}}
\newcommand{\existsR}{\exists{\cal R}}
\newcommand{\forallL}{\forall{\cal L}}
\newcommand{\forallR}{\forall{\cal R}}
\newcommand{\indR}{{\rm I}{\cal R}}
\newcommand{\indL}{{\rm I}{\cal L}}
\newcommand{\indm}[1]{{\rm indm}(#1)}
\newcommand{\coindR}{{\rm CI}{\cal R}}
\newcommand{\coindL}{{\rm CI}{\cal L}}
\newcommand{\init}{\hbox{\sl init}}
\newcommand{\lam}{{\rm lam}}
\newcommand{\landL}{\land{\cal L}}
\newcommand{\landR}{\land{\cal R}}
\newcommand{\level}[1]{{\rm lvl}(#1)}
\newcommand{\lorL}{\lor{\cal L}}
\newcommand{\lorR}{\lor{\cal R}}
\newcommand{\lub}[1]{{\rm lub}(#1)}
\newcommand{\mc}{\hbox{\sl mc}}
\newcommand{\measure}[1]{{\rm ht}(#1)}
\newcommand{\oimpL}{\oimp{\cal L}}
\newcommand{\oimpR}{\oimp{\cal R}}
\newcommand{\oimp}{\supset}
\newcommand{\ra}{\to}
\newcommand{\simm}[2]{\hbox{\sl sim}~#1~#2}
\newcommand{\topR}{\top{\cal R}}
\newcommand{\wL}{\hbox{\sl w}{\cal L}}
\def\RED{{\mathbf{RED}}}
\def\NM{{\mathbf{NM}}}
\def\idrv{{\mathrm{Id}}}
\newcommand{\conc}[2]{#1\!::\!#2}
\newcommand{\append}[3]{{\rm app}~#1~#2~#3}
\newcommand{\coappend}[3]{{\rm coapp}~#1~#2~#3}
\newcommand{\nil}{{\rm nil}}
\long\def\ignore#1{}
\newlength{\infwidthi}
\newlength{\infwidthii}
\title{Induction and Co-induction in Sequent Calculus}
\author{Alwen Tiu\inst{1} \and Alberto Momigliano\inst{2}}
\institute{
  The Australian National University\\
  \email{Alwen.Tiu@rsise.anu.edu.au}\\
  \and
  LFCS, University of Edinburgh \\
  \email{amomigl1@inf.ed.ac.uk}\\
}
\begin{document}
\maketitle

\begin{abstract}
  Proof search has been used to specify a wide range of computation
  systems.  In order to build a framework for reasoning about such
  specifications, we make use of a sequent calculus involving
  induction and co-induction.  These proof principles are based on a
  proof theoretic (rather than set-theoretic) notion of
  \emph{definition}~\cite{PID,eriksson91elp,schroeder-heister93lics,mcdowell00tcs}.
  Definitions are akin to (stratified) logic programs, where the left
  and right rules for defined atoms allow one to view theories as
  ``closed'' or defining fixed points.  The use of definitions makes
  it possible to reason intensionally about syntax, in particular
  enforcing free equality via unification. We add in a consistent way
  rules for pre and post fixed points, thus allowing the user to
  reason inductively and co-inductively about properties of
  computational system making full use of higher-order abstract
  syntax.  Consistency is guaranteed via  cut-elimination,
  where we give the first, to our knowledge, cut-elimination procedure
  in the presence of general inductive and co-inductive definitions.
 
\end{abstract}





\section{Introduction}
\label{sec:intro}
A common approach to specifying computation systems is via deductive
systems. Those are used to specify and reason about various logics, as
well as aspects of programming languages such as operational
semantics, type theories, abstract machines \etc.  Such specifications
can be represented as logical theories in a suitably expressive formal
logic where \emph{proof-search} can then be used to model the
computation.  A logic used as a specification language is known as a
\emph{logical frameworks} \cite{pfenning01handbook}, which comes
equipped with a representation methodology.  The encoding of the
syntax of deductive systems inside formal logic can benefit from the
use of \emph{higher-order abstract syntax} (HOAS) \cite{PfenningE88},
a high-level and declarative treatment of object-level bound variables
and substitution. At the same time, we want to use such a logic in
order to reason over the \emph{meta-theoretical} properties of object
languages, for example type preservation in operational
semantics~\cite{mcdowell02tocl}, soundness and completeness of
compilation~\cite{Momigliano03fos} or congruence of bisimulation in
transition systems~\cite{mcdowell03tcs}. Typically this involves
reasoning by (structural) induction and, when dealing with infinite
behavior, co-induction \cite{Jacobs97}.

The need to support both inductive and co-inductive reasoning and some
form of HOAS requires some careful design decisions, since the two are
prima facie notoriously incompatible. While any meta-language based on
a $\lambda$-calculus can be used to specify and animate HOAS
encodings, meta-reasoning has traditionally involved (co)inductive
specifications both at the level of the syntax and of the judgements
--- which are of course unified at the type-theoretic level. The first
provides crucial freeness properties for datatypes constructors, while
the second offers principle of case analysis and (co)induction. This
is well-known to be problematic, since HOAS specifications lead to
non-monotone (co)inductive operators, which by cardinality and
consistency reasons are not permitted in inductive logical
frameworks. Moreover, even when HOAS is weakened so as to be made
compatible with standard proof assistants~\cite{despeyroux94lpar} such
as HOL or Coq, the latter suffer the fate of allowing the existence of
too many functions and yielding the so called \emph{exotic}
terms. Those are canonical terms in the signature of an HOAS encoding
that do not correspond to any term in the deductive system under
study. This causes a loss of adequacy in HOAS specifications, which is
one of the pillar of formal verification, and it undermines the trust
in formal derivations. On the other hand, logics such as
LF~\cite{harper93jacm} that are weak by design \cite{DeBruijn91lf} in
order to support this style of syntax are not directly endowed with
(co)induction principles.


The contribution of this paper lies in the design of a new logic,
called $\Linc$ (for a logic with $\lambda$-terms, induction and
co-induction),\footnote{The ``minus'' in the terminology refers to the
  lack of the $\nabla$ quantifier \wrt the eponymous logic in Tiu's
  thesis~\cite{tiu04phd}.} which carefully adds principles of
induction and co-induction to a higher-order intuitionistic logic
based on a proof theoretic notion of \emph{definition}, following on
work (among others) Lars Halln{\"{a}}s \cite{PID}, Eriksson
\cite{eriksson91elp}, Schroeder-Heister~\cite{schroeder-heister93lics}
and McDowell and Miller~\cite{mcdowell00tcs}.  Definitions are akin to
logic programs, but allow us to view theories as ``closed'' or defining
fixed points.  This alone allows us to perform case analysis
independently from induction principles.  Our
approach to formalizing induction and co-induction is via the least
and greatest solutions of the fixed point equations specified by the
definitions.  Such least and greatest solutions are guaranteed to
exist by imposing a stratification condition on definitions (which basically
ensures monotonicity).  The proof rules for induction and co-induction
makes use of the notion of \emph{pre-fixed points} and
\emph{post-fixed points} respectively. In the inductive case, this
corresponds to the induction invariant, while in the co-inductive one
to the so-called simulation.

The simply typed language underlying $\Linc$ and the notion of
definition make it possible to reason \emph{intensionally} about
syntax, in particular enforcing \emph{free} equality via unification,
which can be used on first-order terms or higher-order
$\lambda$-terms. In fact, we can support HOAS encodings of constants
without requiring them to be the constructors of a (recursive)
datatype, which could not exist for cardinality reasons.  In
particular we can \emph{prove} the freeness properties of those
constructors, namely injectivity, distinctness and case exhaustion.
Judgements are encoded as definitions accordingly to their informal
semantics, either inductive or co-inductive.  Definitions that are
true in every fixed point will not be given here special consideration.


$\Linc$ can be proved to be a conservative extension of
$\FOLDN$~\cite{mcdowell00tcs} and a generalization with a higher-order
language of Martin-L\"of \cite{martin-lof71sls} first-order theory of
iterated inductive definitions. Moreover, to the best of our
knowledge, it is the first sequent calculus with a syntactical
cut-elimination theorem for co-inductive definitions.  In recent
years, several logical systems have been designed that build on the
core features of $\Linc$. In particular, one interesting, and
orthogonal, extension is the addition of the
$\nabla$-quantifier~\cite{miller05tocl,tiu04phd,Tiu07,gacek08lics},
which allows one to reason about the intentional aspects of
\emph{names and bindings} in object syntax specifications (see, e.g.,
\cite{tiu05entcs,tiu05concur,AbellaSOS}).  The cut elimination proof
presented in this paper can be used as a springboard towards cut
elimination procedures for more expressive (conservative) extensions
of $\Linc$ such as the ones with $\nabla$. Here lies the added value
of the present paper, which extends and revises a conference paper
published in the proceedings of TYPES 2003
\cite{Momigliano03TYPES}. In the conference version, the co-inductive
rule had a technical side condition that is restrictive and
unnatural. The restriction was essentially imposed by the particular
cut elimination proof technique outlined in that paper.  This
restriction has been removed in the present version, and as such the
cut elimination proof itself has consequently been significantly
revised.


The rest of the paper is organized as follows. Section~\ref{sec:linc}
introduces the sequent calculus for the logic $\Linc$.
Section~\ref{sec:ex} shows some examples of using induction and
co-induction to prove properties of list-related predicates and
the lazy $\lambda$-calculus. Section~\ref{sec:drv} studies several
properties of derivations in $\Linc$ that will be used
extensively in the cut-elimination proof (Section~\ref{sec:cut-elim}).
Section~\ref{sec:lrel} surveys the related work and
Section~\ref{sec:conc} concludes this paper.

\section{The Logic $\Linc$}
\label{sec:linc}

\begin{figure}
\paragraph{Core rules:}
$$
\quad \quad \quad
\begin{array}{cc}
\infer[\cL]{\Seq{B,\Gamma}{C}}
        {\Seq{B,B,\Gamma}{C}}
\quad
\infer[\wL]{\Seq{B,\Gamma}{C}}{\Seq{\Gamma}{C}}
&
\infer[\botL]{\Seq{\bot,\Gamma}{B}}{\rule{0pt}{6pt}}
\quad \infer[\topR]{\Seq{\Gamma}{\top}}{}
\\ \\
\infer[\landL]{\Seq{B \land C,\Gamma}{D}}
        {\Seq{B,\Gamma}{D}}
\quad \infer[\landL]{\Seq{B \land C,\Gamma}{D}}
        {\Seq{C,\Gamma}{D}}
& 
\infer[\landR]{\Seq{\Gamma}{B \land C}}
        {\Seq{\Gamma}{B}
        & \Seq{\Gamma}{C}}
\\ \\
\infer[\lorL]{\Seq{B \lor C,\Gamma}{D}}
        {\Seq{B,\Gamma}{D}
        & \Seq{C,\Gamma}{D}}
& 
\infer[\lorR]{\Seq{\Gamma}{B \lor C}}
        {\Seq{\Gamma}{B}}
\quad \infer[\lorR]{\Seq{\Gamma}{B \lor C}}
        {\Seq{\Gamma}{C}}
\\ \\
\infer[\forallL]{\Seq{\forall x.B\,x,\Gamma}{C}}
        {\Seq{B\,t,\Gamma}{C}}
\quad \infer[\forallR]{\Seq{\Gamma}{\forall x.B\,x}}
        {\Seq{\Gamma}{B\,y}}
& 
\infer[\existsL]{\Seq{\exists x.B\,x,\Gamma}{C}}
        {\Seq{B\,y,\Gamma}{C}}
\quad \infer[\existsR]{\Seq{\Gamma}{\exists x.B\,x}}
        {\Seq{\Gamma}{B\,t}}
\\ \\
\infer[\oimpL]{\Seq{B \oimp C,\Gamma}{D}}
        {\Seq{\Gamma}{B}
        & \Seq{C,\Gamma}{D}}
& \infer[\oimpR]{\Seq{\Gamma}{B \oimp C}}
        {\Seq{B,\Gamma}{C}}\\ \\
\multicolumn{2}{c}{        
\infer[init]{\Seq{C}{C}}{} \quad
        \infer[\begin{array}{l}
                 \mc, 
                 \mbox{ where } n > 0 
               \end{array}]
        {\Seq{\Delta_1,\dots,\Delta_n, \Gamma}{C}}
        {\Seq{\Delta_1}{B_1}
        & \cdots &
        \Seq{\Delta_n}{B_n} &
        \Seq{B_1,\dots,B_n, \Gamma}{C}}}        
\end{array}
$$
\dotfill\\[1em]
\paragraph{Equality rules:}
$$
\infer[\eqL]
{\Seq{s = t, \Gamma}{C}}
{
\{\Seq{\Gamma\rho}{C\rho}~\mid~s\rho =_{\beta\eta} t\rho \}
}
\qquad
\infer[\eqR]
{\Seq{\Gamma}{t = t}}
{}
$$
\dotfill\\[1em]
\paragraph{Induction rules:}
$$
\infer[\indL, p\,\vec x \defmu B\,p\,\vec x]
{\Seq{\Gamma, p\,\vec{t}}{C}}
{
\Seq{B\,S \, \vec{y}}{S\,\vec{y}}
& 
\Seq{\Gamma, S\,\vec{t}}{C}
}
\qquad
\infer[\indR, p\,\vec x \defmu B\,p\,\vec x]
{\Seq{\Gamma}{p\,\vec{t}}}
{\Seq{\Gamma}{B\,p\,\vec{t}}}
$$
\dotfill\\[1em]
\paragraph{Co-induction rules:}
$$
\infer[\coindL, p\,\vec x \defnu B\,p\,\vec x]
{\Seq{p\,\vec{t}, \Gamma}{C}}
{
                \Seq{B\,p\,\vec{t}, \Gamma}{C}             
}
\quad
\infer[\coindR, p\,\vec x \defnu B\,p\,\vec x]
{\Seq{\Gamma}{p\,\vec{t}}}
{
\Seq{\Gamma}{S\,\vec{t}}
&
\Seq{S\,\vec{y}}{B\,S\,\vec{y}}
}
$$
\caption{The inference rules of $\Linc$}
\label{fig:linc}
\end{figure}

The logic $\Linc$ shares the core fragment of $\FOLDN$, which is an
intuitionistic version of Church's Simple Theory of Types.  Formulae
in the logic are built from predicate symbols and the usual logical
connectives $\bot$, $\top$, $\land$, $\lor$, $\oimp$, $\forall_\tau$
and $\exists_\tau$.  Following Church, formulae will be given type
$o$.  The quantification type $\tau$ (omitted in the rest of the
paper) can have base or higher types, but those are restricted not to 
contain $o$.  Thus the logic has a first-order proof theory but allows
the encoding of higher-order abstract syntax. 

We assume the usual notion of capture-avoiding
substitutions. Substitutions are ranged over by lower-case Greek
letters, e.g., $\theta$, $\rho$ and $\sigma$. Application of
substitution is written in postfix notation, \eg $t\theta$ denotes
the term resulting from an application of substitution $\theta$ to
$t$. Composition of substitutions, denoted by $\circ$, is defined as
$t (\theta \circ \rho) = (t\theta)\rho$.

The whole logic is presented in the sequent calculus in
Figure~\ref{fig:linc}. A sequent is denoted by $\Seq{\Gamma}{C}$ where
$C$ is a formula and $\Gamma$ is a multiset of formulae.  Notice that
in the presentation of the rule schemes, we make use of HOAS, e.g., in
the application $B\,x$ it is implicit that $B$ has no free occurrence
of $x$.  In particular we work modulo $\alpha$-conversion without
further notice. In the $\forallR$ and $\existsL$ rules, $y$ is an
eigenvariable that is not free in the lower sequent of the
rule. Whenever we write  a sequent, it is assumed implicitly that
the formulae are well-typed and in $\beta\eta$-long normal forms: the
type context, i.e., the types of the constants and the eigenvariables
used in the sequent, is left implicit as well. The $\mc$ rule is a
generalization of the cut rule that simplifies the presentation of the
cut-elimination proof.

We extend the core fragment with a proof theoretic notion of
equality and fixed points. Each of these extensions are discussed
below. 

\subsection{Equality} 

The right introduction rule for equality is the standard one, that is, 
it recognizes that two terms are syntactically equal. 
The left introduction rule is  more interesting.  The
substitution $\rho$ in $\eqL$ is a \emph{unifier} of $s$ and $t$. Note
that we specify the premise of $\eqL$ as a set, with the intention
that every sequent in the set is a premise of the rule.  This set is
of course infinite, since for every unifier of $(s,t)$, we can extend
it to another unifier (e.g., by adding substitution pairs for
variables not in the terms).  However, in many cases, it is sufficient
to consider a particular set of unifiers, which is often called a
\emph{complete set of unifiers (CSU)} \cite{BaaderS01}, from which any
unifier can be obtained by composing a member of the CSU set with a
substitution.  In the case where the terms are first-order terms, or
higher-order terms with the pattern restriction~\cite{Miller91elp}, the
set CSU is a singleton, i.e., there exists a most general unifier
(MGU) for the terms.

In examples and applications, we shall use a more restricted 
version of $\eqL$ using CSU:
$$
\infer[\eqL_{CSU}]
{\Seq{s = t, \Gamma}{C}}
{
\{\Seq{\Gamma\rho}{C\rho}~\mid~s\rho =_{\beta\eta} t\rho, \rho \in CSU(s,t) \}
}
$$
Replacing $\eqL$ with $\eqL_{CSU}$ does not change the class of
provable formulae, as shown in \cite{tiu04phd}.
Note that in applying $\eqL$ and $\eqL_{CSU}$, eigenvariables can be
instantiated as a result. Note also that if the premise set of $\eqL$
and $\eqL_{CSU}$ are empty, then the sequent in the conclusion is
considered proved.

Our treatment of equality implicitly assumes the notion of  \emph{free
  equality} as commonly found in logic programming. More specifically,
 the axioms of free equality \cite{clark78}, that is, injectivity of
function symbols, inequality between distinct function symbols, and
the ``occur-check'' are enforced via unification in the $\eqL$-rule.
For instance, given a base type $nt$ (for natural numbers) and the
constants $z : nt$ (zero) and $s : nt \rightarrow nt$ (successor), we
can derive $\forall x.\  z = (s~x) \oimp \bot$ as follows:
$$
\infer[\forallR]
{\Seq {} {\forall x.\  z = (s~x) \oimp \bot}}
{
 \infer[\oimpR]
 {\Seq{}{z = (s~x) \oimp \bot}}
 {
  \infer[\eqL]
  {\Seq {z = (s~x)} {\bot}}
  {}
 }
}
$$
Since $z$ and $s~x$ are not unifiable, the $\eqL$ rule above has empty
premise, thus concluding the derivation.  We can also prove the
injectivity of the successor function, \ie $\forall x \forall y. (s~x)
= (s~y) \oimp x = y$.

This proof theoretic notion of equality has been considered in several previous
work \eg by by Schroeder-Heister \cite{schroeder-heister93lics},
and McDowell and Miller~\cite{mcdowell00tcs}.

\subsection{Induction and co-induction}

One way of adding induction and co-induction is to introduce fixed
point expressions and their associated introduction rules, \ie using
the $\mu$ and $\nu$ operators of the (first-order)
$\mu$-calculus. This is essentially what we shall follow here, but
with a different notation. Instead of using a ``nameless'' notation
using $\mu$ and $\nu$ to express fixed points, we associate a fixed
point equation with an atomic formula. That is, we associate certain
designated predicates with a \emph{definition}.  This notation is
clearer and more convenient as far as  our examples and
applications are concerned. For the proof system using nameless notation for
inductive and co-inductive predicates, the interested reader is
referred to a recent work by Baelde and Miller \cite{baelde07lpar}.

\begin{definition}
\label{def:def-clause}
An \emph{inductive definition clause} is written $\forall\vec{x}.p \,
\vec{x} \defmu B\,\vec{x}$, where $p$ is a predicate constant and
$\vec x$ is a sequence of variables.
The atomic formula $p \, \vec{x}$ is called the \emph{head} of the clause, and the
formula $B\,\vec{x}$, where $B$ is a closed term, is called the {\em
  body}.  Similarly, a \emph{co-inductive definition clause} is written
$\forall\vec{x}.p \, \vec{x} \defnu B\,\vec{x}$.  The symbols $\defmu$
and $\defnu$ are used simply to indicate a definition clause: they are
not a logical connective.  A {\em definition} is a set of definition
clauses. 
\end{definition}

It is technically convenient to bundle up all the definitional clause
for a given predicate in a single clause, so that a  predicate may
occur only at most once in the heads of the clauses of a
definition, following the same principles of the \emph{iff-completion}
in logic programming \cite{SchroederHeister93elp}. Further, in order
to simplify the presentation of some rules that involve predicate
substitutions, we sometimes denote a definition using an abstraction
over predicates, that is
$$
\forall \vec x.\  p\,\vec x \defmu B\,p\,\vec x
$$
where $B$ is an abstraction with no free occurrence of predicate
symbol $p$ and variables $\vec x$.  Substitution of $p$ in the body of
the clause with a formula $S$ can then be written simply as
$B\,S\,\vec x$.  When writing definition clauses, we often omit the
outermost universal quantifiers, with the assumption that free
variables in a clause are universally quantified (such variables will
often be denoted with capital letters).  We shall write $\forall \vec
x.\ p\,\vec x \defeq B\,p\,\vec x$ to denote a definition clause
generally, i.e., when we are not interested in the details of whether
it is an inductive or a co-inductive definition.

The introduction rules for (co-)inductively defined atoms are given at
the bottom of 
Figure~\ref{fig:linc}.  The abstraction $S$ is an invariant of the
(co-)induction rule, which is of the same type as $p$.  The variables
$\vec{y}$ are new eigenvariables.  For the induction rule $\indL$, $S$
denotes a pre-fixed point of the underlying fixed point
operator. Similarly, for the co-induction rule $\coindL$, $S$ can be
seen as denoting a {post-fixed point} of the same 
operator.  Here, we use a characterization of induction and
co-induction proof rules as, respectively, the least and the greatest
solutions to a fixed point equation. To guarantee soundness of these
rules, we shall restrict the (co)inductive definitions to ones which
are monotone. In this case, the Knaster-Tarski fixed point theorems tell
us that the existence of a pre-fixed point (respectively, post-fixed
point) implies the existence of a least (resp., greatest) fixed point.
Monotonicity is enforced by a syntactic condition on definitions, as
it is used for the logic $\FOLDN$\cite{mcdowell00tcs}: we rule out
definitions with circular calling through implications (negations)
that can lead to inconsistency~\cite{schroeder-heister92nlip}.  The
notion of \emph{level} of a formula allows us to define a proper
stratification on definitions.
\begin{definition}
\label{def:level}
To each predicate $p$ we associate a natural number $\level{p}$, the
level of $p$.  Given a formula $B$, its \emph{level} $\level{B}$ is
defined as follows:
\begin{enumerate}
\item $\level{p \, \vec{t}} = \level{p}$,
\item $\level{\bot} = \level{\top} = 0,$ 
\item $\level{B \land C} = \level{B \lor C} = \max(\level{B},\level{C})$
\item $\level{B \oimp C} = \max(\level{B}+1,\level{C})$
\item $\level{\forall x.\ B\,x} = \level{\exists x.\ B\,x}
       = \level{B\,t}$, for any term $t$. 
\end{enumerate}
The level of a sequent $\Seq{\Gamma}{C}$ is the level of $C$.  
A formula $B$ is said to be \emph{dominated} by a predicate symbol $p$,
if $\level B \leq \level p$ and 
 $\level {B[\lambda \vec x. \top/p]} < \level p$, where $\lambda \vec x.\top$
is of the same type as $p$. 
A definition clause $\forall\vec{x}.\ p\, \vec{x} \defeq B\,\vec{x}$ is
\emph{stratified} if $B\,\vec x$ is dominated by $p$.  
\end{definition}
Note that when $p$ is vacuous in $B$ and $p$ dominates $B$, we
obviously have $\level{B} < \level p$.

From now on, we shall be concerned only with stratified definitions.
An occurrence of a formula $A$ in a formula $C$ is \emph{strictly
  positive} if that particular occurrence of $A$ is not to the left of
any implication in $C$. Stratification then implies that all
occurrences of the head in the body are strictly positive, and that
there is no mutual recursion between different definition clauses.
This restriction to non-mutual recursion is just for the sake of
simplicity in the presentation of the underlying idea of the cut
elimination proof.  This proof (Section~\ref{sec:cut-elim}) can be
extended to handle mutually recursive definitions with some
straightforward, albeit tedious, modifications.  In the first-order
case, the restriction to non-mutual recursion is immaterial, since
one can easily encode mutually recursive predicates as a single
predicate with an extra argument.  For example, consider the following
mutual recursive definitions for even and odd numbers.
$$
\begin{array}{lll}
  even ~ X & \defmu & X = z \lor \exists y. y = (s~X) \land odd ~ y. \\
  odd ~ X & \defmu & \exists y. y = (s~X) \land even ~ y.
\end{array}
$$
We can collapse these two definition clauses into a single one, with a
parameter that takes a constant $e$ (for `even') or $o$ (for `odd'):
$$
\begin{array}{ll}
  evod ~ W ~ X ~ \defmu ~ & 
  [W = e \land (X = z \lor \exists y.\ y = (s~X) \land evod~ o ~ y)]
  \lor\mbox{} \\ 
  & 
  [W = o \land (\exists y.\ y = (s~X) \land evod ~ e ~ y)].
\end{array}
$$
We then define even and odd as follows:
$$
\begin{array}{ll}
even ~ X \defmu &evod ~ e ~ X. 
\\
odd ~ X \defmu & evod ~ o ~ X.
\end{array}$$
This definition can be stratified by assigning levels to the predicate symbols
such that 
$$\level {evod} < \level{even} < \level{odd}.$$


\section{Examples}
\label{sec:ex}
We now give some examples, starting with some that make essential use
of HOAS.

\subsection{Lazy $\lambda$-Calculus}

We consider an untyped version of the pure $\lambda$-calculus with
lazy evaluation, following the usual HOAS style, i.e.,
object-level $\lambda$-operator and application are encoded as
constants ${\rm lam} : (tm \ra tm) \ra tm$ and ${\rm @} : tm \ra tm
\ra tm$, where $tm$ is the syntactic category of object-level
$\lambda$-terms.  The evaluation relation is encoded as the following
inductive definition
$$
\begin{array}{lcl}
\eval{M}{N} & \defmu & [\exists M'.\ (M = \lam\,M') \land (M = N)]
\lor\mbox{} \\ 
                        & &     [\exists M_1 \exists M_2 \exists P.\ 
                                   (M = M_1\, @\, M_2) \land
                                   \eval{M_1}{\lam\,P} \land \eval{(P\,M_2)}{N}
]\end{array}                                
\enspace 
$$
Notice that object-level substitution is realized via
$\beta$-reduction in the meta-logic.

The notion of \emph{applicative simulation} of $\lambda$-expressions
\cite{Ong} can be encoded as the (stratified) co-inductive definition
$$
\simm{R}{S} \defnu 
   \forall T.\ \eval{R}{\lam\, T} \oimp   
      \exists U.\ \eval{S}{\lam\, U} \land 
          \forall P. \simm{(T\,P)}{(U\,P)}.  
          $$
Given this encoding, we can prove the reflexivity property of simulation, i.e., 
$\forall s.\ \simm{s}{s}$. This is proved co-inductively by
using the simulation $\lambda x \lambda y.\ x = y$.  After
applying $\forallR$ and $\coindR$, it remains to prove 
the sequents $\Seq{}{s = s}$, and
$$               \Seq{x = y}{    
                   \forall x_1.\ \eval{x}{\lam\, x_1}
                   \oimp   
                      (\exists x_2.\ \eval{y}{\lam\, x_2}
                           \land \forall x_3. (x_1 \, x_3) = (x_2 \, x_3)
                      )
                }
\enspace 
$$
The first sequent is provable by an application of $\eqR$ rule.
The second sequent is proved as follows.
{\small
$$
        \infer[\forallR]
        {
                \Seq{x = y}{    
                   \forall x_1.\eval{x}{\lam\, x_1}
                   \oimp   
                      (\exists x_2.\eval{y}{\lam\, x_2}
                           \land \forall x_3. (x_1 \, x_3) = (x_2 \, x_3)
                      )
                }
        }
        {
                \infer[\oimpR]
                {
                  \Seq{x = y}{    
                     \eval{x}{\lam\, x_1}
                     \oimp   
                      (\exists x_2.\eval{y}{\lam\, x_2}
                           \land \forall x_3. (x_1 \, x_3) = (x_2 \, x_3)
                      )
                  }
                }
                {
                  \infer[\eqL]
                  {
                    \Seq{x = y, \eval{x}{\lam\, x_1}}{    
                      (\exists x_2.\eval{y}{\lam\, x_2}
                           \land \forall x_3. (x_1 \, x_3) = (x_2 \, x_3)
                      )
                    }
                  }
                  {
                          \infer[\existsR]
                          {
                            \Seq{\eval{z}{\lam\, x_1}}{    
                              (\exists x_2.\eval{z}{\lam\, x_2}
                                   \land \forall x_3. (x_1 \, x_3) = (x_2 \, x_3)
                              )
                            }             
                          }
                          {
                          \infer[\landR]
                          {
                            \Seq{\eval{z}{\lam\, x_1}}{    
                              (\eval{z}{\lam\, x_1}  
                                   \land \forall x_3. (x_1 \, x_3) = (x_1 \, x_3)
                              )
                            }                                             
                          }
                          {
                          \infer[\init]
                          {\Seq{\eval{z}{\lam\,x_1}}{\eval{z}{\lam\,x_1}}}
                          {  }
                          &
                          \infer[\forallR]
                          {\Seq{\eval{z}{\lam\,x_1}}{\forall x_3. (x_1 \, x_3) = (x_1 \, x_3)}}
                          {
                                  \infer[\eqR]
                                  {\Seq{\eval{z}{\lam\,x_1}}{(x_1 \, x_3) = (x_1 \, x_3)}}
                                  {              
                                  }
                          }
                          }
                          }               
                  }             
                }
        }
$$
}


The transitivity property is expressed as 
$
\forall r \forall s \forall t. \simm{r}{s} \land \simm{s}{t} \oimp \simm{r}{t}.
$
Its proof involves co-induction on $\simm{r}{t}$ with the 
simulation
$
\lambda u \lambda v. \exists w. \simm{u}{w} \land \simm{w}{v},
$
followed by case analysis (i.e., $\defL$ and $\eqL$ rules) on
$\simm{r}{s}$ and $\simm{s}{t}$. The rest of the proof is purely logical.

We can also show the existence of divergent terms. 
Divergence is encoded as follows.
$$
\begin{array}{lcl}
{\rm divrg}~T & \defnu &
   [
      \exists T_1 \exists T_2.\ T = (T_1 @\, T_2) \land {\rm divrg\ }
   T_1   ]     \lor\mbox{} \\
    & &   
       [\exists T_1 \exists T_2. \ T = (T_1 @ T_2) \land
       \exists E.\ \eval{T_1}{\lam\, E} \land {\rm divrg\ } (E\, T_2)  
     ]. 
\end{array}
$$
Let $\Omega$ be the term $(\lam\, x.(x\,@\,x))\, @ \, (\lam\,
x.(x\,@\,x))$.  We show that ${\rm divrg}~\Omega$ holds. The proof is
straightforward by co-induction using the simulation $S := \lambda s.\ 
s = \Omega$.  Applying the $\coindR$ produces the sequents
$\Seq{}{\Omega = \Omega}$ and $ \Seq{T = \Omega} { S_1~\lor~S_2} $
where
$$S_1 := \exists T_1 \exists T_2.\  T = (T_1 @\, T_2) \land (S\, T_1), \mbox{ and }$$
$$
S_2 :=        \exists T_1 \exists T_2. \ T = (T_1 @ T_2) \land
       \exists E.\ \eval{T_1}{\lam\, E} \land S\, (E\, T_2).
       $$
       Clearly, only the second disjunct is provable, i.e., by
       instantiating $T_1$ and $T_2$ with the same term $\lam\,
       x.(x\,@\,x)$, and $E$ with the function $\lambda x.(x\, @ \,
       x)$.

\subsection{Lists}

\newcommand\tlist{\hbox{\sl lst}} Lists over some fixed type $\alpha$
are encoded as the type $\tlist$, with the usual constructor $\nil :
\tlist$ for empty list and $::$ of type $ \alpha\ra\tlist\ra\tlist$.
We consider here the append predicate for both the finite and infinite
case.

\paragraph{Finite lists}
The usual append predicate on finite lists can be encoded as the
inductive definition
$$
\begin{array}{lcl}
\append{L_1}{L_2}{L_3} & \defmu & 
  [(L_1 = \nil) \land (L_2 = L_3)] \lor\mbox{}\\
  & & [\exists x\exists L_1'\exists L_3'. \ (L_1 = \conc{x}{L_1'})
  \land (L_3 = \conc{x}{L_3'}) 
\land \append{L_1'}{L_2}{L_3'}].
\end{array}
$$
Associativity of append is stated formally as
$$
\begin{array}{c}
\forall l_1 \forall l_2 \forall l_{12} \forall l_3 \forall l_4. 
 (\append{l_1}{l_2}{l_{12}} \land \append{l_{12}}{l_3}{l_4}) \oimp 
  \forall l_{23}. \append{l_2}{l_3}{l_{23}} \oimp \append{l_1}{l_{23}}{l_4}.
\end{array}
$$
Proving this formula requires us to prove first that the definition of
append is functional, that is,
$$
\forall l_1 \forall l_2 \forall l_3 \forall l_4.
   \append{l_1}{l_2}{l_3} \land \append{l_1}{l_2}{l_4} \oimp l_3 = l_4.
   $$
   This is done by induction on $l_1$, i.e., we apply the $\indL$
   rule on $\append{l_1}{l_2}{l_3}$, after the introduction rules for
   $\forall$ and $\oimp$, of course. The invariant in this case is
$$
S := \lambda r_1 \lambda r_2 \lambda r_3. \forall r. \append{r_1}{r_2}{r} \oimp r = r_3.
$$
It is a simple case analysis to check that this is the right
invariant.  Back to our original problem: after applying the
introduction rules for the logical connectives in the formula, the
problem of associativity is reduced to the following sequent
\begin{equation}
\label{eq:app1}
\Seq{\append{l_1}{l_2}{l_{12}},~\append{l_{12}}{l_3}{l_4},~\append{l_2}{l_{3}}{l_{23}}}
    {\append{l_1}{l_{23}}{l_4}.}
\end{equation}    
We then proceed by induction on the list $l_1$, that is, we apply
the $\indL$ rule to the hypothesis $\append{l_1}{l_2}{l_{12}}$.
The invariant is simply
$$
S  := \lambda l_1 \lambda l_2 \lambda l_{12}. 
     \forall l_3 \forall l_4. \append{l_{12}}{l_3}{l_4} \oimp 
     \forall l_{23}. \append{l_2}{l_3}{l_{23}} \oimp 
      \append{l_1}{l_{23}}{l_4}.
$$
Applying the $\indL$ rule, followed by $\lorL$, to sequent
(\ref{eq:app1}) reduces the sequent to the following sub-goals
\begin{description}
\item[$(i)$] $\Seq{S~l_1\,l_2\,l_{12},~\append{l_{12}}{l_3}{l_4},~
     \append{l_2}{l_{3}}{l_{23}}}{\append{l_1}{l_{23}}{l_4}}$,
\item[$(ii)$] $\Seq{(l_1 = \nil \land l_2 = l_3)}{S~l_1\,l_2\,l_3}$,
\item[$(iii)$] $\Seq{\exists x, l_1', l_3'. l_1 = \conc{x}{l_1'} \land l_3 = \conc{x}{l_3'}
  \land S~l_1'\,l_2\,l_3'}{S~l_1\,l_2\,l_3}$
\end{description}
The proof for the second sequent is straightforward. The first sequent
reduces to
$$
\Seq
{\append{l_{12}}{l_3}{l_4}, \append{l_{12}}{l_3}{l_{23}}}
{\append{\nil}{l_{23}}{l_4}}.
$$
This follows from the functionality of append and $\indR$. The
third sequent follows by case analysis. Of
course, these proofs could have been simplified by using a
\emph{derived} principle of \emph{structural} induction. While this is
easy to do, we have preferred here to use the primitive $\indL$ rule.

\paragraph{Infinite lists}

The append predicate on infinite lists is defined via co-recursion,
that is, we define the behavior of \emph{destructor operations} on
lists (i.e., taking the head and the tail of the list).  In this case
we never construct explicitly the result of appending two lists,
rather the head and the tail of the resulting lists are computed as
needed.  The co-recursive append requires case analysis on all
arguments.
$$
\begin{array}{lcl}
\coappend{L_1}{L_2}{L_3} & \defnu & 
  [(L_1 = \nil) \land (L_2 = \nil) \land (L_3 = \nil)]\lor\mbox{} \\
  & & [(L_1 = \nil) \land \exists x \exists L_2' \exists L_3'.\
      (L_2 = \conc{x}{L_2'}) \land (L3 = \conc{x}{L_3'}) 
  ~\land~ \coappend{ \nil}{L_2'}{L_3'}] \lor\mbox{}\\
  & &  [\exists x \exists L_1' \exists L_3'.\ (L_1 = \conc{x}{L_1'}) \land 
       (L_3 = \conc{x}{L_3'})  ~\land~ \coappend{L_1'}{L_2}{L_3'}].
\end{array}
$$
The corresponding associativity property is stated analogously  to the
inductive one and 
the main statement reduces to proving the sequent
$$
\Seq{\coappend{l_1}{l_2}{l_{12}},~ \coappend{l_{12}}{l_3}{l_4}, ~
     \coappend{l_2}{l_{3}}{l_{23}}}
    {\coappend{l_1}{l_{23}}{l_4}.}
$$
We apply the $\coindR$ rule to $\coappend{l_1}{l_{23}}{l_4}$, using
the simulation
$$
S := \lambda l_1 \lambda l_2 \lambda l_{12}.
   \exists l_{23} \exists l_3 \exists l_4. 
   \coappend{l_{12}}{l_3}{l_{4}} \land~ 
  \coappend{l_{2}}{l_3}{l_{23}} \land~ 
   \coappend{l_1}{l_{23}}{l_4}.
   $$
   Subsequent steps of the proof involve mainly case analysis on
   $\coappend{l_{12}}{l_3}{l_4}$. As in the inductive case, we have to
   prove the sub-cases when $l_{12}$ is $\nil$. However, unlike in the
   former case, case analysis on the arguments of ${\rm coapp}$
   suffices.


\section{Properties of derivations}
\label{sec:drv}

We discuss several properties of derivations in $\Linc$.  Some of them
involve transformations on derivations which will be used extensively
in the cut-elimination proof in Section~\ref{sec:cut-elim}.  Before we
proceed, some remarks on the use of eigenvariables in derivations are
useful. In proof search involving $\forallR$, $\existsL$ $\indL$,
$\coindR$ or $\eqL$, new eigenvariables can be introduced in the
premises of the rules. Let us refer to such variables as internal
eigenvariables, since they occur only in the premise derivations.  We
view the choice of such eigenvariables as arbitrary and therefore we
identify derivations that differ only in the choice of the
eigenvariables introduced by those rules. Another way to look at it is
to consider eigenvariables as proof-level binders. Hence when we work
with a derivation, we actually work with an equivalence class of
derivations modulo renaming of internal eigenvariables.

\subsection{Instantiating derivations}
\label{sec:subst}

The following definition extends substitutions to apply to
derivations.  Since we identify derivations that differ only in the
choice of variables that are not free in the end-sequent, we will
assume that such variables are chosen to be distinct from the
variables in the domain of the substitution and from the free
variables of the range of the substitution.  Thus applying a
substitution to a derivation will only affect the variables free in
the end-sequent.

\begin{definition}
  \label{def:subst}
  If $\Pi$ is a derivation of $\Seq{\Gamma}{C}$ and $\theta$ is a
  substitution, then we define the derivation $\Pi\theta$ of
  $\Seq{\Gamma\theta}{C\theta}$ as follows:
  \begin{enumerate}
  \item Suppose $\Pi$ ends with the $\eqL$ rule
    \begin{displaymath}
      \infer[\eqL]{\Seq {s = t,\Gamma'}{C} }
      {\left\{\raisebox{-1.5ex}
          {\deduce{\Seq{\Gamma'\rho}{C\rho}}
            {\Pi^{\rho}}}
        \right\}_{\rho}}
      \enspace 
    \end{displaymath}
    where $s\rho =_{\beta\eta} t\rho$. Observe that any unifier for
    the pair $(s\theta, t\theta)$ can be transformed to another
    unifier for $(s, t)$, by composing the unifier with $\theta$.
    Thus $\Pi\theta$ is
    \begin{displaymath}
      \infer[\eqL]{\Seq{s\theta = t\theta,\Gamma'\theta}{C\theta}}
      {\left\{\raisebox{-1.5ex}
          {\deduce{\Seq{\Gamma'\theta\rho'}{C\theta\rho'}}
            {\Pi^{\theta\circ\rho'}}}
	\right\}_{\rho'}}
      \enspace ,
    \end{displaymath}
    where $s\theta\rho' =_{\beta\eta} t\theta\rho'$.

  \item If $\Pi$ ends with any other rule and has premise derivations
    $\Pi_1, \ldots, \Pi_n$, then $\Pi\theta$ also ends with the same
    rule and has premise derivations $\Pi_1\theta, \ldots,
    \Pi_n\theta$.
  \end{enumerate}
\end{definition}

Among the premises of the inference rules of $\Linc$, certain premises
share the same right-hand side formula with the sequent in the
conclusion.  We refer to such premises as major premises. This notion
of major premise will be useful in proving cut-elimination, as certain
proof transformations involve only major premises.
\begin{definition}
  \label{def:major-premise}
  Given an inference rule $R$ with one or more premise sequents, we
  define its major premise sequents as follows.
  \begin{enumerate}
  \item If $R$ is either $\oimpL, \mc$ or $\indL$, then its rightmost
    premise is the major premise
  \item If $R$ is $\coindR$ then its left premise is the major
    premise.
  \item Otherwise, all the premises of $R$ are major premises.
  \end{enumerate}
  A \emph{minor premise} of a rule $R$ is a premise of $R$ which is not
  a major premise.  The definition extends to derivations by replacing
  premise sequents with premise derivations.
\end{definition}

The following two measures on derivations will be useful later in
proving many properties of the logic. Given a set of measures $\Sscr$,
we denote with $\lub \Sscr$ the least upper bound of $\Sscr$.

\begin{definition}
  \label{def:mu}
  Given a derivation $\Pi$ with premise derivations $\{\Pi_i\}_i$, the
  measure $\measure{\Pi}$ is 
  $\lub {\{\measure{\Pi_i}\}_i} + 1$.
\end{definition}
\begin{definition}
  \label{def:num-ind}
  Given a derivation $\Pi$ with premise derivations $\{\Pi_i\}_i$, the
  measure $\indm{\Pi}$ is defined as follows
$$
\indm{\Pi} = \left\{ {\begin{array}{l}
      \lub{\{\indm{\Pi_i}\}_i} + 1, \mbox{ if $\Pi$ ends with $\indL$, }\\
      \lub{\{\indm{\Pi_i}\}_i}, \mbox{ otherwise. }
    \end{array}}
\right.
$$
\end{definition}

Note that given the possible infinite branching of $\eqL$ rule, these
measures in general can be ordinals.  Therefore in proofs involving
induction on those measures, transfinite induction is needed. However,
in most of the inductive proofs to follow, we often do case analysis
on the last rule of a derivation. In such a situation, the inductive
cases for both successor ordinals and limit ordinals are basically
covered by the case analysis on the inference figures involved, and we
shall not make explicit use of transfinite induction.

\begin{lemma}
  \label{lm:subst}
  For any substitution $\theta$ and derivation $\Pi$ of
  $\Seq{\Gamma}{C}$, $\Pi\theta$ is a derivation of
  $\Seq{\Gamma\theta}{C\theta}$.
\end{lemma}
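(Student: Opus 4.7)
The plan is to proceed by induction on the measure $\measure{\Pi}$, doing case analysis on the last inference rule of $\Pi$. In most cases the definition of $\Pi\theta$ given in Definition~\ref{def:subst} already constructs the candidate derivation; what remains is to check that each application of an inference rule remains valid after substitution, and that the result indeed has the end-sequent $\Seq{\Gamma\theta}{C\theta}$.

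For the axiom-like rules ($\init$, $\topR$, $\botL$, $\eqR$) the conclusion is immediate. For the propositional rules and the weak/contraction rules, $\Pi\theta$ just reapplies the same rule to $\Pi_i\theta$, and the induction hypothesis provides the required premise derivations. The rules that introduce eigenvariables ($\forallR$, $\existsL$, $\indL$ on its invariant, $\coindR$ on its invariant) require only that the freshness side-condition is preserved; this follows from the standing convention, explicitly stated just before Definition~\ref{def:subst}, that we may choose internal eigenvariables distinct from $\dom\theta$ and from the free variables of the range of $\theta$, so no capture occurs when $\theta$ is pushed inside. For $\indL$ and $\coindR$, note that the body $B$ arising from the defining clause $p\,\vec x \defeq B\,p\,\vec x$ is a closed term, hence $B\theta = B$; the invariant $S$ transforms to $S\theta$ and the rule schema remains applicable because $(S\,\vec t)\theta = (S\theta)(\vec t\theta)$ and the side premise about fresh $\vec y$ is preserved by the eigenvariable convention.

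The main obstacle is the $\eqL$ case, where the premise family is indexed over all unifiers of a pair $(s,t)$. The crucial observation, already anticipated in Definition~\ref{def:subst}, is the following bijective correspondence: if $\rho'$ is a unifier of $(s\theta, t\theta)$, then $\theta \circ \rho'$ is a unifier of $(s,t)$, since
\[
s(\theta\circ\rho') = (s\theta)\rho' =_{\beta\eta} (t\theta)\rho' = t(\theta\circ\rho').
\]
Hence the original derivation $\Pi$ supplies a premise derivation $\Pi^{\theta\circ\rho'}$ of $\Seq{\Gamma(\theta\circ\rho')}{C(\theta\circ\rho')}$, which by associativity of substitution composition is exactly the sequent $\Seq{\Gamma\theta\rho'}{C\theta\rho'}$ needed as the $\rho'$-indexed premise of the $\eqL$ rule in $\Pi\theta$. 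In particular, when $(s\theta, t\theta)$ has no unifier, the premise set becomes empty and the conclusion sequent is closed by the empty-premise form of $\eqL$.

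A minor technical remark: because the $\eqL$ premise set can be infinite, $\measure{\Pi}$ is in general an ordinal and the induction is transfinite; however, the case analysis on the last inference rule uniformly handles both successor and limit ordinals, so no separate limit case need be argued. The lemma then follows.
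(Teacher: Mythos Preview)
Your proposal is correct and follows the same approach as the paper: induction on $\measure{\Pi}$ with case analysis on the last rule. The paper's own proof is a single sentence deferring entirely to the well-formedness of Definition~\ref{def:subst}, so your version simply spells out what that sentence elides. One minor wording point: the correspondence you describe for $\eqL$ is not bijective (not every unifier of $(s,t)$ factors through $\theta$), but your argument only uses the direction you actually state, so this is harmless.
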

\begin{proof}
  This lemma states that Definition~\ref{def:subst} is
  well-constructed, and follows by induction on $\measure{\Pi}$.  \qed
\end{proof}

\begin{lemma}
  \label{lm:subst-mu}
  For any derivation $\Pi$ and substitution $\theta$, $\measure{\Pi}
  \geq \measure{\Pi\theta}$ and $\indm{\Pi} \geq \indm{\Pi\theta}$.
\end{lemma}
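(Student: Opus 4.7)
I would prove both inequalities simultaneously by transfinite induction on $\measure{\Pi}$, proceeding by case analysis on the last rule of $\Pi$, following the two cases that Definition~\ref{def:subst} distinguishes.

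Suppose first that $\Pi$ does not end with $\eqL$. Then $\Pi\theta$ ends with the same rule as $\Pi$ and has premise derivations $\Pi_1\theta,\dots,\Pi_n\theta$. Each $\Pi_i$ satisfies $\measure{\Pi_i} < \measure{\Pi}$, so by the induction hypothesis $\measure{\Pi_i\theta} \leq \measure{\Pi_i}$ and $\indm{\Pi_i\theta} \leq \indm{\Pi_i}$. Taking least upper bounds over $i$ gives $\measure{\Pi\theta} = \lub\{\measure{\Pi_i\theta}\}_i + 1 \leq \lub\{\measure{\Pi_i}\}_i + 1 = \measure{\Pi}$. For $\indm$, whether or not we add $1$ is determined by whether the last rule is $\indL$; since $\Pi$ and $\Pi\theta$ share that rule, the same clause of Definition~\ref{def:num-ind} applies on both sides, and the inequality $\indm{\Pi\theta} \leq \indm{\Pi}$ follows in exactly the same way.

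The interesting case is when $\Pi$ ends with $\eqL$ with principal formula $s=t$ and premise family $\{\Pi^{\rho}\}_{\rho}$ indexed by the set $U$ of unifiers of $s$ and $t$. Then by Definition~\ref{def:subst}, $\Pi\theta$ ends with $\eqL$ whose premises are $\{\Pi^{\theta\circ\rho'}\}_{\rho'}$ where $\rho'$ ranges over the set $U'$ of unifiers of $s\theta$ and $t\theta$. The key observation is that composition with $\theta$ is a well-defined map $U' \to U$: if $s\theta\rho' =_{\beta\eta} t\theta\rho'$ then $s(\theta\circ\rho') =_{\beta\eta} t(\theta\circ\rho')$, so $\theta\circ\rho' \in U$. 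Hence the premise derivations of $\Pi\theta$ form a subfamily (indexed by $U'$) of the premise derivations of $\Pi$ (indexed by $U$), namely the premises $\Pi^{\theta\circ\rho'}$ of $\Pi$ themselves, with no further substitution applied. Consequently
\[
\measure{\Pi\theta} = \lub\{\measure{\Pi^{\theta\circ\rho'}}\}_{\rho' \in U'} + 1 \leq \lub\{\measure{\Pi^{\rho}}\}_{\rho \in U} + 1 = \measure{\Pi},
\]
and since $\Pi$ does not end with $\indL$, the analogous computation for $\indm$ (without the $+1$) gives $\indm{\Pi\theta} \leq \indm{\Pi}$.

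The only mildly delicate point is the $\eqL$ case, where one must observe that the premises of $\Pi\theta$ are exactly the original premise derivations $\Pi^{\theta\circ\rho'}$ of $\Pi$ (no induction hypothesis is even needed on them), reindexed along a map $U'\to U$ that need not be surjective. This makes the inequality immediate once one sees that $U'$ can be thought of as a subset of $U$ via pre-composition with $\theta$; no further bookkeeping for the possibly transfinite branching is required, since the least upper bound of a subfamily is bounded by the least upper bound of the larger family. \qed
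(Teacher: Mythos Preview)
Your proof is correct and follows the same approach as the paper's own argument: induction on $\measure{\Pi}$, with the key observation in the $\eqL$ case that the premises of $\Pi\theta$ are (reindexed) among the premises of $\Pi$, so some may simply be dropped. The paper gives only a one-line sketch of this; you have fleshed it out accurately.
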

\begin{proof}
  By induction on $\measure{\Pi}$.  The measures may not be equal
  because in the case where the derivation ends with the $\eqL$ rule, some of the
  premise derivations of $\Pi$ may not be needed to construct the
  premise derivations of $\Pi\theta$.  \qed
\end{proof}

\begin{lemma}
  \label{lm:subst-drv-comp}
  For any derivation $\Pi$ and substitutions $\theta$ and $\rho$, the
  derivations $(\Pi\theta)\rho$ and $\Pi(\theta \circ \rho)$ are the
  same derivation.
\end{lemma}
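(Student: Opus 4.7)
The plan is to prove this by induction on $\measure{\Pi}$, mirroring the case analysis in Definition~\ref{def:subst}. Since applying a substitution to a derivation is defined by recursion on the last rule (with the only nontrivial transformation happening at $\eqL$), I expect most cases to follow immediately from the induction hypothesis, and the real content to sit entirely in the $\eqL$ case together with the associativity of substitution composition, i.e.\ $t(\theta \circ \rho) \circ \sigma) = t((\theta \circ \rho) \circ \sigma) = t(\theta \circ (\rho \circ \sigma))$, which is immediate from the definition of $\circ$.

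First I would dispense with all cases where $\Pi$ ends with a rule other than $\eqL$. By the second clause of Definition~\ref{def:subst}, both $\Pi\theta$ and $\Pi(\theta \circ \rho)$ retain the same last rule as $\Pi$, and the substitution is propagated to each premise derivation. For rules carrying a term parameter (such as $\forallL$, $\existsR$) or a predicate invariant (such as $\indL$, $\coindR$), that parameter becomes $t\theta$ or $S\theta$ after substitution, and applying $\rho$ afterwards gives $t\theta\rho = t(\theta \circ \rho)$ by definition of $\circ$. For the eigenvariables introduced by $\forallR$, $\existsL$, $\indL$, $\coindR$ and $\eqL$, we use the standing convention (reiterated just before Definition~\ref{def:subst}) that internal eigenvariables are chosen fresh with respect to the domain and range of the substitutions, so no clashes arise. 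For each premise $\Pi_i$, the induction hypothesis gives $(\Pi_i\theta)\rho = \Pi_i(\theta \circ \rho)$, and reassembling under the common last rule yields the claim.

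The main case is when $\Pi$ ends with $\eqL$ on $s = t$ with premises $\{\Pi^\sigma\}_\sigma$ indexed by unifiers of $(s,t)$. By Definition~\ref{def:subst}, $\Pi\theta$ ends with $\eqL$ on $s\theta = t\theta$, with premise derivations $\{\Pi^{\theta \circ \rho'}\}_{\rho'}$ indexed by unifiers $\rho'$ of $(s\theta, t\theta)$. Applying the same clause once more, $(\Pi\theta)\rho$ ends with $\eqL$ on $s\theta\rho = t\theta\rho$, with premises $\{\Pi^{\theta \circ (\rho \circ \sigma)}\}_\sigma$ indexed by unifiers $\sigma$ of $(s\theta\rho, t\theta\rho)$. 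On the other hand, $\Pi(\theta \circ \rho)$ ends with $\eqL$ on $s(\theta \circ \rho) = t(\theta \circ \rho)$, which is identical to $s\theta\rho = t\theta\rho$, and its premises are $\{\Pi^{(\theta \circ \rho) \circ \sigma}\}_\sigma$ indexed by the same set of unifiers. By associativity of $\circ$, i.e.\ $\theta \circ (\rho \circ \sigma) = (\theta \circ \rho) \circ \sigma$, the two indexed families of premise derivations coincide, so the two resulting derivations are syntactically the same.

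The main obstacle, such as it is, is purely bookkeeping: verifying that the reindexing of unifiers in the $\eqL$ case really does line up, i.e.\ that the set of unifiers $\sigma$ of $(s\theta\rho, t\theta\rho)$ used to index $(\Pi\theta)\rho$ is literally the same set as the one used to index $\Pi(\theta \circ \rho)$, and that the associated premise derivations agree via associativity of $\circ$. Once that identification is in place, there is no genuine difficulty, and the induction goes through. Note that the argument does not need transfinite induction explicitly: although $\measure{\Pi}$ can be an ordinal because of the potentially infinite branching of $\eqL$, the case analysis on the last rule handles both successor and limit stages uniformly.
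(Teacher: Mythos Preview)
Your proposal is correct and follows exactly the approach the paper indicates: induction on $\measure{\Pi}$, with the only nontrivial case being $\eqL$, where associativity of substitution composition makes the two families of premise derivations coincide. The paper's proof is a single line (``By induction on the measure $\measure{\Pi}$''), so your write-up is in fact a faithful expansion of what the authors left implicit.
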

\begin{proof}
  By induction on the measure $\measure{\Pi}$.  \qed
\end{proof}

\subsection{Atomic initial rule}

It is a common property of most logics that the initial rule can be
restricted to atomic form, that is, the rule
$$
\infer[\init] {\Seq{p\,\vec{t}}{p\,\vec{t}}} {}
$$
where $p$ is a predicate symbol. The more general rule is derived as
follows.

\begin{definition}
  \label{def:id-drv}
  We construct a derivation $\idrv_C$ of the sequent $\Seq{C}{C}$
  inductively as follows. The induction is on the size of $C$.  If $C$
  is an atomic formula we simply apply the atomic initial rule.
  Otherwise, we apply the left and right introduction rules for the
  topmost logical constant in $C$, probably with some instances of
  the contraction and the weakening rule.
\end{definition}

The proof of the following lemma is straightforward by induction on
$\measure{\idrv_C}$.

\begin{lemma}
\label{lm:idrv}
  For any formula $C$, it holds that $\indm{\idrv_C} = 0$.
\end{lemma}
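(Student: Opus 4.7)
The plan is to proceed by induction on the construction of $\idrv_C$ as given in Definition~\ref{def:id-drv}, which itself is an induction on the size of $C$. The content of the lemma is essentially that the $\indL$ rule never appears anywhere in the identity derivation, so the $\indm$ measure collapses at every node.

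For the base case, $C$ is atomic and $\idrv_C$ consists of a single application of the atomic $\init$ rule, which has no premise derivations. Since $\init$ is not $\indL$, the second clause of Definition~\ref{def:num-ind} gives $\indm{\idrv_C} = \lub{\emptyset} = 0$.

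For the inductive step, $C$ has a topmost logical connective, and $\idrv_C$ ends with the corresponding left and right introduction rule, possibly interleaved with applications of $\wL$ and $\cL$ to match the formulas on each side of the turnstile. None of the rules used in this construction, namely those among $\landL$, $\landR$, $\lorL$, $\lorR$, $\forallL$, $\forallR$, $\existsL$, $\existsR$, $\oimpL$, $\oimpR$, $\topR$, $\botL$, $\wL$, $\cL$, is the $\indL$ rule. Hence the second clause of Definition~\ref{def:num-ind} applies at every step, giving $\indm{\idrv_C} = \lub{\{\indm{\Pi_i}\}_i}$, where each premise derivation $\Pi_i$ is either of the form $\idrv_{C'}$ for a proper sub-formula $C'$ of $C$, or is built from such by further applications of $\wL$, $\cL$ and connective rules. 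The induction hypothesis yields $\indm{\idrv_{C'}} = 0$, and since none of the intermediate rules is $\indL$, it follows that $\indm{\Pi_i} = 0$ for each $i$, whence $\indm{\idrv_C} = 0$.

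The only point worth checking, and hardly an obstacle, is that Definition~\ref{def:id-drv} treats every atomic formula, including atoms whose predicate is introduced by $\defmu$ or $\defnu$, uniformly via the atomic $\init$ rule rather than by unfolding it through $\indL$ or $\coindL$. Because the definition is explicit on this point, the $\indL$ rule is genuinely never invoked in the construction of $\idrv_C$, which is exactly the content recorded by $\indm{\idrv_C} = 0$.
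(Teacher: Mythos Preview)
Your proof is correct and follows essentially the same approach as the paper, which simply notes that the result is straightforward by induction on $\measure{\idrv_C}$. The only difference is that you induct on the size of $C$ (following the construction in Definition~\ref{def:id-drv}) rather than on the height of the derivation, but both inductions amount to the same observation: the rules used to build $\idrv_C$ never include $\indL$.
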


Restricting the initial rule to atomic form will simplify some
technical definitions to follow. We shall use $\idrv$ instead of
$\idrv_C$ to denote identity derivations since the formula $C$ is
always known from context.


\subsection{Unfolding of derivations}
\label{sec:unfolding}

\begin{definition}
  \label{def:inductive-unfolding} \emph{Inductive unfolding.}
  Let $p\,\vec{x} \defmu B\,p\,\vec{x}$ be an inductive definition.
  Let $\Pi$ be a derivation of $\Seq{\Gamma}{C}$ where $p$ dominates
  $C$. Let $S$ be a closed term of the same type as $p$ and let
  $\Pi_S$ be a derivation of the sequent
$$\Seq{B\,S\,\vec{x}}{S\,\vec{x}}$$
where $\vec x$ are new eigenvariables not free in $\Gamma$ and $C$. We
define the derivation $\mu_C^p(\Pi,\Pi_S)$ of $\Seq{\Gamma}{C[S/p]}$
as follows.

If $p$ is vacuous in $C$, then $\mu_C^p(\Pi,\Pi_S) = \Pi$.  Otherwise,
we define $\mu_C^p(\Pi,\Pi_S)$ according to the last rule of $\Pi$.
\begin{enumerate}
\item Suppose $\Pi$ ends with $\init$
$$
\infer[\init] {\Seq{p\,\vec{t}}{p\,\vec{t}}} {}.
$$
Then $\mu_C^p(\Pi,\Pi_S)$ is the derivation
$$
\infer[\indL] {\Seq{p\,\vec{t}}{S\,\vec{t}}} {
  \deduce{\Seq{B\,S\,\vec{x}}{S\,\vec{x}}}{\Pi_S} & \deduce
  {\Seq{S\,\vec{t}}{S\,\vec{t}}} {\idrv} }
$$

\item Suppose $\Pi$ ends with $\oimpL$
$$
\infer[\oimpL] {\Seq{D_1\oimp D_2, \Gamma'}{C}} {
  \deduce{\Seq{\Gamma'}{D_1}}{\Pi_1} & \deduce{\Seq{D_2,
      \Gamma'}{C}}{\Pi_2} }
$$
Then $\mu_C^p(\Pi,\Pi_S)$ is the derivation
$$
\infer[\oimpL] {\Seq{D_1\oimp D_2, \Gamma'}{C[S/p]}} {
  \deduce{\Seq{\Gamma'}{D_1}}{\Pi_1} & \deduce{\Seq{D_2,
      \Gamma'}{C[S/p]}}{\mu_C^p(\Pi_2,\Pi_S)} }
$$

\item Suppose $\Pi$ ends with $\oimpR$
$$
\infer[\oimpR] {\Seq{\Gamma}{C_1 \oimp C_2}} { \deduce{\Seq{\Gamma,
      C_1}{C_2}}{\Pi'} }
$$
Note that since $p$ dominates $C$, it must be the case that $p$ does
not occur in $C_1$.  The derivation $\mu(\Pi, \Pi_S)$ is then defined
as follows.
$$
\infer[\oimpR] {\Seq{\Gamma}{C_1 \oimp C_2[S/p]}} { \deduce{\Seq
    {\Gamma, C_1 } {C_2[S/p]}} {\mu_{C_2}^p(\Pi',\Pi_S)} }
$$

\item Suppose $\Pi$ ends with $\mc$
$$
\infer[\mc] {\Seq{\Delta_1,\dots,\Delta_m, \Gamma'}{C}} {
  \deduce{\Seq{\Delta_1}{B_1}}{\Pi_1} & \ldots &
  \deduce{\Seq{\Delta_m}{B_m}}{\Pi_m} &
  \deduce{\Seq{B_1,\dots,B_m,\Gamma'}{C}}{\Pi'} }
$$
Then $\mu_C^p(\Pi,\Pi_S)$ is
$$
\infer[\mc] {\Seq{\Delta_1,\dots,\Delta_m, \Gamma'}{C[S/p]}} {
  \deduce{\Seq{\Delta_1}{B_1}}{\Pi_1} & \ldots &
  \deduce{\Seq{\Delta_m}{B_m}}{\Pi_m} &
  \deduce{\Seq{B_1,\dots,B_m,\Gamma'}{C[S/p]}}{\mu_C^p(\Pi',\Pi_S)} }
$$

\item Suppose $\Pi$ ends with $\indL$ on some predicate $q$ given a
  definition clause $q\,\vec{z} \defmu D\,q\,\vec{z}$.
$$
\infer[\indL] {\Seq{q\,\vec{t},\Gamma'}{C}} {
  \deduce{\Seq{D\,I\,\vec{z}}{I\,\vec{z}}}{\Psi} & \deduce
  {\Seq{I\,\vec{t},\Gamma'}{C}} {\Pi'} }
$$
Then $\mu_C^p(\Pi, \Pi_S)$ is the derivation
$$
\infer[\indL] {\Seq{q\,\vec{t},\Gamma'}{C[S/p]}} {
  \deduce{\Seq{D\,I\,\vec{z}}{I\,\vec{z}}}{\Psi} & \deduce
  {\Seq{I\,\vec{t},\Gamma'}{C[S/p]}} {\mu_C^p(\Pi',\Pi_S)} }
$$

\item Suppose $\Pi$ ends with $\indR$
$$
\infer[\indR .]  {\Seq{\Gamma}{p\,\vec{t}}} {
  \deduce{\Seq{\Gamma}{B\,p\,\vec{t}}}{\Pi'} }
$$
Then $\mu_C^p(\Pi,\Pi_S)$ is the derivation
$$
\infer[\mc .]  {\Seq{\Gamma}{S\,\vec{t}}} {
  \deduce{\Seq{\Gamma}{B\,S\,\vec{t}}}{\mu_{B\,p}^p(\Pi',\Pi_S)} &
  \deduce{\Seq{B\,S\,\vec{t}} {S\,\vec{t}}}{\Pi_S[\vec t/ \vec x]} }
$$

\item If $\Pi$ ends with any other rules, and has premise derivations
$$
\left\{ \raisebox{-1.5ex}{\deduce{\Seq{\Gamma_i}{C_i}}{\Pi_i}}
\right\}_{i \in \Iscr}
$$ 
for some index set $\Iscr$, then $\mu_C^p(\Pi,\Pi_S)$ also ends with
the same rule and has premise derivations
$\{\mu_{C_i}^p(\Pi_i,\Pi_S)\}_{i \in \Iscr}$.
\end{enumerate}

\end{definition}

\begin{definition}
  \label{def:coinductive-unfolding}
  \emph{Co-inductive unfolding.}  Let $p\,\vec{x} \defnu
  B\,p\,\vec{x}$ be a co-inductive definition.  Let $S$ be a closed
  term of the same type as $p$ and let $\Pi_S$ be a derivation of
$$
\Seq{S\,\vec{x}}{B\,S\,\vec{x}}.
$$
Let $C$ be a formula dominated by $p$, and let $\Pi$ be a derivation
of $\Seq{\Gamma}{C[S/p]}$.  We define the derivation
$\nu_C^p(\Pi,\Pi_S)$ of $\Seq{\Gamma}{C}$ as follows.

If $p$ is vacuous in $C$, then $\nu_C^p(\Pi,\Pi_S) = \Pi$.  If $C =
p\,\vec{t}$ then $C[S/p] = S\,\vec{t}$ and $\nu_C^p(\Pi,\Pi_S)$ is the
derivation
$$
\infer[\coindR] {\Seq{\Gamma}{p\,\vec{t}}} {
  \deduce{\Seq{\Gamma}{S\,\vec{t}}}{\Pi} &
  \deduce{\Seq{S\,\vec{x}}{B\,S\,\vec{x}}}{\Pi_S} }
$$
Otherwise, we define $\nu_C^p(\Pi,\Pi_S)$ based on the last rule in
$\Pi$.
\begin{enumerate}
\item Suppose $\Pi$ ends with $\oimpL$
$$
\infer[\oimpL] {\Seq{D_1\oimp D_2, \Gamma'} {C[S/p]}} {
  \deduce{\Seq{\Gamma'}{D_1}}{\Pi_1} & \deduce{\Seq{D_2,
      \Gamma'}{C[S/p]}}{\Pi_2} }
$$
Then $\nu_C^p(\Pi,\Pi_S)$ is the derivation
$$
\infer[\oimpL] {\Seq{D_1\oimp D_2, \Gamma'}{C}} {
  \deduce{\Seq{\Gamma'}{D_1}}{\Pi_1} & \deduce{\Seq{D_2,
      \Gamma'}{C}}{\nu_C^p(\Pi_2,\Pi_S)} }
$$

\item Suppose $\Pi$ ends with $\oimpR$
$$
\infer[\oimpR] {\Seq{\Gamma}{(C_1 \oimp C_2)[S/p]}} {
  \deduce{\Seq{\Gamma, C_1}{C_2[S/p]}}{\Pi'} }
$$
Note that since $p$ dominates $C$, it must be the case that $p$ is
vacuous in $C_1$.  Therefore we construct the derivation $\nu_C^p(\Pi,
\Pi_S)$ as follows.
$$
\infer[\oimpR] {\Seq{\Gamma}{C_1 \oimp C_2}} { \deduce{\Seq{\Gamma,
      C_1}{C_2}}{\nu_{C_2}^p(\Pi',\Pi_S)} }
$$

\item Suppose $\Pi$ ends with $\mc$
$$
\infer[\mc] {\Seq{\Delta_1,\dots,\Delta_m, \Gamma'}{C[S/p]}} {
  \deduce{\Seq{\Delta_1}{B_1}}{\Pi_1} & \ldots &
  \deduce{\Seq{\Delta_m}{B_m}}{\Pi_m} &
  \deduce{\Seq{B_1,\dots,B_m,\Gamma'}{C[S/p]}}{\Pi'} }
$$
Then $\nu_C^p(\Pi,\Pi_S)$ is
$$
\infer[\mc] {\Seq{\Delta_1,\dots,\Delta_m, \Gamma'}{C}} {
  \deduce{\Seq{\Delta_1}{B_1}}{\Pi_1} & \ldots &
  \deduce{\Seq{\Delta_m}{B_m}}{\Pi_m} &
  \deduce{\Seq{B_1,\dots,B_m,\Gamma'}{C}}{\nu_C^p(\Pi',\Pi_S)} }
$$

\item Suppose $\Pi$ ends with $\indL$ on a predicate $q\,\vec{t}$,
  given an inductive definition $q\,\vec{z} \defmu D\,q\,\vec{z}$.
$$
\infer[\indL] {\Seq{q\,\vec{t},\Gamma'}{C[S/p]}} {
  \deduce{\Seq{D\,I\,\vec{z}}{I\,\vec{z}}}{\Psi} & \deduce
  {\Seq{I\,\vec{t},\Gamma'}{C[S/p]}} {\Pi'} }
$$
Then $\nu_C^p(\Pi, \Pi_S)$ is the derivation
$$
\infer[\indL] {\Seq{q\,\vec{t},\Gamma'}{C}} {
  \deduce{\Seq{D\,I\,\vec{z}}{I\,\vec{z}}}{\Psi} & \deduce
  {\Seq{I\,\vec{t},\Gamma'}{C}} {\nu_C^p(\Pi',\Pi_S)} }
$$

\item If $\Pi$ ends with any other rules, and has premise derivations
$$
\left\{ \raisebox{-1.5ex}{\deduce{\Seq{\Gamma_i}{C_i[S/p]}}{\Pi_i}}
\right \}_{i \in \Iscr}
$$ 
for some index set $\Iscr$, then $\nu_C^p(\Pi,\Pi_S)$ also ends with
the same rule and has premise derivations $\{\nu_C^p(\Pi_i,\Pi_S)\}_{i
  \in \Iscr}$.
\end{enumerate}
\end{definition}

The following two lemmas state that substitutions commute with
unfolding of derivations. Their proofs follow straightforwardly from
the fact that the definitions of (co-)inductive unfolding depend only
on the logical structures of conclusions of sequents, hence is
orthogonal to substitutions of eigenvariables. In these lemmas, we
assume that the formulas $C$, $p$ and derivations $\Pi$ and $\Pi_S$
satisfy the conditions of Definition~\ref{def:inductive-unfolding} and
Definition~\ref{def:coinductive-unfolding}.

\begin{lemma}
  \label{lm:ind unfold subst}
  The derivations $\mu_C^p(\Pi,\Pi_S)\theta$ and
  $\mu_C^p(\Pi\theta,\Pi_S)$ are the same derivation.
\end{lemma}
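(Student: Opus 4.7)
The plan is to proceed by transfinite induction on $\measure{\Pi}$, with case analysis on the last rule of $\Pi$, mirroring the clauses of Definition~\ref{def:inductive-unfolding}. In each case, both operations $\mu_C^p(-,\Pi_S)$ and $(-)\theta$ are driven by the last inference figure, so the key observation is that the two commute locally: whichever rule $\Pi$ ends with, the rule at the root of $\mu_C^p(\Pi,\Pi_S)\theta$ matches the rule at the root of $\mu_C^p(\Pi\theta,\Pi_S)$, and the premise derivations match by induction hypothesis.

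For the routine cases (every rule other than $\init$ applied to $p$, $\indR$, and $\eqL$), both operations simply recurse on the premise derivations while preserving the last rule; the conclusion follows by applying the induction hypothesis to each $\Pi_i$ at smaller measure (using Lemma~\ref{lm:subst-mu}). The $\oimpR$ case additionally uses the side condition that $p$ does not occur in $C_1$, so that $C_1\theta$ is unchanged by the unfolding operation. The $\init$ case on $p\,\vec t$ produces a derivation built from $\Pi_S$ and $\idrv$; since $S$ is closed and the eigenvariables $\vec x$ of $\Pi_S$ are chosen fresh for $\theta$, applying $\theta$ affects only $\vec t$, so both sides collapse to the same $\indL$ instance on $p\,(\vec t\theta)$. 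The $\indR$ case is the most delicate at first sight, because the unfolding introduces $\Pi_S[\vec t/\vec x]$; but, again using freshness of $\vec x$, Lemma~\ref{lm:subst-drv-comp} gives $(\Pi_S[\vec t/\vec x])\theta = \Pi_S[\vec t\theta/\vec x]$, which is exactly the derivation appearing in $\mu_C^p(\Pi\theta,\Pi_S)$.

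The main obstacle is the $\eqL$ case, where $\Pi\theta$ is defined (Definition~\ref{def:subst}) not by propagating $\theta$ down each premise but by reindexing the family of premises via composition with $\theta$. Here one must verify that unfolding, which recurses uniformly over the premise family $\{\Pi^\rho\}_\rho$, commutes with this reindexing: that is, $\mu_{C}^p(\Pi^{\theta\circ\rho'},\Pi_S) = \mu_C^p(\Pi,\Pi_S)^{\theta\circ\rho'}|_{\rho'}$ in the appropriate sense. This reduces to applying the induction hypothesis to each $\Pi^{\theta\circ\rho'}$ and invoking Lemma~\ref{lm:subst-drv-comp} to identify $(\Pi^{\theta\circ\rho'})$ with $(\Pi^\theta)^{\rho'}$. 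Once this bookkeeping is carried out, all cases assemble into the required equality of derivations.
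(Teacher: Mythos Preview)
Your approach is correct and matches the paper's (one-line) argument, which simply notes that inductive unfolding depends only on the logical structure of the conclusion and is therefore orthogonal to substitution of eigenvariables; the paper does not spell out the case analysis you give.

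One small correction: the $\eqL$ case is not the obstacle you make it out to be. Unfolding $\Pi$ first gives an $\eqL$ instance whose $\rho$-indexed premise is $\mu_{C\rho}^p(\Pi^\rho,\Pi_S)$; applying $\theta$ then selects, for each unifier $\rho'$ of $(s\theta,t\theta)$, the premise indexed by $\theta\circ\rho'$, namely $\mu_{C(\theta\circ\rho')}^p(\Pi^{\theta\circ\rho'},\Pi_S)$. Conversely, substituting first gives $\Pi\theta$ with $\rho'$-indexed premise $\Pi^{\theta\circ\rho'}$, and unfolding yields $\mu_{(C\theta)\rho'}^p(\Pi^{\theta\circ\rho'},\Pi_S)$, which is literally the same derivation. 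No appeal to the induction hypothesis or to Lemma~\ref{lm:subst-drv-comp} is needed here; that lemma concerns composition of substitutions applied to a derivation, whereas the superscript in $\Pi^\rho$ is premise indexing, not substitution. The genuinely inductive cases are the ones where $\Pi\theta$ propagates $\theta$ to the premises (i.e., every rule other than $\eqL$), and your treatment of those, including the use of Lemma~\ref{lm:subst-drv-comp} in the $\indR$ case, is correct.
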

\begin{lemma}
  \label{lm:coind unfold subst}
  The derivations $\nu_C^p(\Pi,\Pi_S)\theta$ and
  $\nu_C^p(\Pi\theta,\Pi_S)$ are the same derivation.
\end{lemma}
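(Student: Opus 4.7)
The plan is to prove this by transfinite induction on $\measure{\Pi}$, mirroring the structure of Definition~\ref{def:coinductive-unfolding}. The key observation is that the case split in the definition of $\nu_C^p$ is based on whether $p$ is vacuous in $C$, whether $C$ is of the form $p\,\vec t$, or otherwise on the last rule of $\Pi$; all three categories are preserved under substitution $\theta$, since $\theta$ substitutes only at eigenvariables and cannot change the top-level logical constant or predicate head of $C$ (and $S$ is closed, so $C[S/p]$ retains the same structure).

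First I would handle the two base cases. If $p$ is vacuous in $C$, then $p$ is also vacuous in $C\theta$, so both sides reduce to $\Pi\theta$. If $C = p\,\vec t$, then both $\nu_C^p(\Pi,\Pi_S)\theta$ and $\nu_{C\theta}^p(\Pi\theta,\Pi_S)$ end with $\coindR$ whose left premise is $\Pi\theta$ and whose right premise is $\Pi_S$; here we use that $\Pi_S$ has only the fresh eigenvariables $\vec x$ and the closed $S$ free, so by the convention of Section~\ref{sec:subst} it is unaffected by $\theta$.

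For the inductive step, I would walk through the cases of the last rule of $\Pi$ listed in Definition~\ref{def:coinductive-unfolding}. In each of the cases $\oimpL$, $\oimpR$, $\mc$, $\indL$, the rule on top is preserved by both unfolding and substitution, and the premises match up after applying the induction hypothesis to the appropriate sub-derivations (together with Lemma~\ref{lm:subst-drv-comp} to commute successive substitutions where necessary, and with the observation that minor premises, whose right-hand side is unaffected by the unfolding, remain syntactically the same). The subtle case is clause~5 when $\Pi$ ends with $\eqL$: here $\nu_C^p(\Pi,\Pi_S)$ is the $\eqL$ derivation whose $\rho$-indexed premise is $\nu_{C\rho}^p(\Pi^\rho,\Pi_S)$, and applying $\theta$ according to Definition~\ref{def:subst} produces an $\eqL$ derivation whose $\rho'$-indexed premise (for $\rho'$ a unifier of $s\theta,t\theta$) is $\nu_{C(\theta\circ\rho')}^p(\Pi^{\theta\circ\rho'},\Pi_S)$; on the other side, $\nu_{C\theta}^p(\Pi\theta,\Pi_S)$ ends with $\eqL$ whose $\rho'$-indexed premise is $\nu_{C\theta\rho'}^p(\Pi^{\theta\circ\rho'},\Pi_S)$, and the two coincide since $C\theta\rho' = C(\theta\circ\rho')$.

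I expect the main obstacle, as already hinted, to be the $\eqL$ case, where bookkeeping of the indexing of the (possibly infinite) premise family has to match up between the two ways of composing $\theta$ with unfolding. Once this is handled and we invoke Lemma~\ref{lm:subst-mu} to ensure the induction on $\measure{\Pi}$ is well-founded when recursing into premises, the remaining cases are straightforward structural manipulations that parallel the proof of Lemma~\ref{lm:ind unfold subst}.
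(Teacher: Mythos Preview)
Your proposal is correct and is essentially a detailed elaboration of the paper's own argument, which the paper dismisses in a single sentence as following ``straightforwardly from the fact that the definitions of (co-)inductive unfolding depend only on the logical structures of conclusions of sequents, hence is orthogonal to substitutions of eigenvariables.'' Your case analysis on $\measure{\Pi}$, including the careful bookkeeping for $\eqL$, is exactly the content one would fill in to substantiate that sketch; the only minor remark is that Lemma~\ref{lm:subst-mu} is not actually needed for well-foundedness here, since the induction recurses into genuine sub-derivations of $\Pi$ rather than into substituted derivations.
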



\newcommand\restrict[2]{{#1}\!\downarrow_{#2}}

\section{Cut elimination for $\Linc$}
\label{sec:cut-elim}

A central result of our work is cut-elimination, from which
consistency of the logic follows.  Gentzen's classic proof of
cut-elimination for first-order logic uses an induction on the size of
the cut formula, i.e., the number of logical connectives in the
formula. The cut-elimination procedure consists of a set of reduction
rules that reduce a cut of a compound formula to cuts on its
sub-formulae of smaller size.  In the case of $\Linc$, the use of
induction/co-induction complicates the reduction of cuts.  Consider
for example a cut involving the induction rules
$$
\infer[\mc] {\Seq{\Delta, \Gamma}{C} } {\infer[\indR]
  {\Seq{\Delta}{p\,t}} {\deduce{\Seq{\Delta}{B\,p\,t}}{\Pi_1}} &
  \infer[\indL] {\Seq{p\,t, \Gamma}{C}} {
    \deduce{\Seq{B\,S\,y}{S\,y}}{\Pi_B} & \deduce{\Seq{S\,t,
        \Gamma}{C}}{\Pi} } } \enspace
$$
There are at least two problems in reducing this cut. First, any
permutation upwards of the cut will necessarily involve a cut with $S$
that can be of larger size than $p$, and hence a simple induction on
the size of cut formula will not work.  Second, the invariant $S$ does
not appear in the conclusion of the left premise of the cut. The
latter means that we need to transform the left premise so that its
end sequent will agree with the right premise. Any such transformation
will most likely be \emph{global}, and hence simple induction on the
height of derivations will not work either.

We shall use the \emph{reducibility} techniques to prove cut
elimination.  More specifically, we shall build on the notion of
reducibility introduced by Martin-L\"of to prove normalization of an
intuitionistic logic with iterative inductive definition
\cite{martin-lof71sls}.  Martin-L\"of's proof has been adapted to
sequent calculus by McDowell and Miller \cite{mcdowell00tcs}, but in a
restricted setting where only natural number induction is
allowed. Since our logic involves arbitrary stratified inductive
definitions, which also includes iterative inductive definitions, we
shall need a more general cut reductions. But the real difficulty in
our case is really in establishing cut elimination in the presence of
co-inductive definitions, for which there is no known cut elimination
proof for the sequent calculus formulation.

The main part of the reducibility technique is a definition of the
family of reducible sets of derivations.  In Martin-L\"of's theory of
iterative inductive definition, this family of sets is defined
inductively by the level of the derivations they contain.  Extending
this definition of reducibility to $\Linc$ is not obvious.  In
particular, in establishing the reducibility of a derivation $\Xi$
ending with a $\coindR$ rule:
$$
\infer[\coindR, p\,\vec x \defnu B\,p\,\vec x] {\Seq \Gamma {p\,\vec
    t}} {\deduce{\Seq \Gamma {S\,\vec t}}{\Pi} & \deduce{\Seq
    {S\,\vec x}{B\,S\,\vec x}}{\Pi_S} }
$$
one must first establish the reducibility of its premise derivations.
But a naive definition of reducibility for $\Xi$, i.e., a definition
that postulates the reducibility of $\Xi$ from the reducibility of its
premises, is not a monotone definition, since the premise derivations
of $\Xi$ may be derivations that have a higher level than $\Xi$.

To define a proper notion of reducibility for the co-inductive cases,
we use a notion of \emph{parametric reducibility}, similar to that used
in the strong normalisation proof of System F~\cite{girard89book}.
The notion of a parameter in our case is essentially a coinductive
predicate. As with strong normalisation of System F, these parameters
are substituted with some ``reducibility candidates'', which in our
case are certain sets of derivations of a co-inductive invariant which
we call \emph{saturated sets}.  Let us say that a derivation $\Psi$ has
type $B$ if its end sequent is of the form $\Seq \Gamma B$, for some
$\Gamma$.  Roughly, a parametric reducibility set of type $C$, under a
parameter substitution $[S/p]$, where $p$ is a co-inductive predicate
and $S$ is an invariant of the same type as $S$, is a certain set of
derivations of type $C[S/p]$ satisfying some closure conditions which
are very similar to the definition of reducibility sets, but without
the co-inductive part.  The definition of reducibility in the case
involving co-induction rules, e.g., as in the derivation $\Xi$ above,
can then be defined in terms parametric reducibility sets, under
appropriate parameter substitutions. Details of the definition will be
given later in this section.



\subsection{Cut reduction}

We follow the idea of Martin-L\"of in using derivations directly as a
measure by defining a well-founded ordering on them. The basis for the
latter relation is a set of reduction rules (called the contraction
rules in \cite{martin-lof71sls}) that are used to eliminate the
applications of the cut rule.  For the cases involving logical
connectives, the cut-reduction rules used to prove the cut-elimination
for $\Linc$ are the same to those of $\FOLDN$. The crucial differences
are in the reduction rules involving induction and co-induction rules.


\begin{definition}
  \label{def:reduct}
  We define a \emph{reduction} relation between derivations.  The redex
  is always a derivation $\Xi$ ending with the multicut rule
  \begin{displaymath}
    \infer[\mc]{\Seq{\Delta_1,\ldots,\Delta_n,\Gamma}{C}}
    {\deduce{\Seq{\Delta_1}{B_1}}
      {\Pi_1}
      & \cdots
      & \deduce{\Seq{\Delta_n}{B_n}}
      {\Pi_n}
      & \deduce{\Seq{B_1,\ldots,B_n,\Gamma}{C}}
      {\Pi}}
    \enspace 
  \end{displaymath}
  We refer to the formulas $B_1,\dots,B_n$ produced by the $\mc$ as
  \emph{cut formulas}.

  If $n=0$, $\Xi$ reduces to the premise derivation $\Pi$.

  For $n > 0$ we specify the reduction relation based on the last rule
  of the premise derivations.  If the rightmost premise derivation
  $\Pi$ ends with a left rule acting on a cut formula $B_i$, then the
  last rule of $\Pi_i$ and the last rule of $\Pi$ together determine
  the reduction rules that apply.  We classify these rules according
  to the following criteria: we call the rule an \emph{essential} case
  when $\Pi_i$ ends with a right rule; if it ends with a left rule, it
  is a \emph{left-commutative} case; if $\Pi_i$ ends with the $\init$
  rule, then we have an \emph{axiom} case; a \emph{multicut} case arises
  when it ends with the $\mc$ rule.  When $\Pi$ does not end with a
  left rule acting on a cut formula, then its last rule is alone
  sufficient to determine the reduction rules that apply.  If $\Pi$
  ends in a rule acting on a formula other than a cut formula, then we
  call this a \emph{right-commutative} case.  A \emph{structural} case
  results when $\Pi$ ends with a contraction or weakening on a cut
  formula.  If $\Pi$ ends with the $\init$ rule, this is also an axiom
  case; similarly a multicut case arises if $\Pi$ ends in the $\mc$
  rule.

  For simplicity of presentation, we always show $i = 1$.

  \paragraph{Essential cases:}

  \begin{description}
  \item[$\landR/\landL$:] If $\Pi_1$ and $\Pi$ are
    \begin{displaymath}
      \infer[\landR]{\Seq{\Delta_1}{B_1' \land B_1''}}
      {\deduce{\Seq{\Delta_1}{B_1'}}
        {\Pi_1'}
	& \deduce{\Seq{\Delta_1}{B_1''}}
        {\Pi_1''}}
      \qquad\qquad\qquad
      \infer[\landL]{\Seq{B_1' \land B_1'',B_2,\ldots,B_n,\Gamma}{C}}
      {\deduce{\Seq{B_1',B_2,\ldots,B_n,\Gamma}{C}}
        {\Pi'}}
      \enspace ,
    \end{displaymath}
    then $\Xi$ reduces to
    \begin{displaymath}
      \infer[\mc]{\Seq{\Delta_1,\ldots,\Delta_n,\Gamma}{C}}
      {\deduce{\Seq{\Delta_1}{B_1'}}
        {\Pi_1'}
	& \deduce{\Seq{\Delta_2}{B_2}}
        {\Pi_2}
	& \cdots
	& \deduce{\Seq{\Delta_n}{B_n}}
        {\Pi_n}
	& \deduce{\Seq{B_1',B_2,\ldots,B_n,\Gamma}{C}}
        {\Pi'}}
      \enspace 
    \end{displaymath}
    The case for the other $\landL$ rule is symmetric.

  \item[$\lorR/\lorL$:] If $\Pi_1$ and $\Pi$ are
    \begin{displaymath}
      \infer[\lorR]{\Seq{\Delta_1}{B_1' \lor B_1''}}
      {\deduce{\Seq{\Delta_1}{B_1'}}
        {\Pi_1'}}
      \qquad\qquad\!\!\!
      \infer[\lorL]{\Seq{B_1' \lor B_1'',B_2,\ldots,B_n,\Gamma}{C}}
      {\deduce{\Seq{B_1',B_2,\ldots,B_n,\Gamma}{C}}
        {\Pi'}
	& \deduce{\Seq{B_1'',B_2,\ldots,B_n,\Gamma}{C}}
        {\Pi''}}
      \enspace ,
    \end{displaymath}
    then $\Xi$ reduces to
    \begin{displaymath}
      \infer[\mc]{\Seq{\Delta_1,\ldots,\Delta_n,\Gamma}{C}}
      {\deduce{\Seq{\Delta_1}{B_1'}}
        {\Pi_1'}
	& \deduce{\Seq{\Delta_2}{B_2}}
        {\Pi_2}
	& \cdots
	& \deduce{\Seq{\Delta_n}{B_n}}
        {\Pi_n}
	& \deduce{\Seq{B_1',B_2,\ldots,B_n,\Gamma}{C}}
        {\Pi'}}
      \enspace 
    \end{displaymath}
    The case for the other $\lorR$ rule is symmetric.

  \item[$\oimpR/\oimpL$:] Suppose $\Pi_1$ and $\Pi$ are
    \begin{displaymath}
      \infer[\oimpR]{\Seq{\Delta_1}{B_1' \oimp B_1''}}
      {\deduce{\Seq{B_1',\Delta_1}{B_1''}}
        {\Pi_1'}}
      \qquad\qquad\!\!
      \infer[\oimpL]{\Seq{B_1' \oimp B_1'',B_2,\ldots,B_n,\Gamma}{C}}
      {\deduce{\Seq{B_2,\ldots,B_n,\Gamma}{B_1'}}
        {\Pi'}
	& \deduce{\Seq{B_1'',B_2,\ldots,B_n,\Gamma}{C}}
        {\Pi''}}
      \enspace 
    \end{displaymath}
    Let $\Xi_1$ be
    \begin{displaymath}
      \infer[\mc]{\Seq{\Delta_1,\ldots,\Delta_n,\Gamma}{B_1''}}
      {\infer[\mc]{\Seq{\Delta_2,\ldots,\Delta_n,\Gamma}{B_1'}}
        {\left\{\raisebox{-1.5ex}{\deduce{\Seq{\Delta_i}{B_i}}
              {\Pi_i}}\right\}_{i \in \{2..n\}}
          & \raisebox{-2.5ex}{\deduce{\Seq{B_2,\ldots,B_n,\Gamma}{B_1'}}
            {\Pi'}}}
	& \deduce{\Seq{B_1',\Delta_1}{B_1''}}
        {\Pi_1'}}
      \enspace 
    \end{displaymath}
    Then $\Xi$ reduces to \settowidth{\infwidthi}
    {$\Seq{\Delta_1,\ldots,\Delta_n,\Gamma,\Delta_2,\ldots,\Delta_n,\Gamma}{C}$}
    \begin{displaymath}
      \infer=[\cL]
      {\Seq{\Delta_1,\ldots,\Delta_n,\Gamma}{C}}
      {
        \infer[\mc]
        {\Seq{\Delta_1,\ldots,\Delta_n,\Gamma, \Delta_2,\ldots,\Delta_n,\Gamma}{C}}
        {
          \raisebox{-2.5ex}{\deduce{\Seq{\ldots}{B_1''}}{\Xi_1}}
          & 
          \left\{\raisebox{-1.5ex}{\deduce{\Seq{\Delta_i}{B_i}}{\Pi_i}}\right\}_{i \in \{2..n\}}
          & \raisebox{-2.5ex}{\deduce{\Seq{B_1'',\{B_i\}_{i \in \{2..n\}},\Gamma}{C}}{\Pi''}}
        }
      }
      \enspace 
    \end{displaymath}
    We use the double horizontal lines to indicate that the relevant
    inference rule (in this case, $\cL$) may need to be applied zero
    or more times.

  \item[$\forallR/\forallL$:] If $\Pi_1$ and $\Pi$ are
    \begin{displaymath}
      \infer[\forallR]{\Seq{\Delta_1}{\forall x.B_1'}}
      {\deduce{\Seq{\Delta_1}{B_1'[y/x]}}
        {\Pi_1'}}
      \qquad\qquad\qquad
      \infer[\forallL]{\Seq{\forall x.B_1',B_2,\ldots,B_n,\Gamma}{C}}
      {\deduce{\Seq{B_1'[t/x],B_2,\ldots,B_n,\Gamma}{C}}
        {\Pi'}}
      \enspace ,
    \end{displaymath}
    then $\Xi$ reduces to
    \begin{displaymath}
      \infer[\mc]{\Seq{\Delta_1,\ldots,\Delta_n,\Gamma}{C}}
      {\raisebox{-2.5ex}{\deduce{\Seq{\Delta_1}{B_1'[t/x]}}
          {\Pi_1'[t/y]}}
	& \left\{\raisebox{-1.5ex}{\deduce{\Seq{\Delta_i}{B_i}}
            {\Pi_i}}\right\}_{i \in \{2..n\}}
	& \raisebox{-2.5ex}{\deduce{\Seq{\ldots}{C}}
          {\Pi'}}}
      \enspace 
    \end{displaymath}

  \item[$\existsR/\existsL$:] If $\Pi_1$ and $\Pi$ are
    \begin{displaymath}
      \infer[\existsR]{\Seq{\Delta_1}{\exists x.B_1'}}
      {\deduce{\Seq{\Delta_1}{B_1'[t/x]}}
        {\Pi_1'}}
      \qquad\qquad\qquad
      \infer[\existsL]{\Seq{\exists x.B_1',B_2,\ldots,
          B_n,\Gamma}{C}}
      {\deduce{\Seq{B_1'[y/x],B_2,\ldots,B_n,
	    \Gamma}{C}}
        {\Pi'}}
      \enspace ,
    \end{displaymath}
    then $\Xi$ reduces to
    \begin{displaymath}
      \infer[\mc]{\Seq{\Delta_1,\ldots,\Delta_n,\Gamma}{C}}
      {\deduce{\Seq{\Delta_1}{B_1'[t/x]}}
        {\Pi_1'}
        & \ldots
	& \deduce{\Seq{B_1'[t/x], B_2,\dots,\Gamma}{C}}
        {\Pi'[t/y]}
      }
      \enspace 
    \end{displaymath}

  \item[$*/\indL$:] Suppose $\Pi$ is the derivation
$$
\infer[\indL] {\Seq{p\,\vec{t}, B_2,\dots,B_n,\Gamma}{C}} {
  \deduce{\Seq{D\,S\,\vec{x}}{S\,\vec{x}}}{\Pi_S} &
  \deduce{\Seq{S\,\vec{t}, B_2,\dots,B_n, \Gamma} {C}}{\Pi'} }
$$
where $p\,\vec{x} \defmu B\,p\,\vec{x}$.  Then $\Xi$ reduces to
$$
\infer[\mc] {\Seq{\Delta_1,\dots,\Delta_n,\Gamma}{C}} {
  \deduce{\Seq{\Delta_1}{S\,\vec{t}}}{\mu_{p\,\vec t}^p(\Pi_1,\Pi_S)}
  & \ldots & \deduce{\Seq{S\,\vec{t},\dots,B_n,\Gamma}{C}}{\Pi'} }
$$

\item[$\coindR/\coindL$:] Suppose $\Pi_1$ and $\Pi$ are
$$
\infer[\coindR] {\Seq{\Delta_1}{p\,\vec{t}}} {
  \deduce{\Seq{\Delta_1}{S\,\vec{t}}}{\Pi_1'} &
  \deduce{\Seq{S\,\vec{x}}{D\,S\,\vec{x}}}{\Pi_S} } \qquad \qquad
\infer[\coindL] {\Seq{p\,\vec{t}, \dots, \Gamma}{C}}
{\deduce{\Seq{D\,p\,\vec{t},\dots, \Gamma}{C}}{\Pi'}}
$$
Let $\Xi_1$ be the derivation
$$
\infer[\mc] {\Seq{\Delta_1}{D\,S\,\vec{t}}} {
  \deduce{\Seq{\Delta_1}{S\,\vec{t}}}{\Pi_1'} &
  \deduce{\Seq{S\,\vec{t}} {D\,S\,\vec{t}}}{\Pi_S[\vec t/ \vec x]} }
$$
Then $\Xi$ reduces to
$$
\infer[\mc] {\Seq{\Delta_1,\dots,\Delta_n,\Gamma}{C}} {
  \raisebox{-1.5ex}{\deduce{\Seq{\Delta_1}
      {D\,p\,\vec{t}}}{\nu_{D\,p}^p(\Xi_1,\Pi_S)}} & {
    \left\{\raisebox{-1.5ex} {\deduce{\Seq{\Delta_j}{B_j}}{\Pi_j}}
    \right\}_{j \in \{2,\dots,n \}} } & \raisebox{-1.5ex}{
    \deduce{\Seq{D\,p\,\vec{t},\dots,\Gamma }{C}}{\Pi'}} }
$$


\item[$\eqR/\eqL$:] Suppose $\Pi_1$ and $\Pi$ are
  \begin{displaymath}
    \infer[\eqR]{\Seq{\Delta_1}{s = t}}
    {}
    \qquad\qquad\qquad
    \infer[\eqL]{\Seq{s=t,B_2,\ldots,B_n,\Gamma}{C}}
    {\left\{\raisebox{-1.5ex}
        {\deduce{\Seq{B_2\rho,\ldots,B_n\rho,\Gamma\rho}
            {C\rho}}
          {\Pi^\rho}}
      \right\}_\rho}
    \enspace 
  \end{displaymath}
  Then by the definition of $\eqR$ rule, $s$ and $t$ are equal terms
  (modulo $\lambda$-conversion), and hence are unifiable by the empty
  substitution.  Note that in this case $\Pi^\epsilon \in \{\Pi^\rho
  \}_\rho$.  Therefore $\Xi$ reduces to
  \begin{displaymath}
    \infer=[\wL]{\Seq{\Delta_1,\Delta_2,\ldots,\Delta_n,\Gamma}{C}}
    {
      \infer[\mc]
      {\Seq{\Delta_2,\ldots,\Delta_n,\Gamma}{C}}
      {	  
        \left\{\raisebox{-1.5ex}
          {\deduce{\Seq{\Delta_i}{B_i}}
            {\Pi_i}}
        \right\}_{i \in \{2..n\}}
        & \raisebox{-2.5ex}
        {\deduce{\Seq{B_2,\ldots,B_n,\Gamma}{C}}
          {\Pi^{\epsilon}}}    
      }
    }
    \enspace 
  \end{displaymath}

\end{description}

\paragraph{Left-commutative cases:}

In the following cases, we suppose that $\Pi$ ends with a left rule,
other than $\{\cL, \wL,\indL\}$, acting on $B_1$.

\begin{description}

\item[$\bulletL/\circL$:] Suppose $\Pi_1$ is
  \begin{displaymath}
    \infer[\bulletL]{\Seq{\Delta_1}{B_1}}
    {\left\{\raisebox{-1.5ex}{\deduce{\Seq{\Delta_1^i}{B_1}}
          {\Pi_1^i}}\right\}}
    \enspace ,
  \end{displaymath}
  where $\bulletL$ is any left rule except $\oimpL$, $\eqL$, or
  $\indL$.  Then $\Xi$ reduces to \settowidth{\infwidthi}
  {$\left\{\raisebox{-3.5ex}{\infer[\mc]{\Seq{\Delta_1^i,\Delta_2,\ldots,\Delta_n,\Gamma}{C}}
        {\raisebox{-2.5ex}{\deduce{\Seq{\Delta_1^i}{B_1}} {\Pi_1^i}} &
          \left\{\raisebox{-1.5ex}{\deduce{\Seq{\Delta_j}{B_j}}
              {\Pi_j}}\right\}_{j \in \{2..n\}} &
          \raisebox{-2.5ex}{\deduce{\Seq{B_1,\ldots,B_n,\Gamma}{C}}
            {\Pi}}}}\right\}$} \settowidth{\infwidthii}{\mc}
  \begin{displaymath}
    \infer[\bulletL]{\Seq{\Delta_1,\Delta_2,\ldots,\Delta_n,\Gamma}{C}}
    {\deduce{\makebox[\infwidthi]{}}
      {\left\{\raisebox{-3.5ex}{\infer[\mc]{\Seq{\Delta_1^i,\Delta_2,\ldots,\Delta_n,\Gamma}{C}}
            {\raisebox{-2.5ex}{\deduce{\Seq{\Delta_1^i}{B_1}}
                {\Pi_1^i}}
              & \left\{\raisebox{-1.5ex}{\deduce{\Seq{\Delta_j}{B_j}}
                  {\Pi_j}}\right\}_{j \in \{2..n\}}
              & \raisebox{-2.5ex}{\deduce{\Seq{B_1,\ldots,B_n,\Gamma}{C}}
                {\Pi}}}}\right\}\makebox[\infwidthii]{}}}
    \enspace 
  \end{displaymath}

\item[$\oimpL/\circL$:] Suppose $\Pi_1$ is
  \begin{displaymath}
    \infer[\oimpL]{\Seq{D_1' \oimp D_1'',\Delta_1'}{B_1}}
    {\deduce{\Seq{\Delta_1'}{D_1'}}
      {\Pi_1'}
      & \deduce{\Seq{D_1'',\Delta_1'}{B_1}}
      {\Pi_1''}}
    \enspace 
  \end{displaymath}
  Let $\Xi_1$ be
  \begin{displaymath}
    \infer[\mc]{\Seq{D_1'',\Delta_1',\Delta_2,\ldots,\Delta_n,\Gamma}{C}}
    {\deduce{\Seq{D_1'',\Delta_1'}{B_1}}
      {\Pi_1''}
      & \deduce{\Seq{\Delta_2}{B_2}}
      {\Pi_2}
      & \cdots
      & \deduce{\Seq{\Delta_n}{B_n}}
      {\Pi_n}
      & \deduce{\Seq{B_1,\ldots,B_n,\Gamma}{C}}
      {\Pi}}
    \enspace 
  \end{displaymath}
  Then $\Xi$ reduces to
  \begin{displaymath}
    \infer[\oimpL]
    {\Seq{D_1' \oimp D_1'',\Delta_1',\Delta_2,\ldots,\Delta_n,\Gamma}{C}}
    {
      \infer=[\wL]
      {\Seq{\Delta_1',\Delta_2,\ldots,\Delta_n,\Gamma}{D_1'}}
      {\deduce{\Seq{\Delta_1'}{D_1'}}{\Pi_1'} }
      & 
      \deduce{\Seq{D_1'',\Delta_1',\Delta_2,\ldots,\Delta_n,\Gamma}{C}}{\Xi_1}
    }
    \enspace 
  \end{displaymath}

\item[$\indL/\circL$:] Suppose $\Pi_1$ is
$$
\infer[\indL] {\Seq{p\,\vec{t}, \Delta_1'}{B_1}} {
  \deduce{\Seq{D\,S\,\vec{x}}{S\,\vec{x}}}{\Pi_S} &
  \deduce{\Seq{S\,\vec{t}, \Delta_1'}{B_1}}{\Pi_1'} }
$$      
where $p\,\vec{x} \defmu D\,p\,\vec{x}$.  Let $\Xi_1$ be
$$
\infer[\mc] {\Seq{S\,\vec{t}, \Delta_1',
    \Delta_2,\dots,\Delta_n,\Gamma}{C}} {
  \deduce{\Seq{S\,\vec{t},\Delta_1'}{B_1}}{\Pi_1'} & \ldots &
  \deduce{\Seq{\Delta_n}{B_n}}{\Pi_n} &
  \deduce{\Seq{B_1,\dots,B_n,\Gamma}{C}}{\Pi} } \enspace
$$      
Then $\Xi$ reduces to
$$
\infer[\indL] {\Seq{p\,\vec{t}, \Delta_1',\dots,\Delta_n}{C}} {
  \deduce{\Seq{D\,S\,\vec{x}}{S\,\vec{x}}}{\Pi_S} &
  \deduce{\Seq{S\,\vec{t}, \Delta_1',\dots,\Delta_n,\Gamma}{C}}{\Xi_1}
}
$$

\item[$\eqL/\circL$:] Suppose $\Pi_1$ is
  \begin{displaymath}
    \infer[\eqL]{\Seq{s=t,\Delta_1'}{B_1}}
    {\left\{\raisebox{-1.5ex}
        {\deduce{\Seq{\Delta_1'\rho}{B_1\rho}}
          {\Pi_1^{\rho}}}
      \right\}}
    \enspace ,
  \end{displaymath}
  then $\Xi$ reduces to \settowidth{\infwidthi}
  {$\left\{\raisebox{-3.5ex}{\infer[\mc]
        {\Seq{\Delta_1'\rho,\Delta_2\rho,\ldots,
            \Delta_n\rho,\Gamma\rho} {C\rho}}
        {\raisebox{-2.5ex}{\deduce{\Seq{\Delta_1'\rho}{B_1\rho}}
            {\Pi_1^{\rho}}} & \left\{\raisebox{-1.5ex}{
              \deduce{\Seq{\Delta_i\rho}{B_i\rho}}
              {\Pi_i\rho}}\right\}_{i \in \{2..n\}} &
          \raisebox{-2.5ex}{\deduce{\Seq{\ldots}{C\rho}}
            {\Pi\rho}}}}\right\}$} \settowidth{\infwidthii}{\mc}
  \begin{displaymath}
    \infer[\eqL]{\Seq{s=t,\Delta_1',\Delta_2,\ldots,\Delta_n,\Gamma}{C}}
    {\deduce{\makebox[\infwidthi]{}}
      {\left\{\raisebox{-3.5ex}{\infer[\mc]
            {\Seq{\Delta_1'\rho,\Delta_2\rho,\ldots,
                \Delta_n\rho,\Gamma\rho}
              {C\rho}}
            {\raisebox{-2.5ex}{
                \deduce{\Seq{\Delta_1'\rho}{B_1\rho}}
                {\Pi_1^{\rho}}}
              & \left\{\raisebox{-1.5ex}{
                  \deduce{\Seq{\Delta_i\rho}{B_i\rho}}
                  {\Pi_i\rho}}\right\}_{i \in \{2..n\}}
              & \raisebox{-2.5ex}{
                \deduce{\Seq{\ldots}{C\rho}}
                {\Pi\rho}}}}\right\}\makebox[\infwidthii]{}}}
    \enspace 
  \end{displaymath}
\end{description}

\paragraph{Right-commutative cases:}

\begin{description}

\item[$-/\circL$:] Suppose $\Pi$ is
  \begin{displaymath}
    \infer[\circL]{\Seq{B_1,\ldots,B_n,\Gamma}{C}}
    {\left\{\raisebox{-1.5ex}{\deduce{\Seq{B_1,\ldots,B_n,\Gamma^i}{C}}
          {\Pi^i}}\right\}}
    \enspace ,
  \end{displaymath}
  where $\circL$ is any left rule other than $\oimpL$, $\eqL$, or
  $\indL$ acting on a formula other than $B_1, \ldots, B_n$.
  The derivation $\Xi$ reduces to \settowidth{\infwidthi}
  {$\left\{\raisebox{-2.45ex}{
        \infer[\mc]{\Seq{\Delta_1,\ldots,\Delta_n,\Gamma^i}{C}}
        {\deduce{\Seq{\Delta_1}{B_1}} {\Pi_1} & \cdots &
          \deduce{\Seq{\Delta_n}{B_n}} {\Pi_n} &
          \deduce{\Seq{B_1,\ldots,B_n,\Gamma^i}{C}}
          {\Pi^i}}}\right\}$} \settowidth{\infwidthii}{\mc}
  \begin{displaymath}
    \infer[\circL]{\Seq{\Delta_1,\ldots,\Delta_n,\Gamma}{C}}
    {\deduce{\makebox[\infwidthi]{}}
      {\left\{\raisebox{-2.45ex}{
            \infer[\mc]{\Seq{\Delta_1,\ldots,\Delta_n,\Gamma^i}{C}}		 
            {\deduce{\Seq{\Delta_1}{B_1}}
              {\Pi_1}
              & \cdots
              & \deduce{\Seq{\Delta_n}{B_n'}}
              {\Pi_n}
              & \deduce{\Seq{B_1,\ldots,B_n,\Gamma^i}{C}}
              {\Pi^i}}}\right\}\makebox[\infwidthii]{}}}
    \enspace 
  \end{displaymath}

\item[$-/\oimpL$:] Suppose $\Pi$ is
  \begin{displaymath}
    \infer[\oimpL]{\Seq{B_1,\ldots,B_n, D' \oimp D'',\Gamma'}{C}}
    {\deduce{\Seq{B_1,\ldots,B_n,\Gamma'}{D'}}
      {\Pi'}
      & \deduce{\Seq{B_1,\ldots,B_n,D'',\Gamma'}{C}}
      {\Pi''}}
    \enspace 
  \end{displaymath}
  Let $\Xi_1$ be
  \begin{displaymath}
    \infer[\mc]{\Seq{\Delta_1,\ldots,\Delta_n,\Gamma'}{D'}}
    {\deduce{\Seq{\Delta_1}{B_1}}
      {\Pi_1}
      & \cdots
      & \deduce{\Seq{\Delta_n}{B_n}}
      {\Pi_n}
      & \deduce{\Seq{B_1,\ldots,B_n,\Gamma'}{D'}}
      {\Pi'}}
  \end{displaymath}
  and $\Xi_2$ be
  \begin{displaymath}
    \infer[\mc]{\Seq{\Delta_1,\ldots,\Delta_n,D'',\Gamma'}{C}}
    {\deduce{\Seq{\Delta_1}{B_1}}
      {\Pi_1}
      & \cdots
      & \deduce{\Seq{\Delta_n}{B_n}}
      {\Pi_n}
      & \deduce{\Seq{B_1,\ldots,B_n,D'',\Gamma'}{C}}
      {\Pi''}}
    \enspace
  \end{displaymath}
  Then $\Xi$ reduces to
  \begin{displaymath}
    \infer[\oimpL]{\Seq{\Delta_1,\ldots,\Delta_n,D' \oimp D'',\Gamma'}{C}}
    {\deduce{\Seq{\Delta_1,\ldots,\Delta_n,\Gamma'}{D'}}
      {\Xi_1}
      & \deduce{\Seq{\Delta_1,\ldots,\Delta_n,D'',\Gamma'}{C}}
      {\Xi_2}}
    \enspace 
  \end{displaymath}

\item[$-/\indL$:] Suppose $\Pi$ is
$$
\infer[\indL] {\Seq{B_1,\dots,B_n, p\,\vec{t},\Gamma'} {C}} {
  \deduce{\Seq{D\,S\,\vec{x}}{S\,\vec{x}}}{\Pi_S} &
  \deduce{\Seq{B_1,\dots,B_n, S\,\vec{t}, \Gamma'}{C}}{\Pi'} }
\enspace ,
$$      
where $p\,\vec{x} \defmu D\,p\,\vec{x}$.  Let $\Xi_1$ be
\begin{displaymath}
  \infer[\mc]{\Seq{\Delta_1,\ldots,\Delta_n, S\,\vec{t}, \Gamma'}{C}}
  {\deduce{\Seq{\Delta_1}{B_1}}
    {\Pi_1}
    & \cdots
    & \deduce{\Seq{\Delta_n}{B_n}}
    {\Pi_n}
    & \deduce{\Seq{B_1,\ldots,B_n,
        S\,\vec{t},\Gamma'}{C}}
    {\Pi'}}
  \enspace 
\end{displaymath}
Then $\Xi$ reduces to
$$
\infer[\indL] {\Seq{\Delta_1,\dots,\Delta_n, p\,\vec{t},\Gamma'}{C}} {
  \deduce{\Seq{D\,S\,\vec{x}}{S\,\vec{x}}}{\Pi_S} &
  \deduce{\Seq{\Delta_1,\dots,\Delta_n, S\,\vec{t}, \Gamma'}{C}}{\Xi}
} \enspace 
$$

\item[$-/\eqL$:] If $\Pi$ is
  \begin{displaymath}
    \infer[\eqL]{\Seq{B_1,\ldots,B_n,s=t,\Gamma'}{C}}
    {\left\{\raisebox{-1.5ex}
        {\deduce{\Seq{B_1\rho,\ldots, B_n\rho,\Gamma'\rho}{C\rho}}
          {\Pi^{\rho}}}\right\}}
    \enspace ,
  \end{displaymath}
  then $\Xi$ reduces to \settowidth{\infwidthi}
  {$\left\{\raisebox{-3.5ex}{\infer[\mc] {\Seq{\Delta_1\rho,\ldots,
            \Delta_n\rho,D\sigma,\Gamma'\rho} {C\rho}}
        {\left\{\raisebox{-1.5ex}
            {\deduce{\Seq{\Delta_i\rho}{B_i\rho}} {\Pi_i\rho}}
          \right\}_{i \in \{1..n\}} & \raisebox{-2.5ex}
          {\deduce{\Seq{\{B_i\rho\}_{i\in\{1..n\}},
                D\sigma,\Gamma'\rho} {C\rho}} {\Pi^{\rho,\sigma,D}}}}}
    \right\}$} \settowidth{\infwidthii}{\mc}
  \begin{displaymath}
    \infer[\eqL]{\Seq{\Delta_1,\ldots,\Delta_n,s=t,\Gamma'}{C}}
    {\deduce{\makebox[\infwidthi]{}}
      {\left\{\raisebox{-3.5ex}{\infer[\mc]
	    {\Seq{\Delta_1\rho,\ldots,\Delta_n\rho,\Gamma'\rho}{C\rho}}
            {\left\{\raisebox{-1.5ex}
                {\deduce{\Seq{\Delta_i\rho}{B_i\rho}}
                  {\Pi_i\rho}}
              \right\}_{i \in \{1..n\}}
              & \raisebox{-2.5ex}
              {\deduce{\Seq{B_i\rho, \ldots, \Gamma'\rho}{C\rho}}
                {\Pi^{\rho}}}}}
	\right\}\makebox[\infwidthii]{}}}
    \enspace 
  \end{displaymath}

\item[$-/\circR$:] If $\Pi$ is
  \begin{displaymath}
    \infer[\circR]{\Seq{B_1,\ldots,B_n,\Gamma}{C}}
    {\left\{\raisebox{-1.5ex}{
          \deduce{\Seq{B_1,\ldots,B_n,\Gamma^i}{C^i}}
          {\Pi^i}}\right\}}
    \enspace ,
  \end{displaymath}
  where $\circR$ is any right rule except $\coindR$, then $\Xi$
  reduces to \settowidth{\infwidthi} {$\left\{\raisebox{-2.45ex}{
        \infer[\mc]{\Seq{\Delta_1,\ldots,\Delta_n,\Gamma^i}{C^i}}
        {\deduce{\Seq{\Delta_1}{B_1}} {\Pi_1} & \cdots &
          \deduce{\Seq{\Delta_n}{B_n}} {\Pi_n} &
          \deduce{\Seq{B_1,\ldots,B_n,\Gamma^i}{C^i}}
          {\Pi^i}}}\right\}$} \settowidth{\infwidthii}{\mc}
  \begin{displaymath}
    \infer[\circR]{\Seq{\Delta_1,\ldots,\Delta_n,\Gamma}{C}}
    {\deduce{\makebox[\infwidthi]{}}
      {\left\{\raisebox{-2.45ex}{
	    \infer[\mc]{\Seq{\Delta_1,\ldots,\Delta_n,\Gamma^i}{C^i}}
            {\deduce{
                \Seq{\Delta_1}{B_1}}{\Pi_1}
              & \cdots
              & \deduce{\Seq{\Delta_n}{B_n}}{\Pi_n}
              & \deduce{\Seq{B_1,\ldots,B_n,\Gamma^i}{C^i}}
              {\Pi^i}}}\right\}\makebox[\infwidthii]{}}}
    \enspace ,
  \end{displaymath}

\item[$-/\coindR$:] Suppose $\Pi$ is
$$
\infer[\coindR] {\Seq{B_1,\dots,B_n,\Gamma}{p\,\vec{t}}} {
  \deduce{\Seq{B_1,\dots,B_n,\Gamma} {S\,\vec{t}}}{\Pi'} &
  \deduce{\Seq{S\,\vec{x}}{D\,S\,\vec{x}}}{\Pi_S} } \enspace ,
$$
where $p\,\vec{x} \defnu D\,p\,\vec{x}$.  Let $\Xi_1$ be
\begin{displaymath}
  \infer[\mc]{\Seq{\Delta_1,\ldots,\Delta_n,\Gamma}
    {S\,\vec{t}}}
  {\deduce{\Seq{\Delta_1}{B_1}}{\Pi_1}
    & \cdots
    & \deduce{\Seq{\Delta_n}{B_n}}{\Pi_n}
    & \deduce{\Seq{B_1,\ldots,B_n,\Gamma}{S\,\vec{t}}}
    {\Pi'}}
  \enspace 
\end{displaymath}
Then $\Xi$ reduces to
$$
\infer[\coindR] {\Seq{\Delta_1,\dots,\Delta_n,\Gamma}{p\,\vec{t}}} {
  \deduce{\Seq{\Delta_1,\dots,\Delta_n,\Gamma} {S\,\vec{t}}}{\Xi_1} &
  \deduce{\Seq{S\,\vec{x}}{D\,S\,\vec{x}}}{\Pi_S} } \enspace
$$
\end{description}

\paragraph{Multicut cases:}

\begin{description}

\item[$\mc/\circL$:] If $\Pi$ ends with a left rule, other than $\cL$,
  $\wL$ and $\indL$, acting on $B_1$ and $\Pi_1$ ends with a multicut
  and reduces to $\Pi_1'$, then $\Xi$ reduces to
  \begin{displaymath}
    \infer[\mc]{\Seq{\Delta_1,\ldots,\Delta_n,\Gamma}{C}}
    {\deduce{\Seq{\Delta_1}{B_1}}
      {\Pi_1'}
      & \deduce{\Seq{\Delta_2}{B_2}}
      {\Pi_2}
      & \cdots
      & \deduce{\Seq{\Delta_n}{B_n}}
      {\Pi_n}
      & \deduce{\Seq{B_1,\ldots,B_n,\Gamma}{C}}
      {\Pi}}
    \enspace 
  \end{displaymath}

\item[$-/\mc$:] Suppose $\Pi$ is
  \begin{displaymath}
    \infer[\mc]{\Seq{B_1,\ldots,B_n,\Gamma^1,\ldots,\Gamma^m,\Gamma'}{C}}
    {\left\{\raisebox{-1.5ex}{\deduce{\Seq{\{B_i\}_{i \in I^j},\Gamma^j}{D^j}}
          {\Pi^j}}\right\}_{j \in \{1..m\}}
      & \raisebox{-2.5ex}{\deduce{\Seq{\{D^j\}_{j \in \{1..m\}},\{B_i\}_{i \in I'},\Gamma'}{C}}
        {\Pi'}}}
    \enspace ,
  \end{displaymath}
  where $I^1,\ldots,I^m,I'$ partition the formulas $\{B_i\}_{i \in
    \{1..n\}}$ among the premise derivations $\Pi_1$, \ldots,
  $\Pi_m$,$\Pi'$.  For $1 \leq j \leq m$ let $\Xi^j$ be
  \begin{displaymath}
    \infer[\mc]{\Seq{\{\Delta_i\}_{i \in I^j},\Gamma^j}{D^j}}
    {\left\{\raisebox{-1.5ex}{\deduce{\Seq{\Delta_i}{B_i}}
          {\Pi_i}}\right\}_{i \in I^j}
      & \raisebox{-2.5ex}{\deduce{\Seq{\{B_i\}_{i \in I^j},\Gamma^j}{D^j}}
        {\Pi^j}}}
    \enspace 
  \end{displaymath}
  Then $\Xi$ reduces to
  \begin{displaymath}
    \infer[\mc]{\Seq{\Delta_1,\ldots,\Delta_n,\Gamma^1,\ldots\Gamma^m,\Gamma'}{C}}
    {\left\{\raisebox{-1.5ex}{\deduce{\Seq{\ldots}{D^j}}
          {\Xi^j}}\right\}_{j \in \{1..m\}}
      & \left\{\raisebox{-1.5ex}{\deduce{\Seq{\Delta_i}{B_i}}
          {\Pi_i}}\right\}_{i \in I'}
      & \raisebox{-2.5ex}{\deduce{\Seq{\ldots}{C}}
        {\Pi'}}}
    \enspace 
  \end{displaymath}

\end{description}

\paragraph{Structural cases:}

\begin{description}
\item[$-/\cL$:] If $\Pi$ is
  \begin{displaymath}
    \infer[\cL]{\Seq{B_1,B_2,\ldots,B_n,\Gamma}{C}}
    {\deduce{\Seq{B_1,B_1,B_2,\ldots,B_n,\Gamma}{C}}
      {\Pi'}}
    \enspace ,
  \end{displaymath}
  then $\Xi$ reduces to
  \begin{displaymath}
    \infer=[\cL]
    {\Seq{\Delta_1,\Delta_2,\ldots,\Delta_n,\Gamma}{C}}
    {
      \infer[\mc]
      {\Seq{\Delta_1,\Delta_1,\Delta_2,\ldots,\Delta_n,\Delta_n,\Gamma}{C}}
      {
        \raisebox{-2.5ex}{\deduce{\Seq{\Delta_1}{B_1}}{\Pi_1}}
        & 
        \left\{\raisebox{-1.5ex}{\deduce{\Seq{\Delta_i}{B_i}}{\Pi_i}}\right\}_{i \in \{1..n\}}
        & 
        \raisebox{-2.5ex}{\deduce{\Seq{B_1,B_1,B_2,\ldots,B_n,\Gamma}{C}}{\Pi'}}
      }
    }
    \enspace 
  \end{displaymath}

\item[$-/\wL$:] If $\Pi$ is
  \begin{displaymath}
    \infer[\wL]{\Seq{B_1,B_2,\ldots,B_n,\Gamma}{C}}
    {\deduce{\Seq{B_2,\ldots,B_n,\Gamma}{C}}
      {\Pi'}}
    \enspace ,
  \end{displaymath}
  then $\Xi$ reduces to
  \begin{displaymath}
    \infer=[\wL]
    {\Seq{\Delta_1,\Delta_2,\ldots,\Delta_n,\Gamma}{C}}
    {
      \infer[\mc]
      {\Seq{\Delta_2,\ldots,\Delta_n,\Gamma}{C}}
      {
        \deduce{\Seq{\Delta_2}{B_2}}{\Pi_2}
        & \ldots
        & \deduce{\Seq{\Delta_n}{B_n}}{\Pi_n}			
        & \deduce{\Seq{B_2,\ldots,B_n,\Gamma}{C}}{\Pi'}
      }
    }
    \enspace 
  \end{displaymath}

\end{description}

\paragraph{Axiom cases:}

\begin{description}
\item[$\init/\circL$:] Suppose $\Pi$ ends with a left-rule acting on
  $B_1$ and $\Pi_1$ ends with the $\init$ rule. Then it must be the
  case that $\Delta_1 = \{B_1\}$ and $\Xi$ reduces to
  \begin{displaymath}
    \infer[\mc]{\Seq{B_1,\Delta_2,\ldots,\Delta_n,\Gamma}{C}}
    {\deduce{\Seq{\Delta_2}{B_2}}
      {\Pi_2}
      & \cdots
      & \deduce{\Seq{\Delta_n}{B_n}}
      {\Pi_n}
      & \deduce{\Seq{B_1,B_2,\ldots,B_n,\Gamma}{C}}
      {\Pi}}		
    \enspace 
  \end{displaymath}

\item[$-/\init$:] If $\Pi$ ends with the $\init$ rule, then $n = 1$,
  $\Gamma$ is the empty multiset, and $C$ must be a cut formula, i.e.,
  $C = B_1$. Therefore $\Xi$ reduces to $\Pi_1$.
\end{description}

\end{definition}

Notice that the reductions in the essential case for induction and
co-induction are not symmetric. This is because we use an asymmetric
measure to show the termination of cut-reduction, that is, the
complexity of cut is always reduced on the right premise.  The
difficulty in getting a symmetric measure, in the presence of
contraction and implication (in the body of definition), is already
observed in logics with definitions but without (co-)induction
\cite{schroeder-heister92nlip}.

It is clear from an inspection of the rules of the logic and the
definition of cut reduction that every derivation ending with a
multicut has a reduct. But because we use multisets in sequents, there
may be some ambiguity as to whether a formula occurring on the left
side of the rightmost premise of a multicut rule is in fact a cut
formula, and if so, which of the left premises corresponds to it.  As
a result, several of the reduction rules may apply, and so a
derivation may have multiple redexes.

The following lemmas show that the reduction relation is preserved by
some of the transformations of derivations defined previously.

\begin{lemma}
  \label{lm:reduct subst}
  Let $\Pi$ be a derivation of $\Seq{\Gamma}{C}$ ending with a $\mc$
  and let $\theta$ be a substitution.  If $\Pi\theta$ reduces to $\Xi$
  then there exists a derivation $\Pi'$ such that $\Xi = \Pi'\theta$
  and $\Pi$ reduces to $\Pi'$.
\end{lemma}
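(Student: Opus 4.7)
The plan is to proceed by case analysis on the reduction rule of Definition \ref{def:reduct} that produces $\Xi$ from $\Pi\theta$. Since $\Pi$ ends with a $\mc$, Definition \ref{def:subst} gives that $\Pi\theta$ also ends with a $\mc$ whose premise derivations are $\Pi_i\theta$ and $\Psi\theta$, and the last rule of each $\Pi_i\theta$ (resp.\ $\Psi\theta$) is identical to that of $\Pi_i$ (resp.\ $\Psi$). Hence whichever reduction rule applies to $\Pi\theta$ has a matching instance applicable to $\Pi$. I would take $\Pi'$ to be the corresponding reduct of $\Pi$ and verify $\Pi'\theta = \Xi$ case by case.

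In the bulk of the cases the reduct is built by rearranging the given premise derivations together with additional applications of structural, cut, or logical rules whose shape is determined only by the rule names occurring in $\Pi$, not by the terms in its sequents. In such cases the substitution $\theta$ commutes with the rearrangement by direct inspection of Definition \ref{def:subst}, so $\Pi'\theta = \Xi$ is immediate.

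The cases requiring real work are the following. For the essential $\forallR/\forallL$ and $\existsR/\existsL$ cases, $\Pi'$ contains a subderivation of the form $\Pi_1'[t/y]$, where $y$ is the eigenvariable introduced by the right rule. Invoking the eigenvariable convention of Section \ref{sec:drv}, I may assume that $y$ is neither in the domain of $\theta$ nor free in its range, whence $[t/y]\circ\theta$ and $\theta\circ[t\theta/y]$ agree on every variable, and Lemma \ref{lm:subst-drv-comp} gives $(\Pi_1'[t/y])\theta = \Pi_1'\theta[t\theta/y]$, which is exactly the subderivation appearing in $\Xi$. For the essential $*/\indL$ and $\coindR/\coindL$ cases, $\Pi'$ contains an unfolding $\mu^p_{p\,\vec t}(\Pi_1,\Pi_S)$ or $\nu^p_{D\,p}(\Xi_1,\Pi_S)$; here Lemmas \ref{lm:ind unfold subst} and \ref{lm:coind unfold subst} let me push $\theta$ through the unfolding operator to obtain the corresponding unfolding in $\Xi$. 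Finally, in the $\eqR/\eqL$ and $\eqL/\circL$ cases, the premise family $\{\Psi^{\theta\circ\rho'}\}_{\rho'}$ of the $\eqL$ rule in $\Psi\theta$ is, by Definition \ref{def:subst}, obtained by re-indexing the original family $\{\Psi^\rho\}_\rho$ along composition with $\theta$; thus the premise selected by the reduction of $\Pi\theta$ (namely $\Psi^{\theta\circ\epsilon}$) is precisely the $\theta$-image of the premise $\Psi^\epsilon$ selected by the corresponding reduction of $\Pi$.

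The main obstacle is not any deep argument but careful bookkeeping: one must, in each case, match up the subderivations appearing in $\Xi$ with the $\theta$-images of the subderivations appearing in $\Pi'$, and invoke the substitution-composition lemma together with the unfolding lemmas at the appropriate spots. No ideas beyond those already established in the present section are required.
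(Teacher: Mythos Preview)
Your proposal is correct and follows essentially the same approach as the paper: observe that substitution preserves the last rule of each premise derivation, so the same reduction rule applies to $\Pi$ as to $\Pi\theta$, then verify $\Pi'\theta=\Xi$ case by case, invoking Lemmas~\ref{lm:ind unfold subst} and~\ref{lm:coind unfold subst} for the $*/\indL$ and $\coindR/\coindL$ cases. You are in fact more explicit than the paper, which dismisses all cases other than the two unfolding ones as ``routine''; your treatment of the quantifier and equality cases spells out the bookkeeping (eigenvariable freshness, Lemma~\ref{lm:subst-drv-comp}, and the re-indexing of the $\eqL$ premise family) that the paper leaves implicit.
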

\begin{proof}
  Observe that the redexes of a derivation are not affected by
  substitution, since the cut reduction rules are determined by the
  last rules of the premise derivations of the derivation, which are
  not changed by substitution. Therefore, any cut reduction rule that
  is applied to $\Pi\theta$ to get $\Xi$ can also be applied to
  $\Pi$. Suppose that $\Pi'$ is the reduct of $\Pi$ obtained this
  way. In all cases, except for the cases where the reduction rule
  applied is either $*/\indL$ or $\coindL/ \coindR$, it is a matter of
  routine to check that $\Pi'\theta = \Xi$. For the reduction rules
  $*/\indL$ and $\coindL/ \coindR$, we need Lemma~\ref{lm:ind unfold
    subst} and Lemma~\ref{lm:coind unfold subst} which show that
  substitution commutes with (co-)inductive unfolding. \qed
\end{proof}

\begin{lemma}
  \label{lm:unfold-reduct-ind}
  Let $p\,\vec{x} \defmu D\,p\,\vec{x}$ be an inductive definition and
  let $\Pi_S$ be a derivation of $\Seq{D\,S\,\vec{x}}{S\,\vec{x}}$ for
  some invariant $S$. Let $C$ be a non-atomic formula dominated by
  $p$.  Let $\Pi$ and $\Pi'$ be two derivations of the same sequent
  $\Seq{\Gamma}{C}$, and $\Pi$ ends with an $\mc$-rule.  If
  $\mu_C^p(\Pi,\Pi_S)$ reduces to $\Xi$ then there exists a derivation
  $\Pi'$ such that $\Xi = \mu_C^p(\Pi',\Pi_S)$ and $\Pi$ reduces to
  $\Pi'$.
\end{lemma}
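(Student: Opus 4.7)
The plan is a case analysis on the cut-reduction rule that transforms $\mu_C^p(\Pi,\Pi_S)$ into $\Xi$. Begin by unpacking $\Pi$: since it ends with $\mc$, write
$$\Pi = \infer[\mc]{\Seq{\Delta_1,\dots,\Delta_n,\Gamma}{C}}{\Pi_1 & \cdots & \Pi_n & \Pi_0}$$
with each $\Pi_j$ deriving $\Seq{\Delta_j}{B_j}$ and $\Pi_0$ deriving $\Seq{B_1,\dots,B_n,\Gamma}{C}$. By case~(4) of Definition~\ref{def:inductive-unfolding}, $\mu_C^p(\Pi,\Pi_S)$ is itself a multicut with the \emph{same} left premises $\Pi_1,\dots,\Pi_n$ and with $\mu_C^p(\Pi_0,\Pi_S)$ replacing $\Pi_0$, ending in $C[S/p]$. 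The crucial observation is that, because $C$ is non-atomic, $\mu_C^p(\Pi_0,\Pi_S)$ ends with exactly the same inference rule as $\Pi_0$: the only clauses of Definition~\ref{def:inductive-unfolding} that alter the last rule are the $\init$-clause (1) and the $\indR$-clause (6), and both require the right-hand formula to be $p\,\vec t$. For the same reason $\Pi_0$ itself cannot end with $\init$, $\indR$, or $\coindR$, which trims the case analysis.

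Given a cut-reduction rule $R$ applicable to $\mu_C^p(\Pi,\Pi_S)$, the preceding observation implies that $R$ is also applicable to $\Pi$. The strategy is therefore to let $\Pi'$ be the reduct of $\Pi$ obtained by applying $R$, and then verify by direct computation that $\mu_C^p(\Pi',\Pi_S)=\Xi$. One runs through the essential, left-commutative, right-commutative, structural, axiom, and multicut cases of Definition~\ref{def:reduct}. Most are routine: since the left premises $\Pi_j$ are untouched by $\mu_C^p$, any rearrangement among them and the premises of $\Pi_0$ performed by $R$ commutes trivially with the outer $\mu_C^p$, which again only reaches down into the rightmost premise of the new multicut. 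In the right-commutative cases where $\Pi_0$ ends with a rule acting on $C$ itself (for instance, $-/\oimpR$ when $C=C_1\oimp C_2$, or more generally $-/\circR$), the match is ensured by clauses~(2), (3), and~(7) of Definition~\ref{def:inductive-unfolding}, which were crafted precisely to push $\mu_C^p$ below such right rules, using that $p$ dominates $C$ to guarantee $p$ is vacuous in the antecedent of any implication in $C$. The essential case $*/\indL$, in which $\Pi_0$ ends with $\indL$ on some cut formula $q\,\vec t$, introduces an inner unfolding $\mu_{q\,\vec t}^q(\Pi_j,\Psi_S)$ built from the invariant derivation within $\Pi_0$; this inner unfolding uses a different predicate and invariant from $p$ and $\Pi_S$, so it does not interact with the outer $\mu_C^p$.

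The main obstacle is the substitution-sensitive cases: essential $\forallR/\forallL$ and $\existsR/\existsL$, left-commutative $\eqL/\circL$, and right-commutative $-/\eqL$. In these, $R$ produces subderivations of the form $\Pi_0\rho$ (and $\Pi_j\rho$), and to recognise the resulting unfolded derivation as $\mu_C^p(\Pi',\Pi_S)$ one must commute $\mu_C^p$ with the substitution $\rho$. This is supplied by Lemma~\ref{lm:ind unfold subst}, which gives $\mu_C^p(\Pi_0,\Pi_S)\rho=\mu_{C\rho}^p(\Pi_0\rho,\Pi_S)$; the same lemma also handles the nested unfoldings $\mu_{C_2}^p$ that arise in the $-/\oimpR$ right-commutative case. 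With these substitution lemmas in hand, every case reduces to an equality of derivations that can be read off the definitions, completing the proof.
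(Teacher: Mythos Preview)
Your proposal is correct and follows essentially the same approach as the paper's own proof: a case analysis on the reduction rules, hinging on the observation that since $C$ is non-atomic the unfolding preserves the last rule of the rightmost premise (so the same reduction applies to $\Pi$), and invoking Lemma~\ref{lm:ind unfold subst} to commute unfolding with substitution in the $\eqL$ and quantifier cases. Your write-up is in fact more explicit than the paper's, which merely sketches the two key observations and flags $-/\circR$ as the only non-trivial case.
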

\begin{proof}
  By case analysis on the reduction rules.  The case analysis can be
  much simplified by the following observations.  First, the reduction
  rules are driven only by outermost connectives in the formulas in
  the sequent. Second, the unfolding of a derivation affects only the
  right-hand-side of the sequents appearing in the derivation (or more
  specifically, only the branches containing major premises).  By a
  quick inspection on the definition of reduction rules in
  Definition~\ref{def:reduct}, we see that the only non-trivial case
  to consider is the right-commutative case $-/\circR$.  Since $C$ is
  non-atomic (and assuming that it has at least one occurrence of $p$,
  otherwise it is trivial since $\Pi = \mu_C^p(\Pi,\Pi_S)$ in this
  case), the only cases we need to verify is when its topmost logical
  connective is either $\land$, $\lor$, $\oimp$, $\forall$ and
  $\exists$.  In these cases, the unfolding does not change the
  topmost connective, therefore any reduction rule that applies to
  $\mu(\Pi,\Pi_S)$ also applies to $\Pi$.  Lemma~\ref{lm:ind unfold
    subst} and Lemma~\ref{lm:coind unfold subst} are used when
  substitutions are involved (right/left commutative cases with
  $\eqL$).  \qed
\end{proof}

\begin{lemma}
  \label{lm:unfold-reduct-ind2}
  Let $p\,\vec{x} \defmu D\,p\,\vec{x}$ be an inductive definition and
  let $\Pi_S$ be a derivation of $\Seq{D\,S\,\vec{x}}{S\,\vec{x}}$ for
  some invariant $S$. Let $\Pi$ be the derivation
$$
\infer[mc] {\Seq{\Delta_1, \ldots, \Delta_n, \Gamma}{p\,\vec t}} {
  \deduce{\Seq{\Delta_1}{B_1}}{\Pi_1} & \cdots &
  \deduce{\Seq{\Delta_n}{B_n}}{\Pi_n} &
  \deduce{\Seq{B_1,\ldots,B_n,\Gamma}{p\,\vec t}}{\Pi'} } \enspace
$$
Suppose that $\Pi'$ ends with a rule other than $\init$ and $\indR$.
If $\mu_{p\,\vec t}^p(\Pi,\Pi_S)$ reduces to $\Xi$ then there exists a
derivation $\Pi''$ such that $\Xi = \mu_{p\,\vec t}^p(\Pi'', \Pi_S)$
and $\Pi$ reduces to $\Pi''$.
\end{lemma}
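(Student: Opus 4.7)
The plan is to proceed by case analysis on the reduction rule transforming $\mu_{p\,\vec t}^p(\Pi, \Pi_S)$ into $\Xi$, paralleling the structure of the proof of Lemma~\ref{lm:unfold-reduct-ind}. The crucial preliminary observation is obtained by inspecting Definition~\ref{def:inductive-unfolding}: the only clauses in which the inductive unfolding modifies the concluding inference of its input derivation are Case~1 (where an $\init$ on $p\,\vec t$ becomes an $\indL$) and Case~6 (where an $\indR$ concluding $p\,\vec t$ becomes an $\mc$ above $\Pi_S[\vec t/\vec x]$). Every other clause preserves the last rule and merely pushes the unfolding down into the major-premise subderivations. Since $\Pi'$ is assumed to end in neither $\init$ nor $\indR$, the unfolded derivation $\mu_{p\,\vec t}^p(\Pi', \Pi_S)$ ends in exactly the same rule as $\Pi'$. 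Consequently the outermost $\mc$ in $\mu_{p\,\vec t}^p(\Pi, \Pi_S)$ is structurally of the same shape as the outermost $\mc$ in $\Pi$, and the set of reduction rules applicable at the top of the two derivations is in bijection.

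For each such reduction I take $\Pi''$ to be the reduct of $\Pi$ obtained under the corresponding cut-reduction rule, and verify that $\mu_{p\,\vec t}^p(\Pi'', \Pi_S) = \Xi$. Because the unfolding affects only the right-hand sides along major-premise branches, all essential, left-commutative, axiom, and structural cases that operate on the cut formulas $B_i$ leave $p\,\vec t$ untouched and commute with unfolding syntactically. The right-commutative cases $-/\circL$, $-/\oimpL$, $-/\indL$, $-/\eqL$, and $-/\mc$ reshape $\Pi'$ by permuting the outer $\mc$ above its last rule; distributing the unfolding into the reshaped premises is routine, with Lemma~\ref{lm:ind unfold subst} accounting for the substitution instances arising from $\eqL$ and Lemma~\ref{lm:coind unfold subst} accounting for any co-inductive unfoldings nested inside the subderivations. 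The structural $-/\cL$ and $-/\wL$ cases and the $-/\mc$ permutation each produce a new configuration in which the unfolding simply attaches to the rightmost-premise slot of the newly formed $\mc$, so that equality with $\mu_{p\,\vec t}^p(\Pi'', \Pi_S)$ is immediate.

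The main obstacle is bookkeeping rather than mathematical depth: one must trace, in each reduct, precisely where the unfolding is to be re-applied and confirm that this matches the image under $\mu_{p\,\vec t}^p(-, \Pi_S)$. The hypothesis excluding $\indR$ and $\init$ at the top of $\Pi'$ is exactly what prevents Cases~1 and~6 of Definition~\ref{def:inductive-unfolding} from firing at the outermost level, which would otherwise introduce a spurious $\indL$ or $\mc$ at the top of the rightmost premise and enable reductions of $\mu_{p\,\vec t}^p(\Pi, \Pi_S)$ with no counterpart in $\Pi$. With that pathology ruled out, the correspondence between reducts of $\Pi$ and reducts of its unfolding is one-to-one, and the lemma follows.
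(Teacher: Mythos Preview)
Your proposal is correct and matches the paper's approach, which is simply stated as ``straightforward by inspection on the cut reduction rules and the definition of inductive unfolding.'' Your elaboration makes explicit precisely the observation the paper relies on implicitly: excluding $\init$ and $\indR$ at the top of $\Pi'$ ensures the unfolding (Case~4 of Definition~\ref{def:inductive-unfolding}) preserves the last rule of the rightmost premise, so the applicable cut-reduction rules for $\mu_{p\,\vec t}^p(\Pi,\Pi_S)$ and for $\Pi$ coincide. One minor remark: your invocation of Lemma~\ref{lm:coind unfold subst} is unnecessary here, since no co-inductive unfolding is involved in $\mu$; only Lemma~\ref{lm:ind unfold subst} is needed for the $\eqL$ cases.
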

\begin{proof}
  The proof is straightforward by inspection on the cut reduction
  rules and the definition of inductive unfolding. \qed
\end{proof}

\begin{lemma}
  \label{lm:unfold-reduct-coind}
  Let $p\,\vec{x} \defnu D\,p\,\vec{x}$ be a co-inductive definition
  and let $\Pi_S$ be a derivation of $\Seq{S\,\vec{x}}{D\,S\,\vec{x}}$
  for some invariant $S$. Let $C$ be a non-atomic formula dominated by
  $p$.  Let $\Pi$ and $\Pi'$ be two derivations of the sequent
  $\Seq{\Gamma}{C[S/p]}$, where $\Pi$ ends with a $\mc$ rule.  If
  $\nu_C^p(\Pi,\Pi_S)$ reduces to $\Xi$ then there exists a derivation
  $\Pi'$ such that $\Xi = \nu_C^p(\Pi',\Pi_S)$ and $\Pi$ reduces to
  $\Pi'$.
\end{lemma}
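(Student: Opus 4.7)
The plan is to mirror the proof strategy of Lemma~\ref{lm:unfold-reduct-ind}, exploiting the duality between inductive and co-inductive unfolding. The key observation carries over verbatim: the cut-reduction rules are determined entirely by the outermost logical structure of the cut formulas and of $C[S/p]$, and by the last rules of the premise derivations of the multicut. On the other hand, the transformation $\nu_C^p(\,\cdot\,,\Pi_S)$ alters its argument only along the major-premise branches and only on right-hand side formulas whose outermost shape is fixed by that of $C$; in particular it does not touch the last rule of $\Pi$ nor the last rules of the left premises $\Pi_1,\dots,\Pi_n$.

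I would proceed by case analysis on the reduction rule applied to $\nu_C^p(\Pi,\Pi_S)$. The essential, left-commutative, structural, axiom, and $\mc/\mc$ cases act on one of the cut formulas $B_i$ or on its producer $\Pi_i$ (whose conclusion does not contain $p$ at all, since $p$ is introduced by the unfolding only on the right). Hence each such rule, applied literally to $\Pi$, yields a reduct $\Pi''$ such that $\nu_C^p(\Pi'',\Pi_S)=\Xi$: the unfolding simply distributes through the newly produced multicuts by clause~3 of Definition~\ref{def:coinductive-unfolding}. For left-commutative $\eqL$ and for $\indL$, the same argument works, but one additionally needs that $\nu_C^p$ commutes with the substitution $\rho$ in the $\eqL$ premises; this is exactly Lemma~\ref{lm:coind unfold subst}.

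The interesting cases are the right-commutative ones $-/\circR$, where $\Pi$ ends with a right rule on $C[S/p]$. Because $C$ is non-atomic and dominated by $p$, its topmost connective is one of $\land,\lor,\oimp,\forall,\exists$; in particular the $\coindR$ sub-case of $-/\circR$ does not arise, and neither does the special case $C=p\,\vec t$ that would otherwise insert a fresh $\coindR$. For each admissible outermost connective, $\nu_C^p(\Pi,\Pi_S)$ ends with the very same right rule (with the inner $\nu$-unfolding pushed into the appropriate sub-formula by clauses~1--4 of Definition~\ref{def:coinductive-unfolding}). Thus the right-commutative reduction applicable to $\nu_C^p(\Pi,\Pi_S)$ is also applicable to $\Pi$, and one checks that its reduct $\Pi''$ satisfies $\nu_C^p(\Pi'',\Pi_S)=\Xi$; the $\oimpR$ sub-case is slightly delicate because $p$ must be vacuous in the antecedent $C_1$, but this is precisely the side condition already used when defining the $\oimpR$ clause of $\nu_C^p$, so the equation follows directly.

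The main obstacle is bookkeeping rather than conceptual: one must check for each of the roughly fifteen reduction schemes of Definition~\ref{def:reduct} that the shape of the produced $\mc$ matches after unfolding, and that the substitution instances generated by $\eqL$-driven reductions are correctly absorbed via Lemma~\ref{lm:coind unfold subst} (and, for the purely right-commutative $-/\eqL$ case, also via Definition~\ref{def:subst}). Since no reduction rule can reach into the post-fixed-point invariant $\Pi_S$ nor convert a $\coindR$ introduced by $\nu_C^p$ into a fresh redex (the only new $\coindR$ appears at the point where $C$ becomes $p\,\vec t$, which is excluded by the non-atomicity assumption), the case analysis is closed and routine.
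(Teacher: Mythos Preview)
Your proposal is correct and follows essentially the same approach as the paper, which simply says the proof is ``analogous to the proof of Lemma~\ref{lm:unfold-reduct-ind}.'' You have in fact spelled out the case analysis in more detail than the paper does, correctly identifying that the only non-trivial cases are the right-commutative ones and that the non-atomicity of $C$ rules out the problematic $\coindR$ sub-case, and correctly invoking Lemma~\ref{lm:coind unfold subst} for the $\eqL$-driven substitutions.
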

\begin{proof}
  Analogous to the proof of Lemma~\ref{lm:unfold-reduct-ind}. \qed
\end{proof}

\subsection{Normalizability}

\begin{definition}
  \label{def:norm}
  We define the set of \emph{normalizable} derivations to be the
  smallest set that satisfies the following conditions:
  \begin{enumerate}
  \item If a derivation $\Pi$ ends with a multicut, then it is
    normalizable if every reduct of $\Pi$ is normalizable.

  \item If a derivation ends with any rule other than a multicut, then
    it is normalizable if the premise derivations are normalizable.
  \end{enumerate}
\end{definition}

Following Martin-L\"of~\cite{martin-lof71sls}, instead of assigning
some ordinal measures to derivations and define an ordering on them,
we shall use the derivation figures themselves as a measure.  Each
clause in the definition of normalizability asserts that a derivation
is normalizable if certain (possibly infinitely many) other
derivations are normalizable. We call the latter the \emph{predecessors} of the former.  Thus a derivation is normalizable if
the tree of its successive predecessors is well-founded.  We refer to
this well-founded tree as its \emph{normalization}.

Since a normalization is well-founded, it has an associated induction
principle: for any property $P$ of derivations, if for every
derivation $\Pi$ in the normalization, $P$ holds for every predecessor
of $\Pi$ implies that $P$ holds for $\Pi$, then $P$ holds for every
derivation in the normalization.

The set of all normalizable derivations is denoted by $\NM$.

\begin{lemma}
  \label{lm:norm-cut-free}
  If there is a normalizable derivation of a sequent, then there is a
  cut-free derivation of the sequent.
\end{lemma}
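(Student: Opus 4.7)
The plan is to prove the lemma by well-founded induction on the normalization tree of the given normalizable derivation $\Pi$. Since the normalization is well-founded by Definition~\ref{def:norm} and the accompanying remark, we obtain the induction principle: to show a property $P$ holds for every derivation in the normalization, it suffices to show that $P(\Pi)$ follows whenever $P$ holds for every predecessor of $\Pi$. I take $P(\Pi)$ to be ``the end-sequent of $\Pi$ admits a cut-free derivation.''

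I would split on the last rule of $\Pi$. If $\Pi$ ends with a rule $R$ other than $\mc$, then by clause (2) of Definition~\ref{def:norm} each premise derivation $\Pi_i$ of $\Pi$ is normalizable and is a predecessor of $\Pi$. By the induction hypothesis, every $\Pi_i$ yields a cut-free derivation $\Pi_i^*$ of the corresponding premise sequent. Applying $R$ to $\{\Pi_i^*\}$ produces a cut-free derivation of the end-sequent of $\Pi$, since $R$ is not itself a multicut and so introduces no new cut.

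If instead $\Pi$ ends with $\mc$, I would appeal to the observation following Definition~\ref{def:reduct} that every derivation ending with a multicut has at least one reduct. Pick any reduct $\Pi'$. By clause (1) of Definition~\ref{def:norm}, $\Pi'$ is normalizable and is a predecessor of $\Pi$ in the normalization. The reduction rules in Definition~\ref{def:reduct} all preserve the end-sequent (as can be checked by inspection of each case; a few cases add $\cL$ or $\wL$ on top, but the stated conclusion is unchanged). Thus $\Pi'$ derives the same sequent as $\Pi$, and by the induction hypothesis this sequent admits a cut-free derivation.

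I expect the main obstacle, such as it is, to be bookkeeping rather than conceptual: namely, making sure that the meta-induction is justified by the definition of normalization as a well-founded tree of predecessors, and verifying the two background facts used above (existence of a reduct for every multicut-ending derivation, and preservation of the end-sequent by every clause of Definition~\ref{def:reduct}). Both are essentially built into the presentation of the reduction relation, so no separate machinery is required.
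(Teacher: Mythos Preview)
Your proposal is correct and follows essentially the same approach as the paper: well-founded induction on the normalization, splitting on whether $\Pi$ ends in $\mc$ (use a reduct as predecessor) or not (use the premise derivations as predecessors). You are slightly more explicit than the paper about the two auxiliary facts (existence of a reduct and end-sequent preservation under reduction), but the argument is the same.
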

\begin{proof} Let $\Pi$ be a normalizable derivation of the sequent
  $\Seq{\Gamma}{B}$.  We show by induction on the normalization of
  $\Pi$ that there is a cut-free derivation of $\Seq{\Gamma}{B}$.
  \begin{enumerate}

  \item If $\Pi$ ends with a multicut, then any of its reducts is one
    of its predecessors and so is normalizable.  But the reduct is
    also a derivation of $\Seq{\Gamma}{\Bscr}$, so by the induction
    hypothesis this sequent has a cut-free derivation.

  \item Suppose $\Pi$ ends with a rule other than multicut.  Since we
    are given that $\Pi$ is normalizable, by definition the premise
    derivations are normalizable.  These premise derivations are the
    predecessors of $\Pi$, so by the induction hypothesis there are
    cut-free derivations of the premises.  Thus there is a cut-free
    derivation of $\Seq{\Gamma}{\Bscr}$.
  \end{enumerate}
  \qed
\end{proof}

The next lemma states that normalization is closed under
substitutions.

\begin{lemma}
  \label{lm:subst-norm}
  If $\Pi$ is a normalizable derivation, then for any substitution
  $\theta$, $\Pi\theta$ is normalizable.
\end{lemma}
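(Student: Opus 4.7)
My plan is to proceed by induction on the normalization of $\Pi$, performing a case analysis on the last rule of $\Pi$. Recall that the predecessors of $\Pi$ in its normalization are its reducts (when the last rule is a multicut) or its premise derivations (otherwise), so the induction hypothesis gives me, for every such predecessor $\Pi^*$ and every substitution $\sigma$, that $\Pi^*\sigma$ is normalizable.

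First, suppose $\Pi$ ends with the multicut rule. Then $\Pi\theta$ also ends with a multicut by Definition~\ref{def:subst}, and I must verify that every reduct $\Xi$ of $\Pi\theta$ is normalizable. Lemma~\ref{lm:reduct subst} gives that any such $\Xi$ equals $\Pi'\theta$ for some reduct $\Pi'$ of $\Pi$. Since $\Pi$ is normalizable, $\Pi'$ is one of its predecessors and hence normalizable; the induction hypothesis then yields that $\Xi = \Pi'\theta$ is normalizable, as required.

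Next, if $\Pi$ ends with $\eqL$, Definition~\ref{def:subst} says that the premises of $\Pi\theta$ are the derivations $\Pi^{\theta\circ\rho'}$ indexed by unifiers $\rho'$ of $(s\theta, t\theta)$. Each composition $\theta\circ\rho'$ is itself a unifier of $(s,t)$, so $\Pi^{\theta\circ\rho'}$ is already a premise derivation of $\Pi$ and thus normalizable; no appeal to the induction hypothesis is needed here. For every other non-multicut rule, the premises of $\Pi\theta$ are exactly $\Pi_1\theta, \ldots, \Pi_n\theta$, where $\Pi_1,\ldots,\Pi_n$ are the premises of $\Pi$. Each $\Pi_i$ is a normalizable predecessor of $\Pi$, so the induction hypothesis supplies normalizability of $\Pi_i\theta$, and hence of $\Pi\theta$.

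The principal obstacle, and the only case with any genuine content, is the multicut case: there the predecessors are reducts rather than structural subderivations, so I must align the reducts of $\Pi\theta$ with those of $\Pi$. This alignment is precisely the content of Lemma~\ref{lm:reduct subst}, which in turn rests on the commutation of (co-)inductive unfolding with substitution established in Lemma~\ref{lm:ind unfold subst} and Lemma~\ref{lm:coind unfold subst}. With those in hand, the induction goes through uniformly.
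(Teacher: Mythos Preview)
Your proof is correct and follows essentially the same approach as the paper: induction on the normalization of $\Pi$, with the multicut case handled via Lemma~\ref{lm:reduct subst} and the remaining cases handled by observing (from Definition~\ref{def:subst}) that each premise of $\Pi\theta$ is either already a premise of $\Pi$ (the $\eqL$ case) or of the form $\Pi_i\theta$ for a premise $\Pi_i$ of $\Pi$. Your explicit separation of the $\eqL$ case and your remark tracing the multicut step back to Lemmas~\ref{lm:ind unfold subst} and~\ref{lm:coind unfold subst} add useful detail but do not change the argument.
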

\begin{proof} We prove this lemma by induction on the normalization of
  $\Pi$.
  \begin{enumerate}
 
  \item If $\Pi$ ends with a multicut, then $\Pi\theta$ also ends with
    a multicut. By Lemma~\ref{lm:reduct subst} every reduct of
    $\Pi\theta$ corresponds to a reduct of $\Pi$, therefore by
    induction hypothesis every reduct of $\Pi\theta$ is normalizable,
    and hence $\Pi\theta$ is normalizable.
 
  \item Suppose $\Pi$ ends with a rule other than multicut and has
    premise derivations $\{\Pi_i\}$.  By Definition~\ref{def:subst}
    each premise derivation in $\Pi\theta$ is either $\Pi_i$ or
    $\Pi_i\theta$.  Since $\Pi$ is normalizable, $\Pi_i$ is
    normalizable, and so by the induction hypothesis $\Pi_i\theta$ is
    also normalizable.  Thus $\Pi\theta$ is normalizable.  \qed
  \end{enumerate}
\end{proof}


\subsection{Parametric reducibility}

Let us first define some terminology concerning derivations.  We say
that a derivation $\Pi$ has type $C$ if the end sequent of $\Pi$ is of
the form $\Seq{\Gamma}{C}$ for some $\Gamma$.  We say that a set of
derivations $\Sscr$ has type $C$, if every derivation $\Pi \in \Sscr$
has type $C$.  A set of derivations $\Rscr$ is \emph{closed under
  substitution} if for every $\Pi \in \Rscr$ and for every
substitution $\theta$, $\Pi \theta \in \Rscr$.

To simplify presentation, we shall use the following notations to
denote certain types of derivations.  The derivation
$$
\infer[mc] {\Seq{\Delta_1, \ldots, \Delta_n, \Gamma }{C}} {
  \deduce{\Seq {\Delta_1}{B_1}}{\Pi_1} & \cdots & \deduce{\Seq
    {\Delta_n}{B_n}}{\Pi_n} & \deduce{\Seq {\Gamma}{C}}{\Pi} }
\enspace
$$
is abbreviated as $mc(\Pi_1, \ldots, \Pi_n, \Pi )$. The derivation
$$
\infer[\indL] {\Seq{\Gamma, p\,\vec u}{C}} {\deduce{\Seq{B\,S\,\vec
      x}{S\,\vec x}}{\Pi_S} & \deduce{\Seq{\Gamma, S\,\vec u}{C}}{\Pi}
}
$$
is abbreviated as $ind(\Pi_S, \Pi)$, and the derivation
$$
\infer[\coindR] {\Seq{\Gamma}{p\,\vec u}} {
  \deduce{\Seq{\Gamma}{S\,\vec u}}{\Pi} & \deduce{\Seq{S\,\vec
      x}{B\,S\,\vec x}}{\Pi_S} }
$$
is abbreviated as $coind(\Pi,\Pi_S)$.

\begin{definition}
  \label{def:candidates}
  Let $F$ be a closed term of type $\alpha_1 \to \cdots \to \alpha_n
  \to o$.  A set of derivations $\Sscr$ is said to be {\em
    $F$-indexed} if every derivation in $\Sscr$ has type $F\,t_1
  \ldots t_n$ for some $t_1, \ldots, t_n$.
\end{definition}

Given a set $\Sscr$ of derivations and a formula $C$, we denote with
$\restrict \Sscr C$ the set
$$
\{\Pi \in \Sscr \mid \hbox{$\Pi$ is of type $C$ } \}.
$$ 

We shall now define a family of sets of derivations, which we call
parametric reducibility sets.

\begin{definition}
  \label{def:param red} \emph{Parametric Reduciblity.}
  Let $p\vec x \defnu B\,p\,\vec x$ be a co-inductive definition, let
  $I$ be a closed term of the same type as $p$, let $\Rscr$ be a set
  of derivations, and let $\Sscr$ be an $I$-indexed set of
  derivations.  Let $C$ be a formula dominated by $p$.  We define the
  \emph{parametric reducibility sets} $\RED^p_C[\Rscr, \Sscr] $,
  consisting of derivations of type $C[I/p]$, by induction on the size
  of $C$, as follows.  (In the following, we shall refer to $C$ as the
  \emph{type} of $\RED^p_C[\Rscr,\Sscr]$.)
  \begin{enumerate}
  \item If $p$ does not appear in $C$ then $\RED^p_C[\Rscr, \Sscr] =
    \restrict {\Rscr} {C}$.
  \item If $C = p\,\vec u$, for some $\vec u$, then $\RED^p_C[\Rscr,
    \Sscr] = \restrict {\Sscr} {I\,\vec u}$.
  \item Otherwise, the family of parametric reducibility sets $\{
    \RED^p_{C\theta}[\Rscr, \Sscr] \}_\theta$ is the smallest family
    that satisfies the following: for every $\theta$ and for every
    derivation $\Pi$ of type $C\theta[I/p]$, $\Pi \in
    \RED^p_{C\theta}[\Rscr, \Sscr]$ if one of the following holds:
    \begin{enumerate}
    \item $\Pi$ ends with $mc$, and all its reducts are in
      $\RED^p_{C\theta}[\Rscr, \Sscr]$.
    \item $\Pi$ ends with $\oimpR$, i.e.,
$$
\infer[\oimpR] {\Seq \Gamma {B \oimp D[ I/ p]}} {\deduce{\Seq
    {\Gamma, B}{D[ I/ p]}}{\Pi'} }
$$
$\Pi' \in \RED^p_D [\Rscr, \Sscr ] $, and for every substitution
$\rho$ and for every derivation $\Xi$ of $\Seq \Delta {B\rho }$ in
$\Rscr$, we have $mc(\Xi,\Pi'\rho) \in \RED^p_{D\rho}[\Rscr, \Sscr]$.

\item $\Pi$ ends with a rule $\rho$ other than $mc$ and $\oimpR$, the
  minor premise derivations of $\Pi$ are normalizable, and its major
  premise derivations are in the parametric reducibility sets of the
  appropriate types.
\end{enumerate}
\end{enumerate}

\end{definition}

From now on, when we write $\RED_C^p[\Rscr,\Sscr]$, it is understood
that $p$ is a co-inductive predicate, $C$ is dominated by $p$, $\Rscr$
is a set of derivations, and $\Sscr$ is an $I$-indexed set of
normalizable derivations, for some $I$.

Note that in Definition~\ref{def:param red} (3), we define
simultaneously the reducibility sets $\RED^P_{C\theta}[\Rscr,\Sscr]$
for all substitution $\theta$.  This is because in the case the
derivation $\Pi$ ends with $\eqL$, reducibility of $\Pi$ may depend on
the reducibility of (possibly infinitely many) derivations which are
in $\RED^p_{C\rho}[\Rscr, \Sscr]$ for some $\rho$. Since $C\rho$ is of
the same size as $C\theta$, its parametric reducibility set may not
yet be defined by induction on the size. We therefore need to define
this and other reducibility sets which are indexed by instances of $C$
simultaneously.

As with the definition of normalizability, clause (3) in
Definition~\ref{def:param red} defines a monotone fixed point operator
(assuming the parametric reducibility sets of smaller types have been
fixed), and it therefore induces a well-founded tree of derivations in
the family $\{\RED^p_{C\theta}[\Rscr,\Sscr] \}_\theta$.  It is
immediately clear from the definition that a derivation $\Pi'$ in the
family is a predecessor of $\Pi$ (in the same family) if either
\begin{itemize}
\item $\Pi$ ends with a left rule and $\Pi'$ is a major premise of
  $\Pi$, or
\item $\Pi$ ends with $mc$ and $\Pi'$ is a reduct of $\Pi$.
\end{itemize}
We shall call the well-founded tree of successive predecessors of a
derivation $\Pi$ in the family $\{\RED^p_{C\theta}[\Rscr,\Sscr]
\}_\theta$ the \emph{parametric reduction} of $\Pi$. As with the
normalization of a derivation, it has an associated induction
principle.  Note that, however, this ordering on derivations is
defined only in the case where $C$ satisfies the syntactic condition
defined in Definition~\ref{def:param red}(3), i.e., it contains at
least an occurrence of $p$ and is not an atomic formula.

The definition of parametric reducibility can be seen as defining a
function on $S$-indexed sets. In the case where the type of the
parametric reducibility set is the body of the co-inductive definition
for $p$, this function corresponds to the underlying fixed point
operator for $p$.  We shall now define a class of $S$-indexed sets
which are closed under this fixed point operator. These sets, called
saturated sets in the following, can be seen as post-fixed points of
the fixed point operator for the co-inductive definition for $p$.
They will be used in defining the reducibility of derivations
involving the co-induction rule $\coindR$.

\begin{definition}
  \label{def:saturated sets}
  Let $\forall \vec x.\ p\,\vec x \defnu B\,p\,\vec x$ be an
  co-inductive definition. Let $S$ be a closed term of the same type
  as $p$.  Let $\Pi_S$ be a derivation of $\Seq {S\,\vec x}
  {B\,S\,\vec x}$.  Let $\Rscr$ be a set of derivations.  An
  $S$-indexed set $\Sscr$ is a \emph{$(\Rscr,\Pi_S)$-saturated set} if
  the following hold:
  \begin{enumerate}
  \item Every derivation in $\Sscr$ is normalizable.
  \item If $\Pi \in \Sscr$ then $\Pi\theta \in \Sscr$ for any
    $\theta$.
  \item If $\Pi \in \Sscr$ and $\Pi$ is of type $S\,\vec u$ for some
    $\vec u$, then $mc(\Pi, \Pi_S[\vec u/\vec x]) \in \RED_{B\,p\,\vec
      u}^p[\Rscr,\Sscr]$.
  \end{enumerate}
\end{definition}

\subsection{Reducibility}

We now define a family of reducible sets $\RED_i$ of level $i$.

\begin{definition}
  \label{def:red} \emph{Reducibility.}
  We define the family $\{\RED_i \}_i$ of \emph{reducible sets} of
  level $i$ by induction on $i$. In defining the reducible set of
  level $i$, we assume that reducible sets of smaller levels have been
  defined.  Each set $\RED_i$ the smallest set that satisfies the
  following: For every derivation $\Pi$ of level $i$, $\Pi \in \RED_i$
  if one of the following holds:
  \begin{enumerate}
  \item $\Pi$ ends with $mc$ and all its reducts are in $\RED_i$.
  \item $\Pi$ is
$$
\infer[\oimpR,] {\Seq \Gamma {B \oimp D}} {\deduce{\Seq {\Gamma,
      B}{D}}{\Pi'} }
$$
$\Pi' \in \RED_{\level D}$, and for every substitution $\theta$ and
for every derivation $\Xi$ of $\Seq \Delta {B\theta }$ in
$\RED_{\level{B\theta}}$, we have $mc(\Xi,\Pi'\theta) \in
\RED_{\level{D\theta}}$.

\item $\Pi$ ends with $\coindR$, i.e., $\Pi$ is
$$
\infer[\coindR] {\Seq{\Gamma}{p\,\vec t}} {\deduce{\Seq \Gamma S\,
    \vec t}{\Pi'} & \deduce{\Seq {S\,\vec x} {B\,S\,\vec x}}{\Pi_S} }
$$
where $p\,\vec x \defnu B\,p \, \vec x,$ $\Pi'$ and $\Pi_S$ are
normalizable, and there exists a $(\Rscr,\Pi_S)$-saturated set
$\Sscr$, where $\Rscr = \bigcup \{\RED_j \mid j < i \},$ such that
$\Pi' \in \Sscr$.

\item $\Pi$ ends with a rule $\rho$ other than $mc$ and $\oimpR$, the
  minor premise derivations of $\Pi$ are normalizable, and its major
  premise derivations are in the reducibility sets of the appropriate
  levels.
\end{enumerate}
\end{definition}

As in the definition of normalizability, each clause in the definition
of reducibility asserts that a derivation is reducible provided that
certain other derivations, called the predecessors of the derivation,
are reducible. The definition of reducibility induces a well-founded
ordering on derivations in the reducibility sets. We shall refer to
this ordering as \emph{reducibility ordering} and the induced
well-founded tree as the \emph{reduction} of the derivation.  We say
that a derivation is \emph{reducible} if it is in $\RED_i$ for some
$i$.

\begin{lemma}
  \label{lm:red-norm}
  Every reducible derivation is normalizable.
\end{lemma}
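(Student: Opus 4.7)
The plan is to argue by induction on the well-founded reducibility ordering associated with $\Pi$, with an outer induction on the level $i$ for which $\Pi\in\RED_i$, performing case analysis on which clause of Definition~\ref{def:red} witnesses $\Pi\in\RED_i$. In each case the goal is to show that $\Pi$ falls under one of the two clauses of Definition~\ref{def:norm}.

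Clauses (1), (3), and (4) go through essentially by unpacking definitions. For clause (1), $\Pi$ ends with $\mc$ and every reduct lies in $\RED_i$; each reduct is a predecessor of $\Pi$ in the reducibility ordering, the inner induction hypothesis gives that every reduct is normalizable, and Definition~\ref{def:norm}(1) applies. For clause (4), the minor premises are normalizable by hypothesis while the major premises are reducible predecessors of $\Pi$ and hence normalizable by the induction hypothesis; Definition~\ref{def:norm}(2) then yields normalizability of $\Pi$. Clause (3), where $\Pi$ ends in $\coindR$, is the cheapest of all: the premises $\Pi'$ and $\Pi_S$ are already assumed to be normalizable in the very statement of the clause, so normalizability of $\Pi$ is immediate from Definition~\ref{def:norm}(2).

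The only case requiring a little care is clause (2), where $\Pi$ ends with $\oimpR$ with conclusion $B\oimp D$ and premise $\Pi'\in\RED_{\level{D}}$. Since $\level{B\oimp D}=\max(\level{B}+1,\level{D})$, we have $\level{D}\leq i$; if $\level{D}<i$ I appeal to the outer induction on level, and if $\level{D}=i$ then $\Pi'$ is a predecessor of $\Pi$ in the same reducibility set and I appeal to the inner induction. Either way $\Pi'$ is normalizable, and since $\Pi$ does not end with $\mc$, Definition~\ref{def:norm}(2) finishes the case. The only mild obstacle is bookkeeping these two inductions together and confirming that the relevant subderivations really are predecessors in the well-founded reducibility tree; notably, the additional universally quantified condition in clause (2) about closure of $mc(\Xi,\Pi'\theta)$ under arbitrary reducible $\Xi$ plays no role in this lemma, because only normalizability of $\Pi'$ itself is needed to witness the normalizability of $\Pi$.
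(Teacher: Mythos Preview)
Your proof is correct and matches the paper's approach, which is a terse ``induction on the reduction of $\Pi$'' that singles out only the $\coindR$ case (where normalizability of the premises is explicitly built into clause~(3) of Definition~\ref{def:red}). Your two-layer induction on level and then on the reducibility ordering is a faithful unpacking of that single phrase; the same case split on whether the sub-derivation lives at level $i$ or strictly below, which you spell out carefully for clause~(2), is also implicitly needed in clause~(4) for right rules such as $\landR$, $\lorR$, and $\indR$, whose major premises may sit at a strictly smaller level.
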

\begin{proof}
  Given a reducible derivation $\Pi$, it is straightforward to show by
  induction on its reduction that it is normalizable.  In the case
  where $\Pi$ ends with $\coindR$, by the definition of saturated sets
  (Definition~\ref{def:saturated sets}) and reducibility
  (Definition~\ref{def:red}), its premise derivations are
  normalizable, and therefore $\Pi$ is also normalizable.  \qed
\end{proof}

\begin{lemma}
  \label{lm:red-subst}
  If $\Pi$ is reducible then for every derivation $\theta$, $\Pi
  \theta$ is also reducible.
\end{lemma}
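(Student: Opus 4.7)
I would proceed by transfinite induction on the reduction of $\Pi$, i.e., on the well-founded tree of successive predecessors induced by Definition~\ref{def:red}, case-analyzing on which clause places $\Pi$ in a reducibility set. A useful preliminary observation is that the level of a formula is invariant under substitution (Definition~\ref{def:level} depends only on the logical structure and the predicate symbols), so $\Pi\theta$ will be targeted at the same $\RED_i$ as $\Pi$.

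In the first clause ($\Pi$ ends with $\mc$), $\Pi\theta$ also ends with $\mc$. By Lemma~\ref{lm:reduct subst}, every reduct of $\Pi\theta$ is of the shape $\Pi'\theta$ for some reduct $\Pi'$ of $\Pi$, and $\Pi'$ is a predecessor of $\Pi$ in the reducibility ordering. The induction hypothesis then gives $\Pi'\theta$ reducible, so $\Pi\theta$ satisfies clause~1 and is reducible. In the last clause (any rule other than $\mc$, $\oimpR$, $\coindR$), the major premises of $\Pi\theta$ are either substitution-instances of major premises of $\Pi$ (handled by IH, as those are predecessors) or, in the $\eqL$ case, are of the form $\Pi^{\theta\circ\rho'}$ which by Definition~\ref{def:subst}(1) are themselves among the major premises of $\Pi$ and hence reducible without needing IH; minor premises remain normalizable by Lemma~\ref{lm:subst-norm}.

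The subtle case is $\oimpR$. Here $\Pi\theta$ ends with $\oimpR$ with premise $\Pi'\theta$, and since $\Pi'$ is a predecessor of $\Pi$, the IH yields $\Pi'\theta \in \RED_{\level{D\theta}}$. For the closure condition, I must show that for every substitution $\rho$ and every $\Xi \in \RED_{\level{B\theta\rho}}$ of type $B\theta\rho$, the derivation $mc(\Xi,(\Pi'\theta)\rho)$ is reducible. This follows by applying the closure condition that $\Pi$ already satisfies to the composed substitution $\theta\circ\rho$ (which makes $\Xi$ a derivation of the appropriate type $B(\theta\circ\rho)$), and then invoking Lemma~\ref{lm:subst-drv-comp} to rewrite $\Pi'(\theta\circ\rho)$ as $(\Pi'\theta)\rho$.

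For the $\coindR$ clause, $\Pi\theta$ ends with $\coindR$, whose left premise is $\Pi'\theta$ while the right premise $\Pi_S$ mentions only the internal eigenvariables of the rule and is untouched by $\theta$. Lemma~\ref{lm:subst-norm} gives normalizability of $\Pi'\theta$ and $\Pi_S$. The saturated set $\Sscr$ witnessing reducibility of $\Pi$ satisfies $\Pi' \in \Sscr$, and by clause~2 of Definition~\ref{def:saturated sets} it is closed under substitution, so $\Pi'\theta \in \Sscr$; the same $\Sscr$ therefore witnesses reducibility of $\Pi\theta$. The main obstacle I anticipate is the $\oimpR$ case, where one must be careful to quantify over \emph{all} substitutions in the closure condition and to identify the right instance of the original condition via composition; the $\coindR$ case is essentially given to us by the stipulated closure property of saturated sets.
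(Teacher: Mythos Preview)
Your proposal is correct and follows essentially the same approach as the paper: induction on the reduction of $\Pi$, with the $\mc$ case handled via Lemma~\ref{lm:reduct subst} and the $\coindR$ case handled by reusing the same saturated set (invoking its closure under substitution). The paper singles out only $\mc$ and $\coindR$ as the non-trivial cases and omits the $\oimpR$ analysis you spell out; your treatment of that case via composition of substitutions and Lemma~\ref{lm:subst-drv-comp} is exactly the intended argument.
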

\begin{proof}
  The proof is by induction on the reduction of $\Pi$. We consider two
  non-trivial cases here: the case where $\Pi$ ends with $mc$ and the
  case where it ends with $\coindR$.  For the former, suppose that
  $\Pi = mc(\Pi_1,\ldots,\Pi_n,\Pi')$.  By Lemma~\ref{lm:reduct
    subst}, every reduct of $\Pi\theta$, say $\Xi$, is the result of
  substituting a reduct of $\Pi$.  By induction hypothesis, every
  reduct of $\Pi\theta$ is reducible, hence $\Pi\theta$ is also
  reducible.

  We now consider the case $\Pi$ ends with $\coindR$, i.e., $\Pi$ is
$$
\infer[\coindR] {\Seq{\Gamma}{p\,\vec t}} {
  \deduce{\Seq{\Gamma}{S\,\vec t}}{\Pi'} & \deduce{\Seq{S\,\vec
      x}{B\,S\,\vec x}}{\Pi_S} }
$$
where $p\,\vec x \defnu B\,p\,\vec x$. Let $i$ be the level of $p$ and
let $\Rscr = \bigcup \{\RED_j \mid j < \level{p} \}$.  By the
definition of reducibility, we have that $\Pi'$ and $\Pi_S$ are both
normalizable, and moreover, there exists a $(\Rscr, \Pi_S)$-saturated
set $\Sscr$, such that $\Pi' \in \Sscr$.  Suppose that $\vec u = (\vec
t)\theta$.  To show that $\Pi\theta$ is reducible, we must first show
that both $\Pi'\theta$ and $\Pi_S$ are normalizable.  This is
straightforward from the fact that both $\Pi'$ and $\Pi_S$ are
normalizable and that normalisation is closed under substitutions
(Lemma~\ref{lm:subst-norm}).  It remains to show that there exists a
$(\Rscr,\Pi_S)$-saturated set $\Sscr'$ such that $\Pi'\theta \in
\Sscr'$. Let $\Sscr' = \Sscr$. Since saturated sets are closed under
substitution and $\Pi' \in \Sscr'$, we have $\Pi'\theta \in \Sscr'$.
\qed
\end{proof}

\begin{lemma}
  \label{lm:red equal param red}
  Let $p$ be a co-inductive predicate, let $S$ be a closed term of the
  same type as $p$. Let $\Rscr = \bigcup \{\RED_j \mid j <
  \level{p}\},$ let
$$\Sscr = \bigcup \{\Xi \mid \hbox{$\Xi$ is reducible and has type $S\,\vec t$ for some $\vec t$} \} $$
and let $C$ be a formula dominated by $p$.  Then for every reducible
derivation $\Pi$ of type $C[S/p]$, $\Pi \in \RED^p_{C}[\Rscr, \Sscr]$.
\end{lemma}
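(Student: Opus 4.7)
The plan is to argue by a doubly inductive scheme: an outer induction on the size of $C$, combined, in the non-base case, with an inner induction on the well-founded reducibility ordering of $\Pi$ (which exists because $\Pi$ is reducible). The outer split mirrors the three clauses in Definition~\ref{def:param red}.

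The two base clauses should fall out immediately. If $p$ does not occur in $C$, then since $p$ dominates $C$ the remark following Definition~\ref{def:level} gives $\level{C} < \level{p}$; the reducible $\Pi$ of type $C[S/p] = C$ thus lies in $\RED_{\level{C}} \subseteq \Rscr$, and clause~(1) places it in $\restrict{\Rscr}{C} = \RED^p_C[\Rscr,\Sscr]$. If $C = p\,\vec u$, then $\Pi$ has type $S\,\vec u$ and is reducible, so $\Pi \in \restrict{\Sscr}{S\,\vec u} = \RED^p_C[\Rscr,\Sscr]$ by the definition of $\Sscr$ together with clause~(2).

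In the remaining case, with $C$ non-atomic and containing $p$, I would case-split on the last rule of $\Pi$, using the inner IH on its reducibility-predecessors. When $\Pi$ ends with $\mc$, each reduct is a reducible predecessor of the same type $C[S/p]$ and the inner IH together with clause~(3a) suffices. When $\Pi$ ends with a right rule other than $\oimpR$ (the atomic-conclusion cases $\init$, $\eqR$, $\indR$, $\coindR$ cannot arise since $C[S/p]$ is non-atomic), each major premise has type a proper subformula of $C$, so the outer IH applies directly; when $\Pi$ ends with a left rule, each major premise has type $C$ or $C\rho$ (in the $\eqL$ case) and is a reducibility-predecessor, so the inner IH applies. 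In either situation, Definition~\ref{def:red}(4) ensures the minor premises are normalizable, so clause~(3c) is verified.

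I expect the main obstacle to be the $\oimpR$ subcase, because it requires bridging the level-indexed reducibility enjoyed by $\Pi$ with the parametric $[\Rscr,\Sscr]$-indexed reducibility that clause~(3b) demands. Writing $C = C_1 \oimp C_2$, stratification yields $\level{C_1}+1 \leq \level{C} \leq \level{p}$, forcing $p$ vacuous in $C_1$ and $C_1[S/p] = C_1$. The premise $\Pi'$ of the $\oimpR$ step is reducible of type $C_2[S/p]$, so the outer IH gives $\Pi' \in \RED^p_{C_2}[\Rscr,\Sscr]$. For the cut-closure half I would fix an arbitrary substitution $\rho$ and $\Xi \in \Rscr$ of type $C_1\rho$; since levels are invariant under substitution, $\level{C_1\rho} = \level{C_1} < \level{p}$, so $\Xi \in \RED_{\level{C_1\rho}}$, and then Definition~\ref{def:red}(2) applied to $\Pi$ produces $\mc(\Xi,\Pi'\rho) \in \RED_{\level{C_2\rho}}$. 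A final appeal to the outer IH on the instance $C_2\rho$ (same size as $C_2$, hence smaller than $C$) should deposit this multicut in $\RED^p_{C_2\rho}[\Rscr,\Sscr]$, completing the verification of clause~(3b) and with it the inductive step.
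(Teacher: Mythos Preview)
Your argument is correct, and the case analysis mirrors the paper's. The one difference is structural: the paper proceeds by a \emph{single} induction on the reduction of $\Pi$, with no outer induction on the size of $C$. The key observation that makes this possible is that, in the $\oimpR$ case, Definition~\ref{def:red}(2) already declares both $\Pi'$ and every $mc(\Xi,\Pi'\theta)$ (for reducible $\Xi$ of type $C_1\theta$) to be reducibility-predecessors of $\Pi$; since any $\Xi \in \Rscr$ of type $C_1\rho$ is in particular reducible, the single induction hypothesis applies directly to these derivations and delivers their membership in $\RED^p_{C_2}[\Rscr,\Sscr]$ and $\RED^p_{C_2\rho}[\Rscr,\Sscr]$ without a separate descent on $|C|$. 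The same remark covers the other right-rule cases: for $\landR$, $\lorR$, $\forallR$, $\existsR$ the major premises are reducibility-predecessors by clause~(4), so again the single induction hypothesis suffices. Your outer layer on $|C|$ is thus harmless but redundant---the reducibility ordering already absorbs both the ``same $C$'' descent (reducts, left-rule premises, the $\eqL$ instances $C\rho$) and the ``smaller $C$'' descent (right-rule premises). What your decomposition buys is a slightly cleaner separation between the clauses of Definition~\ref{def:param red}, at the cost of having to phrase the inner hypothesis so that it ranges over all $C'$ of the given size (as you implicitly do when invoking it on $C\rho$).
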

\begin{proof}
  By induction on the reduction of $\Pi$.  If $p$ does not occur in
  $C$ then $\Pi \in \Rscr$, since in this case $\level{C} < \level{p}$
  (recall that $C$ is dominated by $p$), therefore $\Pi \in
  \RED^p_C[\Rscr, \Sscr]$.  If $C = p$ then $\Pi \in \Sscr$ (since
  $\Pi$ is reducible), hence $\Pi \in \RED^p_C[\Rscr, \Sscr]$.  The
  other cases follow from straightforwardly from induction hypothesis.
  We show here the case where $\Pi$ ends with $\oimpR$.
$$
\infer[\oimpR] {\Seq{\Gamma}{B \oimp D[S/p]}} {\deduce{\Seq{\Gamma,
      B}{D[S/p]}}{\Pi'}}
$$
Note that in this case $C = B \oimp D$, and $p$ does not occur in $B$
by the restriction on $C$ ($p$ dominates $C$).  Since $\Pi$ is
reducible, we have that $\Pi'$ is a reducible predecessor of $\Pi$,
and for every substitution $\theta$ and every reducible derivation
$\Xi$ of type $B\theta$, we have $mc(\Xi, \Pi'\theta)$ is also a
reducible predecessor of $\Pi$.  It thus follows from induction
hypotheses that $ \Pi' \in \RED_D^p[\Rscr,\Sscr] $ and for every $\Xi
\in \Rscr$ of type $B\theta$ (which is reducible by the definition of
$\Rscr$), $mc(\Xi, \Pi'\theta) \in \RED_{D\theta}^p[\Rscr, \Sscr]$.
Therefore, by the definition of parametric reducibility, we have that
$\Pi \in \RED_C^p[\Rscr, \Sscr]$.  \qed
\end{proof}


\subsection{Reducibility of unfolded derivations}

The following lemmas state that reducibility is preserved by
(co)inductive unfolding, under certain assumptions.

\begin{lemma} \emph{Inductive unfolding.}
  \label{lm:ind unfold}
  Let $p\,\vec{x} \defmu B\,p\,\vec{x}$ be an inductive definition.
  Let $\Pi_S$ be a reducible derivation of $\Seq{B\,S\,
    \vec{x}}{S\,\vec{x}}$.  Let $\Pi$ be a reducible derivation of
  $\Seq{\Gamma}{C}$ such that $p$ dominates $C$. Suppose the following
  statements hold:
  \begin{enumerate}
  \item For every derivation $\Xi$ of $\Seq{\Delta}{B\,p\,\vec{u}}$,
    if $\mu(\Xi,\Pi_S)$ is reducible, then the derivation
    $mc(\mu(\Xi,\Pi_S), \Pi_S[\vec u/ \vec x])$ is reducible.

  \item For every reducible derivation $\Xi$ of $\Seq{\Delta}{S\,\vec
      u}$ the derivation
    $mc(\Xi,\idrv_{S\,\vec u})$ is reducible.

  \item The derivation $ind(\Pi_S, \idrv_{S\,\vec u})$
    is reducible, for every $\vec u$ of the appropriate types.

\end{enumerate}
Then the derivation $\mu^p_C(\Pi,\Pi_S)$ of $\Seq{\Gamma}{C[S/p]}$ is
reducible.
\end{lemma}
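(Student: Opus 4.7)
The plan is to proceed by induction on the reduction of $\Pi$ (the well-founded ordering on reducible derivations induced by Definition~\ref{def:red}), with case analysis on the last rule of $\Pi$ and on whether $p$ is vacuous in $C$. If $p$ is vacuous in $C$, then $\mu_C^p(\Pi, \Pi_S) = \Pi$ is reducible by assumption. If $\Pi$ ends with $\init$ and $C = p\,\vec t$, then $\mu_C^p(\Pi, \Pi_S) = ind(\Pi_S, \idrv_{S\,\vec t})$ is reducible by assumption~(3). If $\Pi$ ends with $\indR$, then $\mu_{p\,\vec t}^p(\Pi, \Pi_S) = mc(\mu_{B\,p}^p(\Pi', \Pi_S), \Pi_S[\vec t/\vec x])$; the induction hypothesis on the predecessor $\Pi'$ yields reducibility of the inner unfolding and assumption~(1) lifts this to the whole multicut. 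A $\coindR$-ending $\Pi$ has conclusion $q\,\vec t$ for some co-inductive $q \neq p$, so $p$ cannot appear in $C$ and this case collapses to the vacuous one.

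When $\Pi$ ends with a left rule ($\oimpL$, $\indL$, $\eqL$, \dots) or with a right rule other than $\oimpR$ and $\coindR$, the unfolding preserves the last rule and applies $\mu$ to each major premise while leaving minor premises unchanged. I verify clause~(4) of Definition~\ref{def:red}: the minor premises coincide with those of $\Pi$ and are therefore normalizable (by Lemma~\ref{lm:red-norm}), and the induction hypothesis applied to each major premise of $\Pi$ provides reducibility of the corresponding major premise of $\mu_C^p(\Pi, \Pi_S)$. The $\oimpR$ case requires more care: given $\Pi = \oimpR(\Pi')$ with conclusion $C_1 \oimp C_2$ (where $p$ is vacuous in $C_1$ because $p$ dominates $C$), I must check both clauses of Definition~\ref{def:red}(2). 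Reducibility of $\mu_{C_2}^p(\Pi', \Pi_S)$ follows from the induction hypothesis; for every substitution $\theta$ and every reducible $\Xi$ of type $C_1\theta$, the derivation $mc(\Xi, \Pi'\theta)$ is a predecessor of $\Pi$ in the reducibility ordering and hence reducible, so the induction hypothesis gives reducibility of $\mu_{C_2}^p(mc(\Xi, \Pi'\theta), \Pi_S) = mc(\Xi, \mu_{C_2}^p(\Pi'\theta, \Pi_S))$, which Lemma~\ref{lm:ind unfold subst} rewrites as $mc(\Xi, \mu_{C_2}^p(\Pi', \Pi_S)\theta)$, as required.

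The $mc$ case is the main obstacle. Here $\mu_C^p(\Pi, \Pi_S)$ also ends with $mc$, so reducibility requires that every reduct be reducible. For the generic sub-case in which either $C$ is non-atomic or the rightmost premise of $\Pi$ does not end with $\init$ or $\indR$, Lemmas~\ref{lm:unfold-reduct-ind} and~\ref{lm:unfold-reduct-ind2} guarantee that every reduct of $\mu_C^p(\Pi, \Pi_S)$ has the form $\mu_C^p(\Pi'', \Pi_S)$ for some reduct $\Pi''$ of $\Pi$, and the induction hypothesis on $\Pi''$ closes the argument. Two residual sub-cases remain, both with $C = p\,\vec t$. If the rightmost premise of $\Pi$ is $\init$, then $\Pi = mc(\Pi_1, \init)$ and $\mu_C^p(\Pi, \Pi_S) = mc(\Pi_1, ind(\Pi_S, \idrv_{S\,\vec t}))$; the only applicable cut reduction is the essential $*/\indL$ case, which yields $mc(\mu_{p\,\vec t}^p(\Pi_1, \Pi_S), \idrv_{S\,\vec t})$, and reducibility follows from the induction hypothesis on the predecessor $\Pi_1$ (the $-/\init$ reduct of $\Pi$) together with assumption~(2). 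If the rightmost premise of $\Pi$ is $\indR(\Pi'')$, then only the $-/\mc$ reduction applies to $\mu_C^p(\Pi, \Pi_S)$, and a direct calculation shows that the resulting reduct equals $\mu_{p\,\vec t}^p(\Pi^*, \Pi_S)$ for the $-/\circR$ reduct $\Pi^*$ of $\Pi$ obtained by pushing the outer $mc$ through $\indR$; the induction hypothesis on $\Pi^*$ finishes the case. These residual sub-cases are the principal hurdle because they lie outside the uniform treatment afforded by the auxiliary lemmas and must be verified by direct computation with the specific cut reduction rules.
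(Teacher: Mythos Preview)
Your proposal is correct and follows essentially the same approach as the paper: induction on the reduction of $\Pi$, with the same case analysis and the same use of Lemmas~\ref{lm:ind unfold subst}, \ref{lm:unfold-reduct-ind}, and~\ref{lm:unfold-reduct-ind2} to handle the $\mc$ case. The only cosmetic difference is in the residual $\indR$ sub-case of $\mc$: the paper applies the induction hypothesis to the inner derivation $\Xi_1 = mc(\Pi_1,\ldots,\Pi_m,\Pi'')$ and then invokes assumption~(1) explicitly, whereas you apply the induction hypothesis directly to the reduct $\Pi^* = \indR(\Xi_1)$ and observe that $\mu(\Pi^*,\Pi_S)$ coincides with the unique $-/\mc$ reduct; these are equivalent since the $\indR$ clause of the induction itself appeals to assumption~(1).
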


\begin{proof}
  By induction on the reduction of $\Pi$. We show the non-trivial
  cases, assuming that $p$ is not vacuous in $C$.  To simplify
  presentation, we shall write $\mu(.,.)$ instead of $\mu_F^p(.,.)$,
  since in each of the following cases, it is easy to infer from the
  context which $F$ we are referring to.

  \begin{enumerate}
  \item Suppose $\Pi$ ends with $\init$ rule on $p\,\vec{u}$.  Then
    $\mu(\Pi,\Pi_S) = ind(\Pi_S, \idrv_{S\,\vec u})$, which is
    reducible by assumption.


  \item Suppose $\Pi$ ends with $\oimpR$, that is, $C = C_1 \oimp
    C_2$.
$$
\infer[\oimpR] {\Seq{\Gamma}{C_1 \oimp C_2}}
{\deduce{\Seq{\Gamma,C_1}{C_2}}{\Pi'}} \enspace
$$
By the restriction on $C$, we know that $p$ is vacuous in $C_1$, hence
$C[S/p] = C_1 \oimp C_2[S/p]$.  By the definition of reducibility, the
derivation $\Pi'$ is reducible and for every substitution $\theta$ and
every reducible derivation $\Psi$ of $\Seq{\Delta}{C_1\theta}$, the
derivation $\Xi$
$$
\infer[\mc] {\Seq{\Delta, \Gamma\theta}{C_2\theta}} {
  \deduce{\Seq{\Delta}{C_1\theta}}{\Psi} & \quad & \deduce{\Seq
    {\Gamma\theta,C_1\theta} {C_2\theta}}{\Pi'\theta} }
$$
is reducible.  We want to show that the derivation $\mu(\Pi,\Pi_S)$
$$
\infer[\oimpR] {\Seq{\Gamma}{C_1\,p\oimp C_2[S/p]}}
{\deduce{\Seq{\Gamma,C_1\,p} {C_2\,S}}{\mu(\Pi',\Pi_S)}}
$$
is reducible. This reduces to showing that $\mu(\Pi',\Pi_S)$ is
reducible and that
$$
\infer[\mc] {\Seq{\Delta, \Gamma\theta}{C_2\theta [S/p]}} {
  \deduce{\Seq{\Delta}{C_1\theta}}{\Psi} & \quad &
  \deduce{\Seq{\Gamma\theta,C_1\theta} {C_2\theta
      [S/p]}}{\mu(\Pi',\Pi_S)\theta} }
$$
is reducible. The first follows from induction hypothesis on $\Pi'$.
For the second derivation, we know from Lemma~\ref{lm:ind unfold
  subst} that
$$
\mu(\Pi',\Pi_S)\theta = \mu(\Pi'\theta,\Pi_S).
$$
It follows from this and the definition of inductive unfolding
(Definition~\ref{def:inductive-unfolding}) that
$$mc(\Psi, \mu(\Pi',\Pi_S)\theta) = mc(\Psi, \mu(\Pi'\theta, \Pi_S)) = \mu(mc(\Psi, \Pi'\theta), \Pi_S) = \mu(\Xi, \Pi_S)$$
We can apply induction hypothesis on $\Xi$, since it is a predecessor
of $\Pi$, to establish the reducibility of $\mu(\Xi, \Pi_S)$. This,
together with reducibility of $\mu(\Pi',\Pi_S)$ implies that
$\mu(\Pi,\Pi_S)$ is reducible.

\item Suppose $\Pi$ ends with $\indR$ rule on $p\,\vec{u}$.
$$
\infer[\indR] {\Seq{\Gamma}{p\,\vec{u}}} {
  \deduce{\Seq{\Gamma}{B\,p\,\vec{u}}}{\Pi'} } \enspace
$$
Then $\mu(\Pi,\Pi_S)$ is the derivation
$$
\infer[\mc] {\Seq{\Gamma}{S\,\vec{u}}} {
  \deduce{\Seq{\Gamma}{B\,S\,\vec{u}}}{\mu(\Pi',\Pi_S)} & \quad &
  \deduce{\Seq{B\,S\,\vec{u}}{S\,\vec{u}}}{\Pi_S[\vec u/ \vec x]} }
\enspace
$$
The derivation $\mu(\Pi',\Pi_S)$ is reducible by induction hypothesis.
This, together with assumption (1) of the lemma, imply that
$\mu(\Pi,\Pi_S)$ is reducible.

\item Suppose $\Pi$ ends with $\mc$.
$$
\infer[\mc]{\Seq{\Delta_1,\ldots,\Delta_m,\Gamma'}{C}}
{\deduce{\Seq{\Delta_1}{D_1}} {\Pi_1} & \cdots &
  \deduce{\Seq{\Delta_n}{D_m}} {\Pi_n} &
  \deduce{\Seq{D_1,\ldots,D_m,\Gamma'}{C}} {\Pi'}} \enspace
$$
Then $\mu(\Pi,\Pi_S)$ is the derivation
$$
\infer[\mc]{\Seq{\Delta_1,\ldots,\Delta_m,\Gamma'}{C[S/p]}}
{\deduce{\Seq{\Delta_1}{D_1}} {\Pi_1} & \cdots &
  \deduce{\Seq{\Delta_n}{D_m}} {\Pi_n} &
  \deduce{\Seq{D_1,\ldots,D_m,\Gamma'}{C[S/p]}} {\mu(\Pi',\Pi_S)}}
\enspace
$$
By the definition of reducibility, every reduct of $\Pi$ is
reducible. We need to show that every reduct of $\mu(\Pi,\Pi_S)$ is
reducible.

From Lemma~\ref{lm:unfold-reduct-ind}, we know that for the case where
$C$ is not atomic every reduct of $\mu(\Pi,\Pi_S)$ corresponds to some
reduct of $\Pi$.  Similarly, for the case where $\Pi'$ ends with a
rule other than $\init$ or $\indR$, by
Lemma~\ref{lm:unfold-reduct-ind2}, the reducts of $\mu(\Pi, \Pi_S)$
are in one-to-one correspondence with the reducts of $\Pi$.  Therefore
in these cases, the inductive hypothesis can be applied to show the
reducibility of each reduct of $\mu(\Pi,\Pi_S)$.  This leaves us the
following two cases, where $C = p\,\vec u$ and $\Pi'$ ends with either
$\indR$ or $\init$ rules.
\begin{itemize}
\item Suppose $\Pi'$ is the derivation
$$
\infer[\indR] {\Seq{D_1,\dots,D_m,\Gamma'} {p\,\vec{u}}} {
  \deduce{\Seq{D_1,\dots,D_m,\Gamma'} {B\,p\,\vec{u}}}{\Pi''} }
\enspace
$$
Let $\Xi_1$ be the derivation
$$
\infer[\mc] {\Seq{\Delta_1,\dots,\Delta_m,\Gamma'} {B\,p\,\vec{u}}} {
  {\left\{\raisebox{-1.5ex}{\deduce{\Seq{\Delta_j}{D_j}}{\Pi_j}}
    \right\}_{j \in \{1,\dots,m\} }} & \raisebox{-1.5ex}{
    \deduce{\Seq{D_1,\dots,\Gamma'} {B\,p\,\vec{u}}}{\Pi''}} }
$$
then the derivation
$$
\infer[\indR] {\Seq{\Delta_1,\dots,\Delta_m,\Gamma'} {p\,\vec{u}}} {
  \deduce{\Seq{\Delta_1,\dots,\Delta_m,\Gamma'}
    {B\,p\,\vec{u}}}{\Xi_1} } \enspace
$$
is a reduct of $\Pi$ (by the reduction rule $-/\indR$), and therefore
by the definition of reducibility both this reduct and $\Xi_1$ are
reducible predecessors of $\Pi$.  Let $\Psi$ be the derivation
$$
\infer[\mc] {\Seq{D_1,\dots,\Gamma'} {S\,\vec{u}} } {
  \deduce{\Seq{D_1,\dots,\Gamma'} {B\,S\,\vec{u}}}{\mu(\Pi'',\Pi_S)} &
  \deduce{\Seq{B\,S\,\vec{u}} {S\,\vec{u}}}{\Pi_S'} }
$$
Then the derivation $\mu(\Pi,\Pi_S)$ is the following
$$
\infer[\mc] {\Seq{\Delta_1,\dots,\Delta_m,\Gamma'} {S\,\vec{u}}} {
  {\left\{\raisebox{-1.5ex}{\deduce{\Seq{\Delta_j}{D_j}}{\Pi_j}}
    \right\}_{j \in \{1,\dots,m\} }} & \raisebox{-1.5ex}{\deduce
    {\Seq{D_1,\dots,\Gamma'} {S\,\vec{u}} } {\Psi } } } \enspace
$$
The only applicable reduction rule to $\mu(\Pi,\Pi_S)$ is $-/\mc$,
which gives us the reduct $\Xi$
$$
\infer[\mc] {\Seq{\Delta_1,\dots,\Delta_m,\Gamma'} {S\,\vec{u}} } {
  \deduce {\Seq{\Delta_1,\dots,\Delta_m,\Gamma'} {B\,S\,\vec{u}}}
  {\Psi'} & \deduce{\Seq{B\,S\,\vec{u}} {S\,\vec{u}}}{\Pi_S'} }
\enspace ,
$$
where $\Psi'$ is the derivation
$$
\infer[\mc] {\Seq{\Delta_1,\dots,\Delta_m,\Gamma'} {B\,S\,\vec{u}}} {
  {\left\{\raisebox{-1.5ex}{\deduce{\Seq{\Delta_j}{D_j}}{\Pi_j}}
    \right\}_{j \in \{1,\dots,m\} }} & \raisebox{-1.5ex}{
    \deduce{\Seq{D_1,\dots,\Gamma'}
      {B\,S\,\vec{u}}}{\mu(\Pi'',\Pi_S)}} }
$$
Notice that $\Psi'$ is exactly $\mu(\Xi_1,\Pi_S)$, and is reducible by
inductive hypothesis. Therefore assumption (1) applies, and the reduct
$\Xi$ is reducible, hence $\mu(\Pi,\Pi_S)$ is also reducible.

\item Otherwise, suppose $\Pi'$ ends with $\init$, then $D_1 = p\,\vec
  u$ and $\Pi$ is the derivation
$$
\infer[\mc] {\Seq{\Delta_1}{p\,\vec{u}}} {\deduce{\Seq{\Delta_1}
    {p\,\vec{u}}}{\Pi_1} & \infer[\init]{\Seq{p\,\vec{u}}
    {p\,\vec{u}}}{} } \enspace
$$
The only reduct of $\Pi$ is $\Pi_1$ since the only applicable
reduction is $-/\init$.  On the other hand, the derivation
$\mu(\Pi,\Pi_S)$ is
$$
\infer[\mc] {\Seq{\Delta_1}{S\,\vec{u}}} {
  \deduce{\Seq{\Delta_1}{p\,\vec{u}}}{\Pi_1} & \infer[\indL]
  {\Seq{p\,\vec{u}}{S\,\vec{u}}} {
    \deduce{\Seq{B\,S\,\vec{x}}{S\,\vec{x}}}{\Pi_S} &
    \deduce{\Seq{S\,\vec{u}} {S\,\vec{u}}}{\idrv} } }
$$
Its only reduct is (by $*/\indL$)
$$
\infer[\mc] {\Seq{\Delta_1}{S\,\vec{u}}} {
  \deduce{\Seq{\Delta_1}{S\,\vec{u}}}{\mu(\Pi_1,\Pi_S)} &
  \deduce{\Seq{S\,\vec{u}} {S\,\vec{u}}}{\idrv} } \enspace
$$
The derivation $\mu(\Pi_1,\Pi_S)$ is reducible by inductive hypothesis
($\Pi_1$ is a predecessor of $\Pi$) and assumption (2) applies, and
the above reduct is reducible.
\end{itemize}
\end{enumerate}
\qed
\end{proof}

\begin{remark}
  Intuitively, condition (1) of Lemma~\ref{lm:ind unfold} can be seen
  as asserting that the set of reducible derivations whose types are
  instances of $S\,\vec x$ forms a pre-fixed point of the fixed point
  operator induced by the inductive definition of $p$.
\end{remark}

\begin{lemma} \emph{Co-inductive unfolding.}
  \label{lm:coind unfold}
  Let $p\,\vec{x} \defnu B\,p\,\vec{x}$ be a co-inductive definition.
  Let $\Pi_S$ be a normalizable derivation of $\Seq{S\,\vec{x}}{B\,S\,
    \vec{x}}$ for some invariant $S$.  Let $\Rscr = \{\RED_j \mid j <
  \level{p} \},$ and let $\Sscr$ be a $(\Rscr, \Pi_S)$-saturated set.
  Let $\Pi$ be a derivation of $\Seq{\Gamma}{C[S/p]}$ for some $C$
  dominated by $p$.  If $\Pi \in \RED_{C}[\Rscr, \Sscr]$ then
  $\nu_C^p(\Pi, \Pi_S)$ is reducible.
\end{lemma}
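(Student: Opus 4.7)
The plan is to prove this by induction on the parametric reduction of $\Pi$ (the well-founded ordering of successive predecessors in the family $\{\RED^p_{C\theta}[\Rscr,\Sscr]\}_\theta$), doing case analysis on why $\Pi$ belongs to $\RED_C^p[\Rscr,\Sscr]$ according to Definition~\ref{def:param red}.

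First I would dispatch the two base cases. If $p$ does not occur in $C$, then $\Pi \in \restrict{\Rscr}{C}$ is already reducible, and by Definition~\ref{def:coinductive-unfolding} we have $\nu_C^p(\Pi,\Pi_S) = \Pi$, so there is nothing to prove. If $C = p\,\vec u$, then $\Pi \in \restrict{\Sscr}{S\,\vec u}$, and $\nu_C^p(\Pi,\Pi_S)$ is the $\coindR$-derivation with premises $\Pi$ and $\Pi_S$. Since $\Pi_S$ is normalizable by assumption and $\Pi \in \Sscr$ is normalizable by the first clause of Definition~\ref{def:saturated sets}, the witnessing saturated set for the $\coindR$-clause of Definition~\ref{def:red} is simply $\Sscr$ itself.

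For the inductive step, $C$ is compound and contains $p$, so I would case-split on the last rule of $\Pi$. The $mc$ case is handled by Lemma~\ref{lm:unfold-reduct-coind}: every reduct of $\nu_C^p(\Pi,\Pi_S)$ is of the form $\nu_C^p(\Pi',\Pi_S)$ for some reduct $\Pi'$ of $\Pi$, and each such $\Pi'$ lies in $\RED_C^p[\Rscr,\Sscr]$ as a parametric-reduction predecessor of $\Pi$, so the induction hypothesis applies. The remaining left rules (e.g. $\landL$, $\lorL$, $\existsL$, $\indL$, $\eqL$) and right rules other than $\oimpR$ (which cannot be $\coindR$ here since $C$ is compound) commute cleanly with co-inductive unfolding: their major premises in $\nu_C^p(\Pi,\Pi_S)$ are exactly the unfoldings of the major premises of $\Pi$, the minor premises are unchanged and thus remain normalizable, and the induction hypothesis on the major premises gives reducibility of the unfolded derivation.

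The main obstacle is the $\oimpR$ case. Suppose $\Pi$ ends with $\oimpR$ so $C = D_1 \oimp D_2$; since $p$ dominates $C$, the level condition $\level{D_1}+1 \le \level{D_1\oimp D_2} \le \level{p}$ forces $p$ to be vacuous in $D_1$, so $D_1[S/p] = D_1$. The induction hypothesis on the premise $\Pi'$ yields that $\nu_{D_2}^p(\Pi',\Pi_S)$ is reducible, but I must also verify the $\oimpR$-closure condition of Definition~\ref{def:red}: for every $\theta$ and every reducible $\Xi$ of type $D_1\theta$, the derivation $mc(\Xi,\nu_{D_2}^p(\Pi',\Pi_S)\theta)$ must be reducible. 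Since $\level{D_1\theta} < \level p$, such $\Xi$ lies in $\Rscr$, so by the parametric $\oimpR$-clause $mc(\Xi,\Pi'\theta) \in \RED_{D_2\theta}^p[\Rscr,\Sscr]$ and is a parametric-reduction predecessor of $\Pi$. The induction hypothesis gives that $\nu_{D_2\theta}^p(mc(\Xi,\Pi'\theta),\Pi_S)$ is reducible. The proof concludes by the chain of identities
$$\nu_{D_2\theta}^p(mc(\Xi,\Pi'\theta),\Pi_S) \;=\; mc(\Xi,\nu_{D_2}^p(\Pi'\theta,\Pi_S)) \;=\; mc(\Xi,\nu_{D_2}^p(\Pi',\Pi_S)\theta),$$
where the first equality is the $mc$ clause of Definition~\ref{def:coinductive-unfolding} (using that $p$ is vacuous in $D_1$, hence $\Xi$ is unchanged) and the second is Lemma~\ref{lm:coind unfold subst}. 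This delivers exactly the required reducibility, completing the induction.
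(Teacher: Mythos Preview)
Your induction scheme is too weak. The parametric reduction ordering, as the paper defines it, is the well-founded tree of predecessors \emph{within a single family} $\{\RED^p_{C\theta}[\Rscr,\Sscr]\}_\theta$ for a fixed top-level shape $C$, and its only predecessors are major premises of left rules and reducts of $mc$---both of which remain at some instance $C\theta$. The premises of right-introduction rules ($\landR$, $\lorR$, $\forallR$, $\existsR$, $\oimpR$) live in parametric reducibility sets of \emph{strictly smaller} types $C_i$; they are not parametric-reduction predecessors of $\Pi$, so your induction hypothesis does not reach them. In particular, in your $\oimpR$ case neither $\Pi'$ nor $mc(\Xi,\Pi'\theta)$ is a predecessor of $\Pi$ in that ordering. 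The paper's proof fixes this by layering an outer induction on the size of $C$ over the inner (sub-)induction on parametric reduction; you need that outer layer as well.

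There is a second, more concrete error in your $\oimpR$ case. The identity
\[
\nu_{D_2\theta}^p\bigl(mc(\Xi,\Pi'\theta),\Pi_S\bigr) \;=\; mc\bigl(\Xi,\nu_{D_2\theta}^p(\Pi'\theta,\Pi_S)\bigr)
\]
appeals to the $mc$ clause of Definition~\ref{def:coinductive-unfolding}, but that clause is only available in the ``otherwise'' branch, i.e., when $D_2\theta$ is non-atomic and contains $p$. When $D_2 = p\,\vec u$ (which is possible, since $p$ cannot occur in $D_1$ but must occur somewhere in $C$), the second clause of Definition~\ref{def:coinductive-unfolding} fires instead, giving $coind\bigl(mc(\Xi,\Pi'\theta),\Pi_S\bigr)$ on the left and $mc\bigl(\Xi,coind(\Pi'\theta,\Pi_S)\bigr)$ on the right; these are distinct derivations. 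The paper handles this case separately: it notes that the only cut reduction applicable to $mc\bigl(\Xi,coind(\Pi'\theta,\Pi_S)\bigr)$ is $-/\coindR$, whose unique reduct is exactly $coind\bigl(mc(\Xi,\Pi'\theta),\Pi_S\bigr) = \nu_{D_2\theta}^p\bigl(mc(\Xi,\Pi'\theta),\Pi_S\bigr)$, and then invokes the outer induction hypothesis on the smaller type $D_2$ to conclude that this reduct is reducible.
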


\begin{proof}
  By induction on the size of $C$, with sub-induction on the parametric
  reduction of $\Pi$.  As in the proof of inductive unfolding, we omit
  the subscript and superscript in the $\nu$ function to simplify the
  presentation of the proof.

  \begin{enumerate}
  \item If $p$ is not free in $C$, then $\nu(\Pi,\Pi_S) = \Pi$. Since
    $\Pi \in \RED_C[\Rscr, \Sscr]$, it follows from the definition of
    parametric reducibility that $\Pi \in \Rscr$, hence it is
    reducible by assumption.

  \item Suppose $C = p\,\vec{u}$.  Then $C[S/p] = S\,\vec{u}$ and
    $\nu(\Pi,\Pi_S)$ is the derivation
$$
\infer[\coindR] {\Seq{\Gamma}{p\,\vec{u}}} {
  \deduce{\Seq{\Gamma}{S\,\vec{u}}}{\Pi} &
  \deduce{\Seq{S\,\vec{x}}{B\,S\,\vec{x}}}{\Pi_S} } \enspace
$$
To show that this derivation is reducible, we first show that there
exist a $(\Rscr,\Pi_S)$-saturated set $\Sscr'$ such that $\Pi \in
\Sscr'$.  Since $\Pi \in \RED_{p\,\vec u}^p[\Rscr,\Sscr],$ by the
definition of parametric reducibility, we have $\Pi \in \Sscr$. Let
$\Sscr' = \Sscr$. Then $\Sscr'$ is indeed a $(\Rscr,\Pi_S)$-saturated
set containing $\Pi$. It remains to show that both $\Pi$ and $\Pi_S$
are normalizable. This follows from the assumption on $\Pi_S$ and the
fact that saturated sets contain only normalizable derivations.

\item Suppose $p$ occurs in $C$ but $C \not = p \, \vec u$ for any
  $\vec u$.  There are several subcases, depending on the last rule in
  $\Pi$.  Then we show by induction on parametric reducibility of
  $\Pi$ that it is also reducible.
  \begin{enumerate}
  \item The base cases are those where $\Pi$ ends with a rule with
    empty premises and where $\Pi$ ends with a right-introduction
    rule. In the former case, its reducibility is immediate from the
    definition of reducibility (Definition~\ref{def:red}). For the
    latter, in most cases, the reducibility of $\Pi$ follows from the
    outer induction hypothesis (since in this case, the premise
    derivations of $\Pi$ are in the parametric reducibility sets of
    smaller types) and Definition~\ref{def:red}. We show here a
    non-trivial case involving implication-right: Suppose $\Pi$ ends
    with $\oimpR$, i.e., $C = C_1 \oimp C_2$ for some $C_1$ and $C_2$.
$$
\infer[\oimpR] {\Seq{\Gamma}{C_1 \oimp C_2[S/p]}}
{\deduce{\Seq{\Gamma, C_1} {C_2[S/p]}}{\Pi'}}
$$
Note that $p$ is vacuous in $C_1$ by the restriction on $C$.  The
derivation $\nu(\Pi,\Pi_S)$ is
$$
\infer[\oimpR] {\Seq{\Gamma}{C_1\oimp C_2}} {\deduce{\Seq{\Gamma, C_1}
    {C_2}}{\nu(\Pi',\Pi_S)}} \enspace
$$
To show that $\nu(\Pi,\Pi_S)$ is reducible, we need to show that
$\nu(\Pi',\Pi_S)$ is reducible, and for every $\theta$ and every $\Psi
\in \RED_{C_1\theta}$, we have $mc(\Psi, \nu(\Pi',\Pi_S)\theta) \in
\RED_{C_2\theta}$.

The parametric reducibility of $\Pi$ implies that $\Pi' \in
\RED_{C_2}[\Rscr,\Sscr]$ and for every $\theta$ and every derivation
$\Psi' \in \Rscr$, $mc(\Psi', \Pi'\theta) \in
\RED_{C_2\theta}[\Rscr,\Sscr]$.  Note that $\Psi$ is in $\Rscr$ since
$\level{C_1\theta} < \level{p}$.  Therefore we also have $mc(\Psi,
\Pi'\theta) \in \RED_{C_2\theta}[\Rscr,\Sscr]$.  By the outer
induction hypothesis, we have that both
$$
\nu(\Pi',\Pi_S) \qquad \hbox{and} \qquad \nu(mc(\Psi, \Pi'\theta),
\Pi_S)$$ are reducible. It remains to show that the $mc(\Psi,
\nu(\Pi',\Pi_S)\theta)$ is reducible.  Note that by
Lemma~\ref{lm:coind unfold subst} this derivation is equivalent to
$mc(\Psi, \nu(\Pi'\theta,\Pi_S))$.  To show that this derivation is
reducible, there are two cases to consider.  If $C_2$ is non-atomic
then it is easy to see that $mc(\Psi, \nu(\Pi'\theta, \Pi_S))$ is
equivalent to $\nu(mc(\Psi, \Pi'\theta), \Pi_S)$, which is reducible
by the outer induction hypothesis.  If, however, $C_2 = p\,\vec u$ for
some $\vec u$, then $mc(\Psi, \nu(\Pi'\theta, \Pi_S))$ is the
derivation (supposing that the end sequent of $\Psi$ is
$\Seq{\Delta}{C_1\theta}$):
$$
\infer[mc] {\Seq{\Delta, \Gamma\theta}{p\,\vec u}} {
  \deduce{\Seq{\Delta}{C_1\theta}}{\Psi} & \infer[\coindR]
  {\Seq{C_1\theta, \Gamma\theta}{p\,\vec u}} {\deduce{\Seq{C_1\theta,
        \Gamma\theta}{S\,\vec u}}{\Pi'\theta} & \deduce{\Seq{S\,\vec
        x}{B\,S\,\vec x}}{\Pi_S} } }
$$
To show that this derivation is reducible, we must show that all its
reducts are reducible.  There is only one reduction rule that is
applicable in this case, i.e., the $-/ \coindR$-case, which leads to
the following derivation:
$$
\infer[\coindR .]  {\Seq{\Delta, \Gamma\theta}{p\,\vec u}} {
  \infer[mc] {\Seq{\Delta, \Gamma\theta}{S\,\vec u}} {
    \deduce{\Seq{\Delta}{C_1\theta}}{\Psi} & \deduce{\Seq{C_1\theta,
        \Gamma\theta}{S\,\vec u}}{\Pi'\theta} } & \deduce{\Seq{S\,\vec
      x}{B\,S\,\vec x}}{\Pi_S} }
$$
But notice that this is exactly the derivation $\nu(mc(\Psi,
\Pi'\theta), \Pi_S)$, which is reducible by the outer induction
hypothesis.

Having shown that $\nu(\Pi',\Pi_S)$ and $mc(\Psi,
\nu(\Pi',\Pi_S)\theta)$ are reducible, we have sufficient conditions
to conclude that $\nu(\Pi, \Pi_S)$ is indeed reducible.

\item For the inductive cases, $\Pi$ ends either with $mc$ or a
  left-rule.  We show the former case here (the other cases are
  straightforward). Suppose $\Pi$ is
$$
\infer[\mc]{\Seq{\Delta_1,\ldots,\Delta_n,\Gamma'}{C[S/p]}}
{\deduce{\Seq{\Delta_1}{D_1}} {\Pi_1} & \cdots &
  \deduce{\Seq{\Delta_n}{D_m}} {\Pi_n} &
  \deduce{\Seq{D_1,\ldots,D_m,\Gamma'}{C[S/p]}} {\Pi'}} \enspace
$$
Then $\nu(\Pi,\Pi_S)$ is the derivation
$$
\infer[\mc]{\Seq{\Delta_1,\ldots,\Delta_n,\Gamma'}{C}}
{\deduce{\Seq{\Delta_1}{D_1}} {\Pi_1} & \cdots &
  \deduce{\Seq{\Delta_n}{D_m}} {\Pi_n} &
  \deduce{\Seq{D_1,\ldots,D_m,\Gamma'}{C}} {\nu(\Pi',\Pi_S)}} \enspace
$$
The derivation $\nu(\Pi,\Pi_S)$ is reducible if every reduct of
$\nu(\Pi,\Pi_S)$ is also reducible.  From Lemma\ref{lm:unfold-reduct-coind}, it follows that every reduct of
$\nu(\Pi,\Pi_S)$ is of the form $\nu(\Xi,\Pi_S)$ where $\Xi$ is a
reduct of $\Pi$. Since all reducts of $\Pi$ are predecessors of $\Pi$
in the parametric reducibility ordering, we can apply the inductive
hypothesis to show that every reduct of $\nu(\Pi,\Pi_S)$ is reducible,
hence $\nu(\Pi, \Pi_S)$ is also reducible.

\end{enumerate}

\end{enumerate}
\qed
\end{proof}


\subsection{Cut elimination}

Most cases in the cut elimination proof for $\Linc$ in the following
are similar to those of $\FOLDN$. The crucial differences are in the
handling of the essential cut reductions for inductive and
co-inductive rules.\footnote{We also note that McDowell and Miller's
  proof of cut elimination for $\FOLDN$ given in \cite{mcdowell00tcs}
  appears to contain a small gap in the proof of a main technical
  lemma. More specifically, they use a similar technical lemma as
  Lemma~\ref{lm:comp}, but without the extra assumptions about the
  substitutions $\delta_1,\ldots,\delta_n,\theta$.  The problem with
  their formulation of the lemma appears in the case involving the
  $\eqL/ \circL$ reduction rule. This problem is fixed in our cut
  elimination proof with the more general statement of
  Lemma~\ref{lm:comp}.  See
  \url{http://www.lix.polytechnique.fr/~dale/papers/tcs00.errata.html}
  for details of the errata in their paper.  }  In the case of
derivations of inductive predicates, a crucial part of the proof is in
establishing that the $S$-indexed set of reducible derivations (where
$S$ is an inductive invariant) satisfies the conditions of
Lemma~\ref{lm:ind unfold} (in effect, demonstrating that the said set
forms a pre-fixed point).  Dually, in the case for co-inductive
proofs, one must show that the $S$-indexed set of reducible
derivations, where $S$ is a co-inductive invariant, forms a saturated
set (i.e., a post fixed point of the co-inductive definition
involved).

\begin{lemma}
  \label{lm:comp}
  For any derivation $\Pi$ of $\Seq{B_1,\ldots,B_n,\Gamma}{C}$, for
  any reducible derivations
$$
\deduce{\Seq{\Delta_1}{B_1}}{\Pi_1} , \quad \ldots \quad , ~
\deduce{\Seq{\Delta_n}{B_n}}{\Pi_n}
$$
where $n \geq 0$, and for any substitutions
$\delta_1,\dots,\delta_n,\gamma$ such that $B_i\delta_i = B_i\gamma$
for every $i \in \{1,\dots,n\}$, the derivation $\Xi$
\begin{displaymath}
  \infer[\mc]{\Seq{\Delta_1\delta_1,\ldots,\Delta_n\delta_n,\Gamma\gamma}{C\gamma}}
  {\deduce{\Seq{\Delta_1\delta_1}{B_1\delta_1}}
    {\Pi_1\delta_1}
    & \cdots
    & \deduce{\Seq{\Delta_n\delta_n}{B_n\delta_n}}
    {\Pi_n\delta_n}
    & \deduce{\Seq{B_1\gamma,\ldots,B_n\gamma,\Gamma\gamma}{C\gamma}}
    {\Pi\gamma}}
\end{displaymath}
is reducible.
\end{lemma}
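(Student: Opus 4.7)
The plan is to induct on the height $\measure{\Pi}$ of $\Pi$ and perform a case analysis on its last rule. In each case the goal is to show $\Xi \in \RED_i$ for the appropriate level $i$: either $\Xi$ satisfies a defining clause of Definition~\ref{def:red} whose premises are reducible by the induction hypothesis, or $\Xi$ ends with $\mc$ and we show every reduct is reducible, using Lemma~\ref{lm:reduct subst} to track reducts under substitution. The extra substitutions $\delta_1,\ldots,\delta_n,\gamma$ with $B_i\delta_i = B_i\gamma$ are carried so that the essential $\eqR/\eqL$ and the commuting $-/\eqL$, $\eqL/\circL$ reductions can absorb a newly produced unifier $\rho$ into $\gamma$ while leaving each $\Pi_i\delta_i$ untouched; this preserves exactly the invariant needed to reapply the induction hypothesis.

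Most cases are routine permutations. When $\Pi$ ends with a right rule other than $\oimpR$ or $\coindR$, or with a left rule on a formula not in $B_1,\ldots,B_n$, the $\mc$ permutes above that rule and the induction hypothesis applies to the smaller premise derivations. When $\Pi$ ends with $\oimpR$ on $C_1 \oimp C_2$, clause~(2) of Definition~\ref{def:red} imposes two obligations: reducibility of the premise-derived multicut of type $C_2$ and reducibility of the further multicut obtained by pasting any reducible derivation of $C_1\theta$ on the left; both are instances of the induction hypothesis. When $\Pi$ ends with a left rule acting on some $B_i$, the corresponding essential or structural reduction moves the action into $\Pi_i$, and reducibility of $\Pi_i$ (given by hypothesis) combined with the induction hypothesis on the right premise yields reducibility of $\Xi$; the structural cases $\cL$, $\wL$, and the axiom/multicut cases on either side are similar.

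The genuinely hard cases are $\coindR$ on the right and the essential $\indR/\indL$ or $\ast/\indL$ reductions on a cut formula. For $\coindR$ with invariant $S$ and premise derivations $\Pi'$ and $\Pi_S$, proving $\Xi \in \RED$ via clause~(3) of Definition~\ref{def:red} amounts to exhibiting a $(\Rscr,\Pi_S)$-saturated set containing the composed derivation, where $\Rscr = \bigcup\{\RED_j \mid j < \level{p}\}$. The plan is to take as candidate the $S$-indexed set of all multicuts of the form $mc(\Pi_1\delta_1',\ldots,\Pi_n\delta_n',\Psi)$ where $\delta_i',\gamma'$ range over substitutions with $B_i\delta_i' = B_i\gamma'$ and $\Psi$ ranges over derivations of an $S$-instance obtained from $\Pi'\gamma'$ (together with normalisable closures thereof); verifying the three clauses of Definition~\ref{def:saturated sets} reduces to Lemma~\ref{lm:red-norm} (normalisability of reducibles), Lemma~\ref{lm:red-subst} (closure under substitution), and the induction hypothesis combined with Lemma~\ref{lm:red equal param red} to translate between ordinary and parametric reducibility when discharging the unfolding condition against $\Pi_S$. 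For an $\indL$ acting on a cut formula $B_i = p\,\vec u$, the reduct of interest has leftmost branch $\mu^p_{p\,\vec u}(\Pi_i,\Pi_S)$, whose reducibility follows from Lemma~\ref{lm:ind unfold}; its three side-conditions are in turn instances of the induction hypothesis, expressing that the reducible derivations of type $S$ form a pre-fixed point of the operator induced by $p$. The $\indR$ essential case dovetails analogously.

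The main obstacle will be the $\coindR$ case: showing that the candidate set really is closed under substitution and satisfies the saturation unfolding clause forces one to reinvoke the induction hypothesis under arbitrary further substitutions $\delta_i',\gamma'$, and this is precisely where the extra generality of the lemma statement pays off. Once the saturation and unfolding machinery is in place, the remaining dozens of sub-cases amount to a careful match of each rule in Definition~\ref{def:reduct} against the corresponding clause of Definition~\ref{def:red}, in the familiar Martin-L\"of reducibility style.
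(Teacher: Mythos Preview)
Your induction scheme is too weak. The paper proves this lemma by a lexicographic induction on $\indm{\Pi}$ (Definition~\ref{def:num-ind}), with subordinate inductions on $\measure{\Pi}$, on $n$, and on the reduction trees of the reducible derivations $\Pi_1,\ldots,\Pi_n$. Inducting on $\measure{\Pi}$ alone fails in at least two places. In the $*/\indL$ essential case, verifying condition~(2) of Lemma~\ref{lm:ind unfold} requires the reducibility of $mc(\Psi, \idrv_{S\,\vec u})$ for an arbitrary reducible $\Psi$; the rightmost derivation here is $\idrv_{S\,\vec u}$, whose height depends on the size of the invariant formula $S$ and is not bounded by $\measure{\Pi}$. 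The paper instead uses $\indm{\idrv_{S\,\vec u}} = 0 < \indm{\Pi}$ (Lemma~\ref{lm:idrv}, and $\Pi$ ends with $\indL$) and invokes the outer induction on $\indm$. Condition~(1) similarly relies on $\indm{\Pi_S} < \indm{\Pi}$.

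Second, in the left-commutative cases ($\bulletL/\circL$, $\oimpL/\circL$, $\eqL/\circL$, $\indL/\circL$) and in $\mc/\circL$, the reduct of $\Xi$ has the \emph{same} rightmost premise $\Pi\gamma$; what changes is that $\Pi_1$ is replaced by one of its reducibility-predecessors (a major premise, or a reduct). Neither $\measure{\Pi}$ nor $\indm{\Pi}$ decreases, so the needed induction hypothesis is on the well-founded reduction tree of $\Pi_1$, available precisely because $\Pi_1$ is assumed reducible. Your sketch glosses over these cases (``moves the action into $\Pi_i$ \ldots\ induction hypothesis on the right premise''), but the right premise does not get smaller here. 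Finally, your $\coindR$ candidate set is more elaborate than needed: the paper simply takes $\Sscr$ to be all reducible derivations of type $S\,\vec u$ and checks saturation via the inner induction hypothesis applied to $\Pi_S$ (which has $\measure{\Pi_S} < \measure{\Pi}$ and $\indm{\Pi_S} \leq \indm{\Pi}$) together with Lemma~\ref{lm:red equal param red}.
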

\begin{proof}
  The proof is by induction on $\indm{\Pi}$ with subordinate induction
  on $\measure{\Pi}$, on $n$ and on the reductions of $\Pi_1, \ldots,
  \Pi_n$.  The proof does not rely on the order of the inductions on
  reductions.  Thus when we need to distinguish one of the $\Pi_i$, we
  shall refer to it as $\Pi_1$ without loss of generality.  The
  derivation $\Xi$ is reducible if all its reducts are reducible.
 
  If $n=0$, then $\Xi$ reduces to $\Pi\gamma$, thus in this case we
  show that $\Pi\gamma$ is reducible. Since reducibility is preserved
  by substitution (Lemma~\ref{lm:red-subst}), it is enough to show
  that $\Pi$ is reducible.  This is proved by a case analysis of the
  last rule in $\Pi$.  For each case, the result follows easily from
  the induction hypothesis on $\measure \Pi$ and
  Definition~\ref{def:red}.  The $\oimpR$ case requires that
  substitution for variables does not increase the measures of a
  derivation.  In the cases for $\oimpL$ and $\indL$ we need the
  additional information that reducibility implies normalizability
  (Lemma~\ref{lm:red-norm}).  The case for $\coindR$ requires special
  attention.  Let $p\,\vec{x} \defnu D\,p\,\vec{x}$ be a co-inductive
  definition.  Suppose $\Pi$ is the derivation
$$
\infer[\coindR] {\Seq{\Gamma}{p\,\vec{t}}} {
  \deduce{\Seq{\Gamma}{S\,\vec{t}}}{\Pi'} &
  \deduce{\Seq{S\,\vec{x}}{D\,S\,\vec{x}}}{\Pi_S} }
$$
for some invariant $S$. Let $\Rscr = \bigcup \{\RED_j \mid j < \level
p \}$.  To show that $\Pi$ is reducible we must show that its premises
are normalizable and that there exists a $(\Rscr,\Pi_S)$-saturated set
$\Sscr$ such that $\Pi' \in \Sscr$.  The former follows from the outer
induction hypothesis and Lemma~\ref{lm:red-norm}. For the latter, the
set $\Sscr$ is defined as follows:
$$\
\Sscr = \{\Psi \mid \hbox{$\Psi$ is a reducible derivaiton of type
  $S\,\vec u$, for some $\vec u$} \}.
$$
Since $\Pi'$ is reducible by induction hypothesis, we have $\Pi' \in
\Sscr$.  It remains to show that $\Sscr$ is a
$(\Rscr,\Pi_S)$-saturated set. More specifically, we show that $\Sscr$
has the following properties.
\begin{enumerate}
\item Every derivation in $\Sscr$ is normalizable.
\item If $\Psi \in \Sscr$ then $\Psi\theta \in \Sscr$ for any
  $\theta$.
\item If $\Psi \in \Sscr$ and $\Psi$ is of type $S\,\vec u$ for some
  $\vec u$, then $mc(\Psi, \Pi_S[\vec u/\vec x]) \in \RED_{B\,p\,\vec
    u}^p[\Rscr,\Sscr]$
\end{enumerate}
Property (1) follows from the fact that reducibility implies
normalizability (Lemma~\ref{lm:red-norm}).  Property (2) follows from
the fact that reducibility is closed under substitution
(Lemma~\ref{lm:red-subst}).  To prove (3), first notice that by
Lemma~\ref{lm:subst-mu}, $\indm{\Pi_S[\vec u/\vec x]} \leq
\indm{\Pi_S} = \indm{\Pi}$ and $\measure{\Pi_S[\vec u/ \vec x]} \leq
\measure{\Pi_S} < \measure{\Pi}$. Therefore, by the outer induction
hypothesis, we have that $mc(\Psi, \Pi_S[\vec u/ \vec x])$ is
reducible. By Lemma~\ref{lm:red equal param red}, we have that
$mc(\Psi, \Pi_S[\vec u/\vec x]) \in \RED_{B\,p\,\vec
  u}^p[\Rscr,\Sscr]$.  Therefore, $\Sscr$ is a
$(\Rscr,\Pi_S)$-saturated set containing $\Pi'$, hence $\Pi$ is
reducible.

For $n > 0$, we analyze all possible cut reductions and show for each
case the reduct is reducible. Some cases follow immediately from
inductive hypothesis. We show here the non-trivial cases.
\begin{description}
\item[$\oimpR/\oimpL$:] Suppose $\Pi_1$ and $\Pi$ are
$$
\infer[\oimpR] {\Seq{\Delta_1}{B_1'\oimp B_1''}} {
  \deduce{\Seq{\Delta_1,B_1'}{B_1''}}{\Pi_1'} } \qquad \qquad
\infer[\oimpL] {\Seq{B_1'\oimp B_1'',B_2,\dots, B_n,\Gamma}{C}} {
  \deduce{\Seq{B_2,\dots,\Gamma}{B_1'}}{\Pi'} &
  \deduce{\Seq{B_1'',B_2,\dots,\Gamma}{C}}{\Pi''} } \enspace 
$$
The derivation $\Xi_1$
$$
\infer[\mc]
{\Seq{\Delta_2\delta_2,\dots,\Delta_n\delta_n,\Gamma\gamma}{B_1'\gamma}}
{\deduce{\Seq{\Delta_2\delta_2}{B_2\delta_2}}{\Pi_2\delta_2} & \ldots
  & \deduce{\Seq{\Delta_n\delta_n}{B_n\delta_n}}{\Pi_n\delta_n} &
  \deduce{\Seq{B_2\gamma,\dots,B_n\gamma, \Gamma\gamma}{B_1'\gamma}}
  {\Pi'\gamma} }
$$
is reducible by induction hypothesis since $\indm{\Pi'} \leq
\indm{\Pi}$ and $\measure{\Pi'} < \measure{\Pi}$. Since $\Pi_1$ is
reducible, by Definition~\ref{def:red} the derivation $\Xi_2$
$$
\infer[\mc]
{\Seq{\Delta_1\delta_1,\dots,\Delta_n\delta_n,\Gamma\gamma}{B_1''\delta_1}}
{
  \deduce{\Seq{\Delta_2\delta_2,\dots,\Gamma\gamma}{B_1'\gamma}}{\Xi_1}
  & \deduce{\Seq{B_1'\delta_1,\Delta_1\delta_1}{B_1''\delta_1}}
  {\Pi_1\delta_1} }
$$
is a predecessor of $\Pi_1$ and therefore is reducible.  The reduct of
$\Xi$ in this case is the following derivation \settowidth{\infwidthi}
{$\Seq{\Delta_1\delta_1,\ldots,\Delta_n\delta_n,\Gamma\gamma,
    \Delta_1\delta_1,\ldots,\Delta_n\delta_n,\Gamma\gamma} {C\gamma}$}
\begin{displaymath}
  \infer{\Seq{\Delta_1\delta_1,\ldots,\Delta_n\delta_n,\Gamma\gamma}{C\gamma}}
  {\infer[\cL]{\makebox[\infwidthi]{}}
    {\infer[\mc]{\Seq{\Delta_1\delta_1,\ldots,
          \Delta_n\delta_n,\Gamma\gamma,
          \Delta_2\delta_2,\ldots,
          \Delta_n\gamma,\Gamma\gamma}
        {C\gamma}}
      {\raisebox{-2.5ex}{\deduce{\Seq{\ldots}{B_1''\delta_1}}
          {\Xi_2}}
        & \left\{\raisebox{-1.5ex}{\deduce{\Seq{\Delta_i\delta_i}{B_i\delta_i}}
            {\Pi_i\delta_i}}\right\}_{i \in \{2..n\}}
        & \raisebox{-2.5ex}{\deduce{\Seq{B_1''\gamma, \dots, B_n\gamma,
              \Gamma\gamma}
            {C\gamma}}
          {\Pi''\gamma}}}}}
\end{displaymath}
which is reducible by induction hypothesis and Definition~\ref{def:red}.

\item[$\forallL/\forallR$:] Suppose $\Pi_1$ and $\Pi$ are
$$
\infer[\forallR] {\Seq{\Delta_1}{\forall x.B_1'}} {
  \deduce{\Seq{\Delta_1}{B_1'[y/x]}}{\Pi_1'} } \qquad \qquad
\infer[\forallL] {\Seq{\forall x.B_1',B_2,\dots,B_n,\Gamma}{C}} {
  \deduce{\Seq{B_1'[t/x],B_2,\dots,B_n,\Gamma}{C}}{\Pi'} }
$$
Since we identify derivations that differ only in the choice of
intermediate eigenvariables that are not free in the end sequents, we
can choose a variable $y$ such that it is not free in the domains and
ranges of $\delta_1$ and $\gamma$. We assume without loss of
generality that $x$ is chosen to be fresh with respect to the free
variables in the substitutions so we can push the substitutions under
the binder.  The derivation $\Xi$ is thus
$$
\infer[\mc]
{\Seq{\Delta_1\delta_1,\dots,\Delta_n\delta_n,\Gamma\gamma}{C\gamma}}
{\infer[\forallR] {\Seq{\Delta_1\delta_1}{\forall x.B_1'\delta_1}} {
    \deduce{\Seq{\Delta_1\delta_1}{B_1'\delta_1[y /x]}}
    {\Pi_1'\delta_1} } & \ldots & \infer[\forallL] {\Seq{\forall
      x.B_1'\gamma, \dots,\Gamma\gamma}{C\gamma}} {
    \deduce{\Seq{B_1'\gamma[t\gamma/x],\dots,\Gamma\gamma}{C\gamma}}
    {\Pi'\gamma} } }
$$
Let $\delta_1' = \delta_1 \circ [t\gamma/y]$.  The reduct of $\Xi$ in
this case is
$$
\infer[\mc]
{\Seq{\Delta_1\delta_1,\dots,\Delta_n\delta_n,\Gamma\gamma}{C\gamma}}
{\deduce{\Seq{\Delta_1\delta_1} {B_1'\delta_1[t\gamma/x]}}
  {\Pi_1'\delta_1'} & \ldots &
  \deduce{\Seq{B_1'\gamma[t\gamma/x],\dots,\Gamma\gamma}
    {C\gamma}}{\Pi'\gamma} }
$$
which is reducible by induction hypothesis.

\item[$\eqR/\eqL$:] Suppose $\Pi_1$ and $\Pi$ are
$$
\infer[\eqR] {\Seq{\Delta_1}{s = t}} {} \qquad \qquad \infer[\eqL]
{\Seq{s = t,\dots,B_n,\Gamma}{C}} {\left\{\raisebox{-1.5ex}
    {\deduce{\Seq{B_2\rho,\dots,B_n\rho,\Gamma\rho}{C\rho}} {\Pi^\rho}
    } \right\}_\rho }
$$
Then $\Xi$ is the derivation
$$
\infer[\mc]
{\Seq{\Delta_1\delta_1,\dots,\Delta_n\delta_n,\Gamma\gamma}{C\gamma}}
{\infer[\eqR] {\Seq{\Delta_1\delta_1}{(s=t)\delta_1}} {} & \cdots &
  \infer[\eqL]
  {\Seq{(s=t)\gamma,\dots,B_n\gamma,\Gamma\gamma}{C\gamma}}
  {\left\{\raisebox{-1.5ex}
      {\deduce{\Seq{B_2\gamma\rho',\dots,B_n\gamma\rho',\Gamma\rho'}
          {C\gamma\rho'}} {\Pi^{\gamma \circ \rho'}}} \right\}_{\rho'}
  } }
$$
The $\eqR$ tells us that $s$ and $t$ are unifiable via empty
substitution (i.e., they are the same normal terms).  The reduct of
$\Xi$
$$
\infer=[\wL] {\Seq{\Delta_1\delta_1, \Delta_2\delta_2,
    \dots,\Delta_n\delta_n,\Gamma\gamma}{C\gamma}} {\infer[\mc]
  {\Seq{\Delta_2\delta_2,\dots,\Delta_n\delta_n,\Gamma\gamma}{C\gamma}}
  {\deduce{\Seq{\Delta_2\delta_2}{B_2\delta_2}}{\Pi_2\delta_2} &
    \ldots & \deduce{\Seq{B_2\gamma,\dots,\Gamma\gamma}{C\gamma}}
    {\Pi^\gamma} } }
$$
is therefore reducible by induction hypothesis.

\item[$*/\indL$:] Suppose $\Pi$ is the derivation
$$
\infer[\indL] {\Seq{p\,\vec{t}, \Gamma}{C}} {
  \deduce{\Seq{D\,S\,\vec{x}}{S\,\vec{x}}}{\Pi_S} &
  \deduce{\Seq{S\,\vec{t}, \Gamma}{C}}{\Pi'} }
$$
where $p\,\vec x \defmu D\,p\,\vec x$.  Let $p\,\vec{u}$ be the result
of applying $\delta_1$ to $p\,\vec{t}$. Then $\Xi$ is the derivation
$$
\infer[\mc]
{\Seq{\Delta_1\delta_1,\ldots,\Delta_n\delta_n,\Gamma\gamma}{C\gamma}}
{\deduce{\Seq{\Delta_1\delta_1}{p\,\vec{u}}} {\Pi_1\delta_1} & \cdots
  & \deduce{\Seq{\Delta_n\delta_n}{B_n\delta_n}} {\Pi_n\delta_n} &
  \infer[\indL] {\Seq{p\,\vec{u}, \ldots, \Gamma\gamma}{C\gamma}} {
    \deduce{\Seq{D\,S\,\vec{x}}{S\,\vec{x}}}{\Pi_S} &
    \deduce{\Seq{S\,\vec{u}, \ldots, \Gamma\gamma}{C\gamma}}
    {\Pi'\gamma} } }
$$
The derivation $\Xi$ reduces to the derivation $\Xi'$
$$
\infer[\mc]
{\Seq{\Delta_1\delta_1,\ldots,\Delta_n\delta_n,\Gamma\gamma}{C\gamma}}
{\deduce {\Seq{\Delta_1\delta_1}{S\,\vec{u}}}
  {\mu(\Pi_1,\Pi_S)\delta_1} & \cdots &
  \deduce{\Seq{\Delta_n\delta_n}{B_n\delta_n}} {\Pi_n\delta_n} &
  \deduce {\Seq{S\,\vec{u}, \Gamma\gamma}{C\gamma}} {\Pi'\gamma} }
\enspace 
$$
Notice that we have used the fact that
$$
\mu(\Pi_1\delta_1,\Pi_S) = \mu(\Pi_1,\Pi_S)\delta_1
$$
in the derivation above, which follows from Lemma~\ref{lm:ind unfold
  subst}.  Therefore, in order to prove that $\Xi'$ is reducible, it
remains to show that the unfolding of $\Pi_1$ produces a reducible
derivation. This will be proved using Lemma~\ref{lm:ind unfold}, but
we shall first prove the following properties, which are the
conditions for applying Lemma~\ref{lm:ind unfold}:
\begin{enumerate}
\item For every derivation $\Psi$ of $\Seq{\Delta}{D\,p\,\vec{s}}$, if
  $\mu(\Psi,\Pi_S)$ is reducible, then the derivation
  $mc(\mu(\Psi,\Pi_S), \Pi_S[\vec s/ \vec x])$ is reducible.

\item For every reducible derivation $\Psi$ of $\Seq{\Delta}{S\,\vec
    u}$ the derivation $mc(\Psi,\idrv_{S\,\vec u})$ is reducible.

\item The derivation $ind(\Pi_S, \idrv_{S\,\vec u})$ is reducible, for
  every $\vec u$ of the appropriate types.

\end{enumerate}
To prove (1), we observe that $\indm{\Pi_S[\vec u/\vec x]} \leq
\indm{\Pi_S} < \indm{\Pi}$, so by the outer induction hypothesis, the
derivation $mc(\mu(\Xi,\Pi_S), \Pi_S[\vec u/ \vec x])$ is reducible.
Property (2) is proved similarly, by observing that $\indm{\idrv_{S\,\vec u}} <
\indm{\Pi}$ (since identity derivations do not use the $\indL$ rule;
c.f. Lemma~\ref{lm:idrv}).  Property (3) follows from the fact that
$\idrv_{S\,\vec u}$ is reducible and
that $\Pi_S$ is reducible (hence, also normalizable).  Having shown
these three properties, using Lemma~\ref{lm:ind unfold} we conclude
that $\mu(\Pi_1,\Pi_S)$ is reducible, hence, by the outer induction
($\Pi'$ is smaller than $\Pi$), the reduct $\Xi'$ is reducible.

\item[$\coindR/\coindL$:] Suppose $\Pi_1$ and $\Pi$ are
$$
\infer[\coindR] {\Seq{\Delta_1}{p\,\vec{t}}} {
  \deduce{\Seq{\Delta_1}{S\,\vec{t}}}{\Pi_1'} &
  \deduce{\Seq{S\,\vec{x}}{D\,S\,\vec{x}}}{\Pi_S} } \qquad \qquad
\infer[\coindL] {\Seq{p\,\vec{t}, B_2, \dots, \Gamma}{C}}
{\deduce{\Seq{D\,p\,\vec{t},B_2,\dots, \Gamma}{C}}{\Pi'}}
$$
where $p\,\vec{x}\defnu D\,p\,\vec{x}$.  Suppose $(p\,\vec{t})\delta_1
= (p\,\vec{t})\gamma = p\,\vec{u}$. Then $\Xi$ is the derivation
$$
\infer[\mc]
{\Seq{\Delta_1\delta_1,\dots,\Delta_n\delta_n,\Gamma\gamma}{C\gamma}}
{\infer[\coindR] {\Seq{\Delta_1\delta_1}{p\,\vec{u}}} {
    \deduce{\Seq{\Delta_1\delta_1}{S\,\vec{u}}} {\Pi_1'\delta_1} &
    \deduce{\Seq{S\,\vec{x}}{D\,S\,\vec{x}}}{\Pi_S} } & \cdots &
  \infer[\coindL] {\Seq{p\,\vec{u}, \dots, \Gamma\gamma}{C\gamma}} {
    \deduce{\Seq{D\,p\,\vec{u},\dots,\Gamma\gamma}{C\gamma}}
    {\Pi'\gamma} } }
$$
Let $\Rscr = \bigcup \{\RED_F \mid \level F < \level p\}$. Since
$\Pi_1$ is reducible, there exists a $(\Rscr,\Pi_S)$-saturated set
$\Sscr$ such that $\Pi_1' \in \Sscr$.  Let $\Xi_1$ be the derivation
$$
\infer[\mc] {\Seq{\Delta_1\delta_1}{D\,S\,\vec{u}}} {
  \deduce{\Seq{\Delta_1\delta_1}{S\,\vec{u}}} {\Pi_1'\delta_1} &
  \deduce{\Seq{S\,\vec{u}} {D\,S\,\vec{u}}}{\Pi_S[\vec u/ \vec x]} }
\enspace 
$$
Since $\Sscr$ is a $(\Rscr,\Pi_S)$-saturated set, by
Definition~\ref{def:saturated sets}, $\Xi_1 \in \RED_{D\,p\,\vec
  u}^p[\Rscr,\Sscr]$. It then follows from Lemma~\ref{lm:coind unfold}
that $\nu(\Xi_1,\Pi_S)$ is reducible.

The reduct of $\Xi$ is the derivation
$$
\infer[\mc .]{\Seq{\Delta_1\delta_1,\ldots,\Delta_n\delta_n,\Gamma\gamma}{C\gamma}}
{\deduce{\Seq{\Delta_1\delta_1}{D\,p\,\vec{u}}} {\nu(\Xi_1,\Pi_S)} &
  \cdots & \deduce{\Seq{\Delta_n\delta_n}{B_n\delta_n}}
  {\Pi_n\delta_n} &
  \deduce{\Seq{D\,p\,\vec{u},\ldots,B_n\gamma,\Gamma\gamma} {C\gamma}}
  {\Pi'\gamma}}
$$
Its reducibility follows from the reducibility of $\nu(\Xi_1,\Pi_S)$
and the outer induction hypothesis.

\item[$\oimpL/\circ\Lscr$:] Suppose $\Pi_1$ is
$$
\infer[\oimpL] {\Seq{D_1'\oimp D_1'',\Delta_1'}{B_1}} {
  \deduce{\Seq{\Delta_1'}{D_1'}}{\Pi_1'} &
  \deduce{\Seq{D_1'',\Delta_1'}{B_1}}{\Pi_1''} }
$$
Since $\Pi_1$ is reducible, it follows from Definition~\ref{def:red}
that $\Pi_1'$ is normalizable and $\Pi_1''$ is reducible.  Let $\Xi_1$
be the derivation
$$
\infer[\mc]
{\Seq{D_1''\delta_1,\Delta_1'\delta_1,\Delta_2\delta_2,\dots,\Gamma\gamma}{C\gamma}}
{\deduce{\Seq{D_1''\delta_1,\Delta_1'\delta_1}{B_1\delta_1}}
  {\Pi_1''\delta_1} &
  \deduce{\Seq{\Delta_2\delta_2}{B_2\delta_2}}{\Pi_2\delta_2} & \cdots
  & \deduce{\Seq{B_1\delta_1,\dots,\Gamma\gamma}{C\gamma}} {\Pi\gamma}
} \enspace 
$$
$\Xi_1$ is reducible by induction hypothesis on the reduction of
$\Pi_1$ ($\Pi_1''$ is a predecessor of $\Pi_1$).  The reduct of $\Xi$
in this case is the derivation \settowidth{\infwidthii}
{$\Seq{\Delta_1'\delta_1,\Delta_2\delta_2,\dots,\Gamma\gamma}{D_1'\delta_1}$}
\begin{displaymath}
  \infer[\oimpL]
  {\Seq{(D_1' \oimp D_1'')\delta_1,\Delta_1'\delta_1,\Delta_2\delta_2,\ldots,\Gamma\gamma}{C\gamma}}
  {
    \infer{\Seq{\Delta_1'\delta_1,\Delta_2\delta_2,\ldots,\Gamma\gamma}{D_1'\delta_1}}
    {\infer[\wL]{\makebox[\infwidthii]{}}
      {\deduce{\Seq{\Delta_1'\delta_1}{D_1'\delta_1}}{\Pi_1'\delta_1}} }
    & \deduce{\Seq{D_1''\delta_1,\Delta_1'\delta_1,\Delta_2\delta_2,\ldots,\Gamma\gamma}{C\gamma}}
    {\Xi_1}}
  \enspace 
\end{displaymath}
Since $\Pi_1'$ is normalizable and substitutions preserve
normalizability, by Definition~\ref{def:norm} the left premise of the
reduct is normalizable, and hence the reduct is reducible.

\item[$\eqL/\circL$:] Suppose $\Pi_1$ is
$$
\infer[\eqL]{\Seq{s=t,\Delta_1'}{B_1}} {\left\{\raisebox{-1.5ex}
    {\deduce{\Seq{\Delta_1'\rho}{B_1\rho}} {\Pi^{\rho}}}
  \right\}_{\rho}}
$$
Then $\Xi$ is the derivation
$$
\infer[\mc]
{\Seq{(s=t)\delta_1,\Delta_1'\delta_1,\Delta_2\delta_2,\dots,\Gamma\gamma}
  {C\gamma}} {\infer[\eqL]{\Seq{(s=t)\delta_1,
      \Delta_1'\delta_1}{B_1\delta_1}} {\left\{\raisebox{-1.5ex}
      {\deduce{\Seq{\Delta_1'\delta_1\rho'}{B_1\delta_1\rho'}}
        {\Pi^{\delta_1\circ \rho'}}} \right\}_{\rho'}} &
  \deduce{\Seq{\Delta_2\delta_2}{B_2\delta_2}}{\Pi_2\delta_2} & \cdots
  & \deduce{\Seq{B_1\gamma,\dots,\Gamma\gamma}{C\gamma}} {\Pi\gamma} }
$$
Notice that each premise derivation $\Pi^{\delta_1\circ\rho'}$ of
$\Pi_1\delta_1$ is a also a premise derivation of $\Pi_1$, since for
every unifier $\rho'$ of $(s = t) \delta_1$, there is a unifier of $s
= t$, i.e., the substitution $\delta_1 \circ \rho'$.  Therefore every
$\Pi^{\delta_1 \circ \rho'}$ is a predecessor of $\Pi_1$.  Let
$\Xi^{\rho'}$ be the derivation
$$
\infer[\mc.]  {\Seq{\Delta_1'\delta_1\rho',
    \Delta_2\delta_2\rho',\dots,\Gamma\gamma\rho' } {C\gamma\rho'}} {
  \deduce{\Seq{\Delta_1'\delta_1\rho'}{B_1\delta_1\rho'}}
  {\Pi_1^{\delta_1\circ\rho'}} &
  \deduce{\Seq{\Delta_2\delta_2\rho'}{B_2\delta_2\rho'}}
  {\Pi_2\delta_2\rho'} & \raisebox{1.5ex}{\ldots} &
  \deduce{\Seq{B_1\gamma\rho',\dots,\Gamma\gamma\rho'}{C\gamma\rho'}}
  {\Pi\gamma\rho'} }
$$
The reduct of $\Xi$
$$
\infer[\eqL]{\Seq{(s=t)\delta_1,\Delta_1'\delta_1,\dots,\Gamma\gamma}
  {C\gamma}} {\left\{\raisebox{-1.5ex}
    {\deduce{\Seq{\Delta_1'\delta_1\rho',\dots,\Gamma\gamma\rho'}{C\gamma\rho'}}
      {\Xi^{\rho'}}} \right\}_{\rho'}}
$$
is then reducible by Definition~\ref{def:red}.

\item[$\indL/\circL$:] Suppose $\Pi_1$ is
$$
\infer[\indL] {\Seq{p\,\vec{t}, \Delta_1'}{B_1}} {
  \deduce{\Seq{D\,S\,\vec{x}}{S\,\vec{x}}}{\Pi_S} &
  \deduce{\Seq{S\,\vec{t}, \Delta_1'}{B_1}}{\Pi_1'} } \enspace 
$$
Since $\Pi_1$ is reducible, it follows from the definition of
reducibility that $\Pi_1'$ is reducible predecessor of $\Pi_1$ and
$\Pi_S$ is normalizable.  Suppose $p\,\vec{u} = (p\,\vec{t})\delta_1 =
(p\,\vec{t})\gamma$.  Let $\Xi_1$ be the derivation
$$
\infer[\mc]{\Seq{S\,\vec{u},
    \Delta_1'\delta_1,\ldots,\Delta_n\delta_n,\Gamma\gamma}{C\gamma}}
{\deduce{\Seq{S\,\vec{u}, \Delta_1'\delta_1}{B_1\delta_1}}
  {\Pi_1'\delta_1} & \cdots &
  \deduce{\Seq{\Delta_n\delta_n}{B_n\delta_n}} {\Pi_n\delta_n} &
  \deduce{\Seq{B_1\gamma,\ldots,B_n\gamma,\Gamma\gamma}
    {C\gamma}}{\Pi\gamma}} \enspace 
$$
$\Xi_1$ is reducible by induction on the reduction of $\Pi_1$,
therefore the reduct of $\Xi$
$$
\infer[\indL] {\Seq{p\,\vec{u},\Delta_1'\delta_1,
    \dots,\Delta_n\delta_n,\Gamma\gamma}{C\gamma}} {
  \deduce{\Seq{D\,S\,\vec{x}}{S\,\vec{x}}}{\Pi_S} &
  \deduce{\Seq{S\,\vec{u},\Delta_1'\delta_1,
      \ldots,\Delta_n\delta_n,\Gamma\gamma}{C\gamma}}{\Xi_1} }
$$
is reducible.
 
\item[$-/\oimpL$:] Suppose $\Pi$ is
  \begin{displaymath}
    \infer[\oimpL]{\Seq{B_1,\ldots,B_n,D' \oimp D'',\Gamma'}{C}}
    {\deduce{\Seq{B_1,\ldots,B_n,\Gamma'}{D'}}
      {\Pi'}
      & \deduce{\Seq{B_1,\ldots,B_n,D'',\Gamma'}{C}}
      {\Pi''}}
    \enspace 
  \end{displaymath}
  Let $\Xi_1$ be
  \begin{displaymath}
    \infer[\mc]{\Seq{\Delta_1\delta_1,\ldots,\Delta_n\delta_n,\Gamma'\gamma}{D'\gamma}}
    {\deduce{\Seq{\Delta_1\delta_1}{B_1\delta_1}}
      {\Pi_1\delta_1}
      & \cdots
      & \deduce{\Seq{\Delta_n\delta_n}{B_n\delta_n}}
      {\Pi_n\delta_n}
      & \deduce{\Seq{B_1\gamma,\ldots,B_n\gamma,\Gamma'\gamma}
        {D'\gamma}}
      {\Pi'\gamma}}
  \end{displaymath}
  and $\Xi_2$ be
  \begin{displaymath}
    \infer[\mc]{\Seq{\Delta_1\delta_1,\ldots,\Delta_n\delta_n,
        D''\gamma,\Gamma'\gamma}{C\gamma}}
    {\deduce{\Seq{\Delta_1\delta_1}{B_1\delta_1}}
      {\Pi_1\delta_1}
      & \cdots
      & \deduce{\Seq{\Delta_n\delta_n}{B_n\delta_n}}
      {\Pi_n\delta_n}
      & \deduce{\Seq{B_1\gamma,\ldots,B_n\gamma,D''\gamma,\Gamma'\gamma}{C\gamma}}
      {\Pi''\gamma}}
    \enspace 
  \end{displaymath}
  Both $\Xi_1$ and $\Xi_2$ are reducible by induction hypothesis.
  Therefore the reduct of $\Xi$
  \begin{displaymath}
    \infer[\oimpL]{\Seq{\Delta_1\delta_1,\ldots,\Delta_n\delta_n,(D' \oimp D'')\gamma,
        \Gamma'\gamma}{C\gamma}}
    {\deduce{\Seq{\Delta_1\delta_1,\ldots,\Delta_n\delta_n,\Gamma'\gamma}{D'\gamma}}
      {\Xi_1}
      & \quad &
      \deduce{\Seq{\Delta_1\delta_1,\ldots,\Delta_n\delta_n,
          D''\gamma,\Gamma'\gamma}{C\gamma}}
      {\Xi_2}}
    \enspace 
  \end{displaymath}
  is reducible (reducibility of $\Xi_1$ implies its normalizability by
  Lemma~\ref{lm:subst-norm}).

\item[$-/\coindR$:] Suppose $\Pi$ is
$$
\infer[\coindR] {\Seq{B_1,\dots,B_n,\Gamma}{p\,\vec{t}}} {
  \deduce{\Seq{B_1,\dots,B_n,\Gamma}{S\,\vec{t}}}{\Pi'} &
  \deduce{\Seq{S\,\vec{x}}{D\,S\,\vec{x}}}{\Pi_S} } \enspace ,
$$
where $p\,\vec{x} \defnu D\,p\,\vec{x}$.  Suppose $p\,\vec{u} =
(p\,\vec{t})\delta_1 = (p\,\vec{t})\gamma$.  Let $\Xi_1$ be the
derivation
\begin{displaymath}
  \infer[\mc]{\Seq{\Delta_1\delta_1,\ldots,\Delta_n\delta_n,\Gamma\gamma}
    {S\,\vec{u}}}
  {\deduce{\Seq{\Delta_1\delta_1}{B_1\delta_1}}
    {\Pi_1\delta_1}
    & \cdots
    & \deduce{\Seq{\Delta_n\delta_n}{B_n\delta_n}}
    {\Pi_n\delta_n}
    & \deduce{\Seq{B_1\gamma,\ldots,B_n\gamma,\Gamma\gamma}
      {S\,\vec{u}}}{\Pi'\gamma}}
  \enspace .
\end{displaymath}
The derivations $\Pi'\gamma$, $\Pi_S$, $\Xi_1$ and the derivation
$$
\infer[\mc] {\Seq{\Delta'}{D\,S\,\vec{w}}} {
  \deduce{\Seq{\Delta'}{S\,\vec{w}}}{\Psi} &
  \deduce{\Seq{S\,\vec{w}}{D\,S\,\vec{w}}}{\Pi_S[\vec w/ \vec x]} }
\enspace,
$$
where $\Psi$ is any reducible derivation, are all reducible by
induction hypothesis on the length of $\Pi$.  Again, we use the same
arguments as in the case where $n=0$ to construct a
$(\Rscr,\Pi_S)$-saturated set $\Sscr$ such that $\Xi_1 \in \Sscr$.
Therefore by Definition~\ref{def:red}, the reduct of $\Xi$:
$$
\infer[\coindR]
{\Seq{\Delta_1\delta_1,\dots,\Delta_n\delta_n,\Gamma\gamma}
  {p\,\vec{u}}} {
  \deduce{\Seq{\Delta_1\delta_1,\dots,\Delta_n\delta_n,\Gamma\gamma}
    {S\,\vec{u}}}{\Xi_1} &
  \deduce{\Seq{S\,\vec{x}}{D\,S\,\vec{x}}}{\Pi_S} }
$$
is reducible.

\item[$\mc/\circL$:] Suppose $\Pi_1$ ends with a $\mc$. Then any
  reduct of $\Pi_1\delta_1$ corresponds to a predecessor of $\Pi_1$ by
  Lemma~\ref{lm:reduct subst}.  Therefore the reduct of $\Xi$ is
  reducible by induction on the reduction of $\Pi_1$.

\item[$-/\init$:] $\Xi$ reduces to $\Pi_1\delta_1$.  Since $\Pi_1$ is
  reducible, by Lemma~\ref{lm:red-subst}, $\Pi_1\delta_1$ is reducible
  and hence $\Xi$ is reducible.
\end{description}
\qed
\end{proof}

\begin{corollary}
  Every derivation is reducible.
\end{corollary}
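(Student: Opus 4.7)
The plan is to invoke Lemma~\ref{lm:comp} in its degenerate $n = 0$ form. Given an arbitrary derivation $\Pi$ of $\Seq{\Gamma}{C}$, I would apply the lemma with the empty list of reducible side premises $\Pi_1, \ldots, \Pi_n$ and with $\gamma$ taken to be the identity substitution. The universally quantified hypotheses on the cut formulas, the substitutions $\delta_i$, and the condition $B_i\delta_i = B_i\gamma$ are vacuously satisfied, since there are no $B_i$. The conclusion of the lemma then yields the reducibility of the degenerate multicut $\Xi$ built over $\Pi$.

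It remains only to extract reducibility of $\Pi$ itself from reducibility of $\Xi$. By the first clause of Definition~\ref{def:reduct}, an $\mc$ with $n = 0$ reduces in a single step to its sole premise $\Pi$. Since a derivation ending in $\mc$ is in the reducibility set precisely when every reduct is, this step forces $\Pi$ to be reducible. (Equivalently, the $n = 0$ case at the start of the proof of Lemma~\ref{lm:comp} already reduces the claim to showing reducibility of $\Pi\gamma$, and thence of $\Pi$ itself via Lemma~\ref{lm:red-subst}.) So there is effectively no residual work to do.

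There is no genuine obstacle in this step, because Lemma~\ref{lm:comp} has already absorbed all of the difficulty: its proof proceeds by a triple induction on $\indm{\Pi}$, $\measure{\Pi}$, $n$ and the reductions of the $\Pi_i$, handling every essential, commutative, structural, and axiom cut-reduction case together with the delicate construction of saturated sets for $\coindR$ and the use of Lemmas~\ref{lm:ind unfold} and~\ref{lm:coind unfold} for (co)inductive unfolding. The corollary is simply the observation that the $n=0$ specialisation of this general statement is a blanket reducibility result for all derivations. Composed with Lemma~\ref{lm:red-norm} (reducibility implies normalisability) and Lemma~\ref{lm:norm-cut-free} (normalisable derivations have cut-free counterparts), the corollary immediately yields cut-elimination, and hence consistency, for $\Linc$.
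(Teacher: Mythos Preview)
Your proposal is correct and matches the paper's approach exactly: the paper's proof is the single line ``The proof follows from Lemma~\ref{lm:comp}, by setting $n=0$,'' and you have simply unpacked this instantiation in more detail. Your additional remarks about extracting reducibility of $\Pi$ from that of the degenerate multicut $\Xi$, and about the downstream consequences for cut-elimination and consistency, are accurate but belong to the subsequent corollaries rather than to this one.
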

\begin{proof}
  The proof follows from Lemma~\ref{lm:comp}, by setting $n=0$. \qed
\end{proof}
Since reducibility implies cut-elimination, it follows that every
proof can be transformed into a cut-free proof.

\begin{corollary}
  \label{cor:cut-elimination}
  Given a fixed stratified definition, a sequent has a proof in
  $\Linc$ if and only if it has a cut-free proof.
\end{corollary}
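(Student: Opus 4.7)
The plan is to derive Corollary~\ref{cor:cut-elimination} as an immediate consequence of the chain of results already established in this section. The only real content is the left-to-right direction; the right-to-left direction is trivial since a cut-free proof is in particular a proof.

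For the forward direction, suppose $\Seq{\Gamma}{C}$ has a proof $\Pi$ in $\Linc$. First I would invoke the preceding corollary (every derivation is reducible) to conclude that $\Pi \in \RED_i$ for some $i$. Next, by Lemma~\ref{lm:red-norm}, reducibility implies normalizability, so $\Pi \in \NM$. Finally, by Lemma~\ref{lm:norm-cut-free}, the existence of a normalizable derivation of $\Seq{\Gamma}{C}$ yields a cut-free derivation of the same sequent. Chaining these three facts gives the desired cut-free proof.

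For the backward direction, any cut-free derivation is by definition a derivation in $\Linc$, so the equivalence follows. I do not expect any obstacle here: all the hard technical work has already been carried out in Lemma~\ref{lm:comp} (the ``compactness'' lemma establishing reducibility of arbitrary derivations via the instantiation trick) and in the reducibility-of-unfolded-derivations lemmas (Lemmas~\ref{lm:ind unfold} and~\ref{lm:coind unfold}), which in turn rest on the parametric reducibility machinery needed to accommodate $\coindR$. The corollary itself is just a packaging step that reads off cut elimination from reducibility via normalizability.
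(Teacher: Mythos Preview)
Your proposal is correct and matches the paper's own reasoning exactly: the paper states the corollary follows immediately from the chain ``every derivation is reducible'' (the preceding corollary, obtained from Lemma~\ref{lm:comp} with $n=0$), then Lemma~\ref{lm:red-norm} (reducible implies normalizable), then Lemma~\ref{lm:norm-cut-free} (normalizable implies a cut-free proof exists). The backward direction is indeed trivial.
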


The consistency of $\Linc$ is an immediate consequence of
cut-elimination.  By consistency we mean the following: given a fixed
stratified definition and an arbitrary formula $C$, it is not the case
that both $C$ and $C\oimp \bot$ are provable.

\begin{corollary}
  \label{cor:consistency}
  The logic $\Linc$ is consistent.
\end{corollary}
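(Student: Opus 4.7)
The plan is to derive consistency as a direct corollary of cut-elimination. I argue by contradiction. Suppose there is a formula $C$ such that both $\Seq{}{C}$ and $\Seq{}{C \oimp \bot}$ have $\Linc$-proofs $\Pi_1$ and $\Pi_2$ respectively. Combining them via $\oimpL$, an $\init$-step on $\bot$, and a single $\mc$ yields a proof of $\Seq{}{\bot}$, namely
$$
\infer[\mc]{\Seq{}{\bot}}{
  \deduce{\Seq{}{C \oimp \bot}}{\Pi_2}
  &
  \infer[\oimpL]{\Seq{C \oimp \bot}{\bot}}{
    \deduce{\Seq{}{C}}{\Pi_1}
    &
    \infer[\init]{\Seq{\bot}{\bot}}{}
  }
}
$$
By Corollary~\ref{cor:cut-elimination}, there must then be a cut-free proof $\Xi$ of $\Seq{}{\bot}$.

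The remaining step is to show that no such $\Xi$ can exist, by inspecting its last rule. Every inference rule of $\Linc$ in Figure~\ref{fig:linc} falls into one of three categories: rules that demand at least one formula in the antecedent (all left-introduction rules, $\botL$, $\cL$, $\wL$, and $\init$), right-introduction rules that each introduce a specific non-$\bot$ connective in the succedent ($\topR$, $\landR$, $\lorR$, $\oimpR$, $\forallR$, $\existsR$, $\eqR$, $\indR$, $\coindR$), and the $\mc$ rule, which is excluded by cut-freeness. Since the end-sequent has an empty antecedent and succedent $\bot$, and $\bot$ carries no right-introduction rule, no rule can conclude it, contradicting the existence of $\Xi$.

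There is no substantive obstacle at this stage; the hard work has already been discharged in the cut-elimination proof. The only item worth verifying is the purely syntactic observation that $\bot$ possesses only a left rule in the calculus of Figure~\ref{fig:linc}, which is immediate by inspection.
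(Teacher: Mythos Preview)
Your proof is correct and follows essentially the same approach as the paper: build a derivation of $\Seq{}{\bot}$, invoke cut-elimination, and observe that no cut-free rule can conclude $\Seq{}{\bot}$. The only minor difference is that the paper first cut-eliminates $\Pi_2$ and inspects it to extract a subderivation $\Pi_2'$ of $\Seq{C}{\bot}$ before cutting with $\Pi_1$, whereas you assemble the proof of $\Seq{}{\bot}$ directly from $\Pi_1$ and $\Pi_2$ using $\oimpL$ and $\mc$; your route is slightly more direct but otherwise equivalent.
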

\begin{proof}
  Suppose otherwise, that is, there is a formula $C$ such that there
  is a proof $\Pi_1$ of $C$ and another proof $\Pi_2$ for $C\oimp
  \bot$.  Since cut elimination holds, we can assume, without loss of
  generality, that $\Pi_1$ and $\Pi_2$ are cut free. By inspection of
  the inference rules of $\Linc$, we see that $\Pi_2$ must end with
  $\oimpR$, that is, $\Pi_2$ is
$$
\infer[\oimpR] {\Seq{}{C\oimp \bot}} {\deduce{\Seq{C}{\bot}}{\Pi_2'}
}
$$
Cutting $\Pi_1$ with $\Pi_2'$ we get a derivation of
$\Seq{\cdot}{\bot}$, and applying the cut-elimination procedure we get
a cut-free derivation of $\Seq{\cdot}{\bot}$. But there cannot be such
a derivation since there is no right-introduction rule for $\bot$,
contradiction.  \qed
\end{proof}

\section{Related Work}
\label{sec:lrel}
 
Of course, there is a long association between mathematical logic and
inductive definitions~\cite{Acz77} and in particular with
proof-theory, starting with the  Takeuti's conjecture, the earliest relevant
entry for our purposes being Martin-L\"of's original formulation of
the theory of \emph{iterated inductive
  definitions}~\cite{martin-lof71sls}.  From the impredicative
encoding of inductive types~\cite{Bohm85} and the introduction of
(co)recursion~\cite{Geuvers92,MENDittcslc} in system F, (co)inductive
types became common and made it into type-theoretic proof assistants
such as Coq~\cite{PaulinMohring93}, first via a primitive recursive
operator, but eventually in the let-rec style of functional
programming languages, as in Gimenez's \emph{Calculus of Infinite
  Constructions}~\cite{Gim96phd}; here termination
(resp.~productivity) is ensured by a syntactic check known as
\emph{guarded by destructors} \cite{Gim95types}. Note that 
 Coq forbids altogether the introduction
of blocks of mutually dependent types containing both inductive and
co-inductive ones, even though they could be stratified.
Moreover, while a syntactic check has obvious advantages, it tends to
be too restrictive, as observed and improved upon in \cite{BartheFGPU04} 
by using type based termination. The same can be said about
\emph{Agda} \cite{norell:thesis}, where size types termination will
eventually supersede guardedness \cite{Mugda}.

Baelde and Miller have recently introduced an extension of linear
logic with least and greatest fixed points~\cite{baelde07lpar}. 
However, cut elimination is proved indirectly via a second-order 
encoding of the least and the greatest fixed point operators 
into higher-order linear logic and via an appeal to completeness of
focused proofs for higher-order linear logic.

Circular proofs are also connected with the emerging proof-theory of
of {fixed point logics} and {process
  calculi}~\cite{Santocanale02,Sprenger03}, as well as in traditional
sequent calculi such as in \cite{BrotherstonS07}.  The issue is the
equivalence between systems with local vs\ global induction, that is,
between fixed point rules vs.\ well-founded and guarded induction
(\ie circular proofs). In the sequent calculus it is unknown whether
every inductive proof can be obtained via global induction.




In higher order logic (co)inductive definitions are obtained via the
usual Tarski fixed point constructions, as realized for example in
Isabelle/HOL~\cite{Paulson97}.  As we mentioned before, those
approaches are at odd with HOAS even at the level of the syntax. This
issue has originated a research field in its own that we can only try
to mention the main contenders: in the {Twelf} system~\cite{Pfenning99cade}
 the LF type theory is used to encode deductive systems as judgments
and to specify meta-theorems as relations (type families) among them;
a logic programming-like interpretation provides an operational
semantics to those relations, so that an external check for totality
(incorporating termination, well-modedness and
coverage~\cite{SchurmannP03,Pientka05}) verifies that the given
relation is indeed a realizer for that theorem. Coinduction is still
unaccounted for and may require a switch to a different operational
semantics for LF.
There exists a second approach to reasoning in LF that is built on the
idea of devising an explicit (meta-)meta-logic ($\Momega$) for reasoning
(inductively) about the framework, in a fully automated
way~\cite{S00}. It can be seen as a constructive first-order
inductive type theory, whose quantifiers range over possibly open LF
objects over a signature.  In this calculus it is possible to express
and inductively prove meta-logical properties of an object level system.  
$\Momega$ can be also seen as a dependently-typed functional
programming language, and as such it has been refined first into the
\emph{Elphin} programming language~\cite{SPS:TLCA2005} and more
recently in \emph{Delphin} \cite{PosSch08}.  In a similar vein the
context modal logic of Pientka, Pfenning and
Naneski~\cite{NanevskiTOCL} provides a basis for a different
foundation for programming with HOAS and dependent types based on
hereditary substitutions, see the programming language \emph{Beluga}
(\cite{Pientka08,PientkaPPDP08}). Because all of these systems are
programming languages, we refrain from a deeper discussion. We only
note that
systems like Delphin or Beluga separate data from computations. This
means they are always based on eager evaluation, whereas co-recursive
functions should be interpreted lazily.  Using standard techniques
such as \emph{thunks} to simulate lazy evaluation in such a context
seems problematic (Pientka, personal
communication).

\emph{Weak higher-order abstract syntax}~\cite{DFHtlca95} is an
approach that strives to co-exist with an inductive setting, where the
positivity condition for datatypes and hypothetical judgments must be
obeyed.  The problem of negative occurrences in datatypes is handled
by replacing them with a new type.  The approach is extended to
hypothetical judgments by introducing distinct predicates for the
negative occurrences.  Some axioms are needed to reason about
hypothetical judgments, to mimic what is inferred by the cut rule in
our architecture.  Miculan \etal's framework~\cite{HonsellMS01}
embraces this \emph{axiomatic} approach extending Coq with the ``theory of
contexts'' (ToC).  The theory includes axioms for the the reification
of key properties of names akin to \emph{freshness}. Furthermore,
higher-order induction and recursion schemata on expressions are also
assumed.  \emph{Hybrid}~\cite{Ambler02} is a $\lambda$-calculus on top
of Isabelle/HOL which provides the user  with a \emph{Full} HOAS syntax,
compatible with a classical (co)-inductive setting. $\Linc$ improves
on the latter on several counts. First it disposes of Hybrid notion of
\emph{abstraction}, which is used to carve out the ``parametric''
function space from the full HOL space.  Moreover it is not restricted
to second-order abstract syntax, as the current Hybrid version is (and
as ToC cannot escape from being). Finally, at higher types, reasoning
via $\defL$ is more powerful than inversion, which does not exploit
higher-order unification.

ToC can be seen as a stepping stone towards Gabbay and Pitts
\emph{nominal logic}, which aims to be a foundation of programming and
reasoning with \emph{names}. It can be presented as a first-order
theory~\cite{pitts03ic}, which includes primitives for variable
renaming and variable freshness, and a (derived) new ``freshness''
quantifier.  Using this theory, it is possible to prove properties by
structural induction and also to define functions by recursion over
syntax~\cite{Pitts06}.  Urban \etal's have engineered a \emph{nominal
  datatype package} inside Isabelle/HOL~\cite{Nominal}
analogous to the standard datatype package but defining equivalence
classes of term constructors. In more recent versions, principles of
primitive recursion and strong induction have been
added~\cite{UrbanB06}. Coinduction on nominal datatypes is not
available, but to be fair it is also absent from Isabelle/HOL due to
some technical limitations in the automation of the inductive package



\section{Conclusion and Future Work}
\label{sec:conc}

We have presented a proof theoretical treatment of both induction and
co-induction in a sequent calculus compatible with HOAS encodings. The
proof principle underlying the explicit proof rules is basically fixed
point (co)induction.  We have shown some examples where informal
(co)inductive proofs using invariants and simulations are reproduced
formally in $\Linc$.

Consistency of the logic is an easy consequence of cut-elimination.
Our proof system is, as far as we know, the first which incorporates a
co-induction proof rule with a direct cut elimination proof. This
schema can be used as a springboard towards cut elimination procedures
for more expressive (conservative) extensions of $\Linc$, for example
in the direction of $\FOLNb$~\cite{miller05tocl}, or more recently,
the logic $LG^\omega$~\cite{Tiu07} by Tiu and the logic $\Gscr$ by
Gacek \etal \cite{gacek08lics}.


  

As far as future work, we may investigate loosening the stratification
condition for example in the sense of \emph{local} stratification,
possibly allowing to encode proofs such as type preservation in
operational semantics directly in $\Linc$ rather than with the 2-level
approach~\cite{mcdowell02tocl,Momigliano03fos}. More general notions
of stratifications are already allowed in practice, see the proof by
logical relations in \cite{AbellaSOS}, but not formally justified.

Another interesting problem is the connection with \emph{circular
  proofs}, which is particularly attractive from the viewpoint of
proof search, both inductively and co-inductively. This could be
realized by directly proving a cut-elimination result for a logic
where circular proofs, under termination and guardedness conditions
completely replace (co)inductive rules. Indeed, the question whether
``global'' proofs are equivalent to ``local'' proofs
\cite{BrotherstonS07} is still unsettled.

\medskip

\textbf{Acknowledgements} The $\Linc$ logic was developed in
collaboration with Dale Miller. Alberto Momigliano has been supported
by EPSRC grant GR/M98555 and partly by the MRG project
(IST-2001-33149), funded by the EC under the FET proactive initiative
on Global Computing.  


\begin{thebibliography}{10}

\bibitem{Ong}
S.~Abramsky and C.-H.~L. Ong.
\newblock Full abstraction in the lazy lambda calculus.
\newblock {\em Inf. Comput.}, 105(2):159--267, 1993.

\bibitem{Acz77}
P.~Aczel.
\newblock An introduction to inductive definitions.
\newblock In J.~Barwise, editor, {\em Handbook of Mathematical Logic},
  volume~90 of {\em Studies in Logic and the Foundations of Mathematics},
  chapter C.7, pages 739--782. North-Holland, Amsterdam, 1977.

\bibitem{Ambler02}
S.~Ambler, R.~Crole, and A.~Momigliano.
\newblock Combining higher order abstract syntax with tactical theorem proving
  and (co)induction.
\newblock In V.~A. Carre{\~n}o, editor, {\em Proceedings of the 15th
  International Conference on Theorem Proving in Higher Order Logics, Hampton,
  VA, 1-3 August 2002}, volume 2342 of {\em LNCS}. Springer Verlag, 2002.

\bibitem{BaaderS01}
F.~Baader and W.~Snyder.
\newblock Unification theory.
\newblock In J.~A. Robinson and A.~Voronkov, editors, {\em Handbook of
  Automated Reasoning}, pages 445--532. Elsevier and MIT Press, 2001.

\bibitem{baelde07lpar}
D.~Baelde and D.~Miller.
\newblock Least and greatest fixed points in linear logic.
\newblock In {\em LPAR}, Lecture Notes in Computer Science, pages 92--106.
  Springer, 2007.

\bibitem{BartheFGPU04}
G.~Barthe, M.~J. Frade, E.~Gim{\'e}nez, L.~Pinto, and T.~Uustalu.
\newblock Type-based termination of recursive definitions.
\newblock {\em Mathematical Structures in Computer Science}, 14(1):97--141,
  2004.

\bibitem{Bohm85}
C.~Bohm and A.~Berarducci.
\newblock Automatic synthesis of typed lambda -programs on term algebras.
\newblock {\em Theoretical Computer Science}, 39(2-3):135--153, Aug. 1985.

\bibitem{BrotherstonS07}
J.~Brotherston and A.~Simpson.
\newblock Complete sequent calculi for induction and infinite descent.
\newblock In {\em LICS}, pages 51--62. IEEE Computer Society, 2007.

\bibitem{clark78}
K.~L. Clark.
\newblock Negation as failure.
\newblock In J.~Gallaire and J.~Minker, editors, {\em Logic and Data Bases},
  pages 293--322. Plenum Press, New York, 1978.

\bibitem{DeBruijn91lf}
N.~de~Bruijn.
\newblock A plea for weaker frameworks.
\newblock In G.~Huet and G.~Plotkin, editors, {\em Logical Frameworks}, pages
  40--67. Cambridge University Press, 1991.

\bibitem{DFHtlca95}
J.~Despeyroux, A.~Felty, and A.~Hirschowitz.
\newblock Higher-order abstract syntax in {Coq}.
\newblock In {\em Second International Conference on Typed Lambda Calculi and
  Applications}, pages 124--138. Springer, {\em Lecture Notes in Computer
  Science}, Apr. 1995.

\bibitem{despeyroux94lpar}
J.~Despeyroux and A.~Hirschowitz.
\newblock Higher-order abstract syntax with induction in {Coq}.
\newblock In {\em Fifth International Conference on Logic Programming and
  Automated Reasoning}, pages 159--173, June 1994.

\bibitem{eriksson91elp}
L.-H. Eriksson.
\newblock A finitary version of the calculus of partial inductive definitions.
\newblock In L.-H. Eriksson, L.~Halln{\"{a}}s, and P.~Schroeder-Heister,
  editors, {\em Proceedings of the Second International Workshop on Extensions
  to Logic Programming}, volume 596 of {\em Lecture Notes in Artificial
  Intelligence}, pages 89--134. Springer-Verlag, 1991.

\bibitem{gacek08lics}
A.~Gacek, D.~Miller, and G.~Nadathur.
\newblock Combining generic judgments with recursive definitions.
\newblock In {\em LICS}, pages 33--44. IEEE Computer Society, 2008.

\bibitem{AbellaSOS}
A.~Gacek, D.~Miller, and G.~Nadathur.
\newblock Reasoning in {Abella} about structural operational semantics
  specifications.
\newblock In A.~Abel and C.~Urban, editors, {\em Informal proceedings of
  LFMTP'08}. To appearin ENTCS, 2008.

\bibitem{Geuvers92}
H.~Geuvers.
\newblock Inductive and coinductive types with iteration and recursion.
\newblock In B.~Nordstr{\"o}m, K.~Pettersson, and G.~Plotkin, editors, {\em
  Informal Proceedings Workshop on Types for Proofs and Programs, B{\aa}stad,
  Sweden, 8--12 June 1992}, pages 193--217. Dept.\ of Computing Science,
  Chalmers Univ.\ of Technology and G{\"o}teborg Univ., 1992.

\bibitem{Gim95types}
E.~Gim{\'e}nez.
\newblock Codifying guarded definitions with recursion schemes.
\newblock In P.~Dybjer and B.~Nordstr{\"o}m, editors, {\em Selected Papers 2nd
  Int.\ Workshop on Types for Proofs and Programs, TYPES'94, B{\aa}stad,
  Sweden, 6--10 June 1994}, volume 996 of {\em Lecture Notes in Computer
  Science}, pages 39--59. Springer-Verlag, Berlin, 1994.

\bibitem{Gim96phd}
E.~Gim{\'e}nez.
\newblock {\em Un Calcul de Constructions Infinies et son Application a la
  Verification des Systemes Communicants}.
\newblock {PhD} thesis {PhD} 96-11, Laboratoire de l'Informatique du
  Parall{\'e}lisme, Ecole Normale Sup{\'e}rieure de Lyon, Dec. 1996.

\bibitem{girard89book}
J.-Y. Girard, P.~Taylor, and Y.~Lafont.
\newblock {\em Proofs and Types}.
\newblock Cambridge University Press, 1989.

\bibitem{PID}
L.~Halln\"{a}s.
\newblock Partial inductive definitions.
\newblock {\em Theor. Comput. Sci.}, 87(1):115--142, 1991.

\bibitem{harper93jacm}
R.~Harper, F.~Honsell, and G.~Plotkin.
\newblock A framework for defining logics.
\newblock {\em Journal of the ACM}, 40(1):143--184, 1993.

\bibitem{HonsellMS01}
F.~Honsell, M.~Miculan, and I.~Scagnetto.
\newblock An axiomatic approach to metareasoning on nominal algebras in {HOAS}.
\newblock In F.~Orejas, P.~G. Spirakis, and J.~van Leeuwen, editors, {\em
  ICALP}, volume 2076 of {\em Lecture Notes in Computer Science}, pages
  963--978. Springer, 2001.

\bibitem{Jacobs97}
B.~Jacobs and J.~Rutten.
\newblock A tutorial on (co)algebras and (co)induction.
\newblock {\em Bulletin of the European Association for Theoretical Computer
  Science}, 62:222--259, June 1997.
\newblock Surveys and Tutorials.

\bibitem{martin-lof71sls}
P.~Martin-L{\"{o}}f.
\newblock Hauptsatz for the intuitionistic theory of iterated inductive
  definitions.
\newblock In J.~Fenstad, editor, {\em Proceedings of the Second Scandinavian
  Logic Symposium}, volume~63 of {\em Studies in Logic and the Foundations of
  Mathematics}, pages 179--216. North-Holland, 1971.

\bibitem{mcdowell00tcs}
R.~McDowell and D.~Miller.
\newblock Cut-elimination for a logic with definitions and induction.
\newblock {\em Theoretical Computer Science}, 232:91--119, 2000.

\bibitem{mcdowell02tocl}
R.~McDowell and D.~Miller.
\newblock Reasoning with higher-order abstract syntax in a logical framework.
\newblock {\em ACM Transactions on Computational Logic}, 3(1):80--136, January
  2002.

\bibitem{mcdowell03tcs}
R.~McDowell, D.~Miller, and C.~Palamidessi.
\newblock Encoding transition systems in sequent calculus.
\newblock {\em TCS}, 294(3):411--437, 2003.

\bibitem{Mugda}
K.~Mehltretter.
\newblock Termination checking for a dependently typed language.
\newblock Master's thesis, LMU, Dec. 2007.
\newblock Diplomarbeit.

\bibitem{MENDittcslc}
N.~P. Mendler.
\newblock Inductive types and type constraints in the second order lambda
  calculus.
\newblock {\em Annals of Pure and Applied Logic}, 51(1):159--172, 1991.

\bibitem{Miller91elp}
D.~Miller.
\newblock A logic programming language with lambda-abstraction, function
  variables, and simple unification.
\newblock In P.~Schroeder-Heister, editor, {\em {Extensions of Logic
  Programming: International Workshop, T\"ubingen}}, volume 475 of {\em LNAI},
  pages 253--281. Springer-Verlag, 1991.

\bibitem{miller05tocl}
D.~Miller and A.~Tiu.
\newblock A proof theory for generic judgments.
\newblock {\em ACM Trans. Comput. Logic}, 6(4):749--783, 2005.

\bibitem{Momigliano03fos}
A.~Momigliano and S.~Ambler.
\newblock Multi-level meta-reasoning with higher order abstract syntax.
\newblock In A.~Gordon, editor, {\em FOSSACS'03}, volume 2620 of {\em LNCS},
  pages 375--392. Springer Verlag, 2003.

\bibitem{Momigliano03TYPES}
A.~Momigliano and A.~Tiu.
\newblock Induction and co-induction in sequent calculus.
\newblock In S.~Berardi, M.~Coppo, and F.~Damiani, editors, {\em TYPES}, volume
  3085 of {\em Lecture Notes in Computer Science}, pages 293--308. Springer,
  2003.

\bibitem{NanevskiTOCL}
A.~Nanevski, B.~Pientka, and F.~Pfenning.
\newblock Contextual modal type theory.
\newblock {\em {ACM Transactions on Computational Logic}}, 200?
\newblock To appear.

\bibitem{Nominal}
{Nominal Methods Group}.
\newblock Nominal {Isabelle}.
\newblock isabelle.in.tum.de/nominal/, 2008, Accessed 2 July 2008.

\bibitem{norell:thesis}
U.~Norell.
\newblock {\em Towards a practical programming language based on dependent type
  theory}.
\newblock PhD thesis, Department of Computer Science and Engineering, Chalmers
  University of Technology, SE-412 96 G\"{o}teborg, Sweden, September 2007.

\bibitem{PaulinMohring93}
C.~Paulin-Mohring.
\newblock Inductive definitions in the system {Coq}: Rules and properties.
\newblock In M.~Bezem and J.~Groote, editors, {\em Proceedings of the
  International Conference on Typed Lambda Calculi and Applications}, pages
  328--345, Utrecht, The Netherlands, Mar. 1993. Springer-Verlag LNCS 664.

\bibitem{Paulson97}
L.~C. Paulson.
\newblock Mechanizing coinduction and corecursion in higher-order logic.
\newblock {\em Journal of Logic and Computation}, 7(2):175--204, Mar. 1997.

\bibitem{pfenning01handbook}
F.~Pfenning.
\newblock Logical frameworks.
\newblock In A.~Robinson and A.~Voronkov, editors, {\em Handbook of Automated
  Reasoning}, chapter~17, pages 1063--1147. Elsevier Science Publisher and MIT
  Press, 2001.

\bibitem{PfenningE88}
F.~Pfenning and C.~Elliott.
\newblock Higher-order abstract syntax.
\newblock In {\em PLDI}, pages 199--208, 1988.

\bibitem{Pfenning99cade}
F.~Pfenning and C.~Sch{\"u}rmann.
\newblock System description: Twelf --- a meta-logical framework for deductive
  systems.
\newblock In H.~Ganzinger, editor, {\em Proceedings of the 16th International
  Conference on Automated Deduction (CADE-16)}, pages 202--206, Trento, Italy,
  July 1999. Springer-Verlag LNAI 1632.

\bibitem{Pientka05}
B.~Pientka.
\newblock Verifying termination and reduction properties about higher-order
  logic programs.
\newblock {\em J. Autom. Reasoning}, 34(2):179--207, 2005.

\bibitem{Pientka08}
B.~Pientka.
\newblock A type-theoretic foundation for programming with higher-order
  abstract syntax and first-class substitutions.
\newblock In G.~C. Necula and P.~Wadler, editors, {\em POPL}, pages 371--382.
  ACM, 2008.

\bibitem{PientkaPPDP08}
B.~Pientka and J.~Dunfield.
\newblock Programming with proofs and explicit contexts.
\newblock In {\em PPDP}. ACM Press, 2008.

\bibitem{pitts03ic}
A.~M. Pitts.
\newblock Nominal logic, a first order theory of names and binding.
\newblock {\em Information and Computation}, 186(2):165--193, 2003.

\bibitem{Pitts06}
A.~M. Pitts.
\newblock Alpha-structural recursion and induction.
\newblock {\em J. ACM}, 53(3):459--506, 2006.

\bibitem{PosSch08}
A.~Poswolsky and C.~Sch{\"u}rmann.
\newblock Practical programming with higher-order encodings and dependent
  types.
\newblock In S.~Drossopoulou, editor, {\em ESOP}, volume 4960 of {\em Lecture
  Notes in Computer Science}, pages 93--107. Springer, 2008.

\bibitem{Santocanale02}
L.~Santocanale.
\newblock A calculus of circular proofs and its categorical semantics.
\newblock In M.~Nielsen and U.~Engberg, editors, {\em FoSSaCS}, volume 2303 of
  {\em Lecture Notes in Computer Science}, pages 357--371. Springer, 2002.

\bibitem{schroeder-heister92nlip}
P.~Schroeder-Heister.
\newblock Cut-elimination in logics with definitional reflection.
\newblock In D.~Pearce and H.~Wansing, editors, {\em Nonclassical Logics and
  Information Processing}, volume 619 of {\em LNCS}, pages 146--171. Springer,
  1992.

\bibitem{SchroederHeister93elp}
P.~Schroeder-Heister.
\newblock Definitional reflection and the completion.
\newblock In R.~Dyckhoff, editor, {\em Proceedings of the 4th International
  Workshop on Extensions of Logic Programming}, pages 333--347. Springer-Verlag
  LNAI 798, 1993.

\bibitem{schroeder-heister93lics}
P.~Schroeder-Heister.
\newblock Rules of definitional reflection.
\newblock In M.~Vardi, editor, {\em Eighth {Annual Symposium on Logic in
  Computer Science}}, pages 222--232. IEEE Computer Society Press, IEEE, June
  1993.

\bibitem{S00}
C.~Sch{\"u}rmann.
\newblock {\em Automating the Meta-Theory of Deductive Systems}.
\newblock PhD thesis, Carnegie-Mellon University, 2000.
\newblock CMU-CS-00-146.

\bibitem{SchurmannP03}
C.~Sch{\"u}rmann and F.~Pfenning.
\newblock A coverage checking algorithm for {LF}.
\newblock In D.~A. Basin and B.~Wolff, editors, {\em TPHOLs}, volume 2758 of
  {\em Lecture Notes in Computer Science}, pages 120--135. Springer, 2003.

\bibitem{SPS:TLCA2005}
C.~Sch{\"u}rmann, A.~Poswolsky, and J.~Sarnat.
\newblock The $\bigtriangledown$-calculus. {F}unctional programming with
  higher-order encodings.
\newblock In {\em Seventh International Conference on Typed Lambda Calculi and
  Applications}, pages 339--353. Springer, {\em Lecture Notes in Computer
  Science}, Apr. 2005.

\bibitem{Sprenger03}
C.~Spenger and M.~Dams.
\newblock On the structure of inductive reasoning: Circular and tree-shaped
  proofs in the $\mu$-calculus.
\newblock In A.~Gordon, editor, {\em FOSSACS'03}, volume 2620 of {\em LNCS},
  pages 425--440,. Springer Verlag, 2003.

\bibitem{tiu04phd}
A.~Tiu.
\newblock {\em A Logical Framework for Reasoning about Logical Specifications}.
\newblock PhD thesis, Pennsylvania State University, May 2004.

\bibitem{Tiu07}
A.~Tiu.
\newblock A logic for reasoning about generic judgments.
\newblock {\em Electr. Notes Theor. Comput. Sci.}, 174(5):3--18, 2007.

\bibitem{tiu05entcs}
A.~Tiu and D.~Miller.
\newblock A proof search specification of the pi-calculus.
\newblock {\em Electr. Notes Theor. Comput. Sci.}, 138(1):79--101, 2005.

\bibitem{tiu05concur}
A.~F. Tiu.
\newblock Model checking for pi-calculus using proof search.
\newblock In {\em Proceedings of CONCUR 2005}, volume 3653 of {\em Lecture
  Notes in Computer Science}, pages 36--50. Springer, 2005.

\bibitem{UrbanB06}
C.~Urban and S.~Berghofer.
\newblock A recursion combinator for nominal datatypes implemented in
  {Isabelle/HOL}.
\newblock In U.~Furbach and N.~Shankar, editors, {\em IJCAR}, volume 4130 of
  {\em Lecture Notes in Computer Science}, pages 498--512. Springer, 2006.

\end{thebibliography}

\end{document}